\newcommand{\bm}[1]{{\mathbf{#1}}}
\newcommand{\Es}{{\mathbb{E}}}          
\newcommand{\rank}{{\text{rank}}}
\newcommand{\diag}{{\text{diag}}}
\newcommand{\trace}{{\text{tr}}}
\newcommand{\I}{\bm{I}}
\newcommand{\Zero}{\bm{O}}
\newcommand{\Lcp}{L_{\text{cp}}}
\newcommand{\z}{\bm{\widetilde{y}}}
\newcommand{\xa}{x_{\text{U}}}
\newcommand{\xak}{x_{\text{U},k}}
\newcommand{\ba}{s_{\text{U}}}
\newcommand{\bs}{s_{\text{J}}}
\newcommand{\yb}{\bm y}
\newcommand{\Eb}{\bm E}
\newcommand{\Fb}{\bm F}
\newcommand{\Pb}{\bm P}
\newcommand{\alphalabel}{\boldsymbol{\alpha}_{\overline{\yb}\overline{\yb}^*}}
\newcommand{\alphalabelhat}{\widehat{\boldsymbol{\alpha}}_{\overline{\yb}\overline{\yb}^*}}
\newcommand{\Pblabel}{\bm{P}_{\overline{\yb}\overline{\yb}^*}}
\newcommand{\wb}{\bm w}
\newcommand{\Ab}{\bm A}
\newcommand{\Rb}{\bm R}
\newcommand{\vb}{\bm v}
\newcommand{\Bb}{\bm B}
\newcommand{\bab}{\bm s_{\text{U}}}
\newcommand{\bsb}{\bm s_{\text{J}}}
\newcommand{\Hatildezero}{\overline{\bm H}_{\text{U},0}}
\newcommand{\Hatildeuno}{\overline{\bm H}_{\text{U},1}}
\newcommand{\Hstildezero}{\overline{\bm H}_{\text{J},0}}
\newcommand{\Hstildeuno}{\overline{\bm H}_{\text{J},1}}
\newcommand{\Ha}{\bm H_{\text{U}}}
\newcommand{\Hs}{\bm H_{\text{J}}}
\newcommand{\Htilde}{\bm{\widetilde{H}}}
\newcommand{\Hover}{\bm{{G}}}
\newcommand{\htilde}{\bm{\widetilde{h}}}
\newcommand{\wtilde}{\bm{\widetilde{w}}}
\newcommand{\xs}{x_{\text{J}}}
\newcommand{\xsk}{x_{\text{J},k}}
\newcommand{\Ka}{K_{\text{U}}}
\newcommand{\Ks}{K_{\text{J}}}
\newcommand{\Cset}{\mathbb{C}}
\newcommand{\Rset}{\mathbb{R}}
\newcommand{\Zset}{\mathbb{Z}}
\newcommand{\eqdef}{\triangleq}
\newcommand{\herm}{\text{H}}
\newcommand{\trasp}{\text{T}}
\newcommand{\pot}{\EuScript{P}}
\newcommand{\minitab}[2][l]{\begin{tabular}#1 #2\end{tabular}}
\def\bdm#1\edm{\begin{displaymath}#1\end{displaymath}}
\def\be#1\ee{\begin{equation}#1\end{equation}}
\def\barr#1\earr{\begin{align}#1\end{align}}
\newcommand{\IeeeTIT}{{\em IEEE Trans.\ Inf. Theory\/}}
\newcommand{\IeeeTSP}{{\em IEEE Trans.\ Signal Process.\/}}
\newcommand{\IeeeTWC}{{\em IEEE Trans.\ Wireless Commun.\/}}
\newcommand{\IeeeJSAC}{{\em IEEE J.\ Select.\ Areas Commun.\/}}
\newcommand{\IeeeTVT}{{\em IEEE Trans.\ Veh. Technol.\/}}
\newcommand{\IeeeCOMMMAG}{{\em IEEE Commun.\ Magazine\/}}
\newcommand{\IeeeSENSJ}{{\em IEEE Sens. J.\/}}
\newtheorem{theorem}{Theorem}[section]
\newtheorem{proposition}[theorem]{Proposition}
\newcommand\acceptedtext{%
  \footnotesize This article has been accepted for publication in a future issue of this journal, but has not been fully edited. Content may change prior to final publication. \\
  Citation information: DOI 10.1109/JIOT.2021.3132381, IEEE Internet of Things Journal.}
\newcommand\acceptednotice{%
\begin{tikzpicture}[remember picture,overlay]
\node[anchor=north,yshift=-6pt] at (current page.north) {%
\begin{minipage}{\textwidth}
\center \acceptedtext
\end{minipage}};
\end{tikzpicture}%
}
\begin{document}
\title{Detection and blind channel estimation for UAV-aided wireless sensor networks
in smart \\ cities under mobile jamming attack
}

\author{Donatella~Darsena,~\IEEEmembership{Senior Member,~IEEE},
             Giacinto~Gelli,~\IEEEmembership{Senior Member,~IEEE},
        Ivan~Iudice, \\ and Francesco~Verde,~\IEEEmembership{Senior Member,~IEEE}
\thanks{
Manuscript received May 24, 2021; revised October 26, 2021;
accepted November 29, 2021. Date of publication xx yy, 2021; date of current version November 30, 2021. The associate editor coordinating the review of this manuscript and approving it for publication was
Prof.\  Muhammad Khurram Khan \textit{(Corresponding author: Francesco Verde)}.}
\thanks{
D.~Darsena is with the Department of Engineering,
Parthenope University, Naples I-80143, Italy (e-mail: darsena@uniparthenope.it).
G.~Gelli and F.~Verde are with the Department of Electrical Engineering and
Information Technology, University Federico II, Naples I-80125,
Italy [e-mail: (gelli,f.verde)@unina.it].
I.~Iudice is with Italian Aerospace Research Centre (CIRA),
Capua I-81043, Italy (e-mail: i.iudice@cira.it).}
\thanks{D.~Darsena, G.~Gelli, and F.~Verde are also with
National Inter-University Consortium for Telecommunications (CNIT).}}

\markboth{IEEE INTERNET OF THINGS JOURNAL,~Vol.~xx, No.~yy,~Month~2021}{Anti-jamming communications
for UAV-aided wireless sensor networks in smart cities}

\IEEEpubid{\begin{minipage}{\textwidth}\ \\
\center 
2327--4662~\copyright~2021~IEEE. 
\\ Personal use of this material is permitted. However, permission to use this material for any other purposes \\ must be obtained from the IEEE by sending a request to pubs-permissions@ieee.org.
\end{minipage}}

\maketitle
\acceptednotice

\begin{abstract}
Unmanned aerial vehicles (UAVs) can be integrated into
wireless sensor networks (WSNs) 
for smart city applications in several ways.
Among  them, a UAV can be employed as
a relay in a ``store-carry and forward'' fashion by uploading
data from ground sensors and metering devices and,
then, downloading it to a central unit.
However, both the uploading and downloading phases
can be prone to potential threats and attacks. As a legacy
from traditional wireless networks, the jamming
attack is still one of the major and serious threats to
UAV-aided communications, especially when also 
the jammer is mobile, e.g., it is mounted on a UAV 
or inside a terrestrial vehicle.
In this paper, we investigate
anti-jamming communications
for UAV-aided WSNs operating over
doubly-selective channels 
in the downloading phase.
In such a scenario, the signals transmitted by
the UAV and the malicious mobile jammer
undergo both time dispersion
due to multipath propagation effects
and frequency dispersion 
caused by their mobility.
To suppress high-power jamming signals,
we propose a blind physical-layer technique
that jointly detects the UAV and jammer symbols
through serial disturbance cancellation
based on symbol-level post-sorting of the
detector output.
Amplitudes, phases, time delays, and
Doppler shifts  -- required to implement 
the proposed detection
strategy -- are blindly estimated from data
through the use of algorithms that exploit
the almost-cyclostationarity properties of the
received signal and
the detailed
structure of multicarrier 
modulation format. Simulation results corroborate
the anti-jamming capabilities
of the proposed method, for different
mobility scenarios of the jammer.
\end{abstract}

\begin{IEEEkeywords}
Almost-cyclostationarity,
doubly-selective channels,
interference cancellation,
mobile jamming,
unmanned aerial vehicle (UAV),
smart cities,
wireless sensor networks.
\end{IEEEkeywords}

\section{Introduction}

\IEEEPARstart{W}{ireless} sensor networks (WSNs)
are expected to play a fundamental role
in realizing the
vision of future {\em smart cities} \cite{Ham.2019, Meh.2017,Hase.2021,Zaman.2021},
which represent one of the major application of
Internet-of-Things (IoT).
In a smart city, a huge number
of sensors and metering devces of various nature
are placed at a large variety of
locations, such as  buildings, parking areas,
traffic lights, railway/metro/bus stations,
monitoring hubs along roads, and so on,
whose task is to gather data to
be transmitted towards a powerful
central unit (CU) located
somewhere in the city.
Processing of the collected data
at the CU enables provision of
many innovative services within health,
transportation, sustainability, economy,
agriculture, law  enforcement,  community, and
other areas affecting the overall wellbeing
of the residents and businesses.
Recently, it has been pointed out the crucial
role of smart cities in reducing
the COVID-19  risk \cite{Jah.2020}.

Despite the promising service features,
the problem of efficient and reliable
transmission to a central
unit
of the data independently collected
by sensors and meters
remains a challenging issue in IoT applications \cite{Kul.2017}.
One possible option is
represented by the LoRa/LoRaWAN
technology stack \cite{Oso.2020},
due to its long signal range and
minimal power requirements. However, it
is common belief that, due to their limited
capacity, these technologies
could be unable to support
the high data-rates required
by future smart city applications.
In this respect, the employment
of {\em unmanned aerial vehicles (UAVs)} as aerial
``store-carry and forward'' relay
nodes for distributed ground sensors
and meters is expected to
bring significant benefits to
future WSNs, such as higher data transmissions,
larger service coverage, and reduced energy consumption
\cite{Hayat.2016, Zhang.2020}.
According to this paradigm, UAVs can move
between a sensor/meter and the CU, by gathering
data from the sources ({\em uploading phase}),
and, then, transmitting them to
the destination ({\em downloading phase}).

Unfortunately, while the enabling technologies for UAV-aided
WSNs gradually mature, all kinds of potential threats or attacks will also
rise  and endanger smart city applications \cite{Wu-Mei.2019}.
These adversarial behaviors are
due to malicious users, such as
criminals, terrorists, and business spies,
and are driven by various motivations, such as
committing crimes, jeopardizing public safety,
invading secret databases, and so on.

\IEEEpubidadjcol

\subsection{Jamming attacks and related countermeasures}

One serious threat to UAV-aided WSNs
is the {\em jamming attack}.
Radio jamming can be divided
into \textit{link-layer} and \textit{physical-layer jamming}.
The aim of link-layer jamming is
to decrease the packet
send/delivery ratio \cite{Zou.2016,Pele.2011}.
However, these attacks are based
on the assumption that the jammer
is aware of the link-layer
protocol of the legitimate user,
which could be an excessively pessimistic assumption.
On the other hand, physical-layer
jamming methods are very
effective in UAV-aided WSNs, since
a jammer might emit high-power electromagnetic
signals to make the
legitimate signals unrecognizable
for the UAV in the uploading phase
and/or for the CU in the downloading one.
Moreover, due to the rapid development
of software-defined radio techniques,
a smart jammer is able to modify
the attack pattern according to the
transmission features
of the targeted communication links.

Because of the ease of implementation and
disruptive impact, physical-layer jamming
attacks in UAV-aided WSNs have
received a lot of attention
and the countermeasures
have been extensively studied.
Some works rely on strategies implemented at
the application layer to protect networks from external attacks.
A viable indicator to verify the existence of jamming attacks in WSNs
is represented by the packet loss ratio \cite{Zarpel}.
An intelligent security scheme has been proposed in \cite{Saeed},
which uses random neural networks
to detect malicious activities in the network and 
verify the authenticity and legitimacy of the network traffic.
A lightweight authentication scheme has been proposed \cite{Wazid-1}
in a cloud-based IoT infrastructure to ensure the
security of a collected data from a sensor at a remote location by means of a one-way cryptographic
hash function authentication process.
In \cite{Wazid-2}, a one-way cryptographic hash function with
addition of bitwise exclusive-OR (XOR) operation to verify the legitimacy of participating devices in the network
has been developed.
However, applying such application-layer countermeasures
to UAV-aided WSNs is very difficult,
due to their limited resources and dynamic nature. Moreover,
they are mainly targeted at identifying
the presence of a jamming attack and, thus, are not able 
to avoid and/or combat it.
Furthermore, most of them 
are specific to a given environment, system, or software, 
and exhibit a complex
authentication process, which generates network
overhead.
Therefore, innovative anti-jamming strategies are needed at
the physical layer to counteract jamming threats.

Anti-jamming physical-layer methods can be categorized
as \textit{proactive} or \textit{reactive}
ones.
Proactive strategies
either hide signals so that
the jammer cannot detect
them (like, e.g., in direct sequence spread spectrum)
or avoid the jamming attack by
periodically changing channels with
an interval of fixed length, without checking
for jammers \cite{Pele.2011} (like, e.g., in
frequency-hopping spread spectrum).
On the other hand, reactive strategies
suppress the jamming signal by signal processing
techniques or combat the jamming attack
directly, e.g., by increasing the transmission power.

Proactive approaches are not particularly suited
to UAV-aided WSNs,
due to their scarce
adaptability to dynamic spectrum state
and jamming patterns, as well as for
their low spectral efficiency \cite{Wang.2020}.
As to reactive approaches
for conventional wireless networks,
they can be classified into
two broad families.
Techniques belonging to the first
family aim at optimizing
the transmission power
for the legitimate user and/or at
developing efficient channel-selection strategies
after detection
of channel jamming.
In \cite{Yang.2013},
the power-control problem for the legitimate
user was studied from a Stackelberg game perspective.
By utilizing
the spectrum waterfall representation, an anti-jamming
scheme based on deep reinforcement learning (RL) method
was proposed in \cite{Liu.2018} to facilitate the channel-selection process.
A multi-armed bandit framework was formulated in
\cite{Zhou.2016} to obtain efficient channel-selection strategies.
In \cite{Jia.2018}, a multi-domain anti-jamming scheme that tackles
both power control and channel selection was proposed for
heterogeneous wireless networks.

Reputation-based anti-jamming methods belong to the second family.
In \cite{Xiao.2012},  a reputation updating process
was developed for large-scale wireless networks in order to face with
a wide range of attacks. The problem of establishing  the  defender's reputation
in anomaly detection  against insider attacks was  addressed  in \cite{Zhang.2010}.
However, due to different network characteristics, such methods are not directly
applicable to UAV-aided WSNs.

Specific countermeasures against jamming attacks in UAV-aided
communications have been proposed in many recent papers.
In \cite{Xu.2018}, power control for anti-jamming problem
in a pure UAV communication network was investigated from
a game theoretic perspective.
Multi-parameter programming and RL was used in \cite{Peng.2019}
to protect against
multiple jammers the communication between a UAV swarm 
and a base station.
A UAV-aided cellular
framework was studied in \cite{Lu.2020}, in which a UAV uses
RL to choose the relay policy
for a mobile user whose serving base station is
attacked by a jammer.
In \cite{Xiao.2018}, anti-jamming strategies based on
RL were investigated to improve
UAV-enabled communications against smart jamming attacks.
By taking advantage of the flexible mobility patterns
of UAVs, anti-jamming
trajectories were designed in \cite{Wu.2019,Mah.2019,Duo.2021}
to prevent ground jammers  from degrading the legitimate transmission
in UAV-enabled communication systems.

Many of the above works assume a
non-dispersive channel
model for ground-to-air (G2A)
and air-to-ground (A2G) links.
However, such an ideal assumption
is inaccurate in smart city applications,
where  UAVs are expected to be deployed
in urban areas with high and dense
buildings or obstacles. In these scenarios,
since UAVs are highly mobile in nature,
the G2A/A2G channels are
better modeled as \textit{doubly-selective}
\cite{Tse-book} channels, with multipath effects causing time
dispersion and Doppler shifts giving rise to frequency dispersion of the
information-bearing signals.
Additionally, the jammer can be {\em mobile} in its turn, e.g., jamming
radios can be piggybacked on a UAV or installed
inside terrestrial vehicles. 
Mobility allows the jammer to rapidly move
around the city and complicates localization of the jamming source
compared to the case of a fixed jammer.

Henceforth, mobile jamming attacks
specifically targeted at UAV-aided WSNs remain
a critical issue, and
effective anti-jamming reactive methods operating over
doubly-selective channels are an open research issue.

\begin{table*}
\centering
\caption{List of the main notations used throughout the paper.}
\label{tab:notation}
\begin{tabular}{ll} 
\bf{Notation} & \bf{Meaning}\\
 \hline
 \hline
 $x$ & scalar value \\
$\bm A$ & matrix -- bold, capital letter \\
$\bm a$ & vector -- bold, small letter \\
$x^*$ & conjugate of $x$ \\
$\bm A^\trasp$ & transpose of $\bm A$ \\
$\bm A^\herm$ & Hermitian (conjugate transpose) of $\bm A$ \\
$\bm A^{-1}$ & inverse of $\bm A$ \\
$\mathbb{C}$ & fields of complex numbers \\
$\mathbb{R}$ & fields of real numbers \\
$\mathbb{Z}$ & fields of integer numbers \\
$\mathbb{C}^{n}$  &  vector-space of all $n$-column vectors with complex coordinates \\
$\mathbb{R}^{n}$ &  vector-space of all $n$-column vectors with real coordinates \\
$\mathbb{C}^{n \times m}$ & vector-space of all the $n \times m$ matrices with complex elements \\
$\mathbb{R}^{n \times m}$ & vector-space of all the $n \times m$ matrices with real elements \\
$\delta_n$ & Kronecker delta \\
$j \eqdef \sqrt{-1}$ & denotes the imaginary unit
\\
$\text{max}(x,y)$ & maximum between $x \in \mathbb{R}$ and $y \in \mathbb{R}$\\
$\lfloor x \rfloor$ & rounds $x \in \mathbb{R}$ to the nearest integer less than or equal to $x$\\
mod $1$ & denotes modulo $1$ operation with values in $[-1/2,1/2)$ \\
$\mathcal{O}$ & Landau symbol \\
$\ast$ & linear convolution operator \\
$\mathcal{A}^n= \mathcal{A} \times \cdots \times \mathcal{A}$ &
$n$th Cartesian power of the set $\mathcal{A}$ \\
$\mathbf{0}_{n}$ & $n$-column zero vector \\
$\Zero_{n \times m}$ & $n \times m$ zero matrix \\
$\I_{n}$ & $n \times n$ identity matrix \\
$\bm{W}_n$ & unitary symmetric $n$-point inverse discrete Fourier transform (IDFT) matrix \\
$\bm{W}_n^{-1}=\bm{W}_n^\herm$ & $n$-point
discrete Fourier transform (DFT) matrix \\
$\{\mathbf{a}\}_{\ell}$ & $\ell$th entry of $\bm{a} \in \Cset^n$ \\
$\{\mathbf{A}\}_{\ell,\ell}$ & $\ell$th diagonal entry \\
$\lambda_\text{max}(\mathbf{A})$ & largest eigenvalue of $\mathbf{A} \in \Cset^{n \times n}$ \\
$\mathrm{trace}(\mathbf{A})$ & trace of $\mathbf{A} \in \Cset^{n \times n}$ \\
$\mathrm{rank}(\mathbf{A})$ & rank of $\mathbf{A} \in \Cset^{n \times m}$ \\
$\|\Ab\| \eqdef [\trace(\Ab\,\Ab^\herm)]^{1/2}$ & (induced)
Frobenius norm of $\Ab \in \Cset^{n \times m}$ \cite{Horn} \\
$\mathbf{A}= \diag (a_{0}, a_{1}, \ldots, a_{n-1})$ & $n \times n$ diagonal matrix \\
$\text{vec}(\mathbf{A})$ & concatenation of the columns of $\Ab \in \Cset^{n \times m}$ \\
$\langle \cdot \rangle$ & infinite-time temporal averaging \\
$\Es[\cdot]$ & ensemble averaging \\
$x(t)$ & continuous-time signals \\
$x[n]$ & discrete-time signals \\
\hline
\hline
\end{tabular}
\end{table*}

\subsection{Anti-jamming communications in smart city applications}

In this paper, we focus on anti-jamming
communications by explicitly taking into account
the \textit{dispersive} nature
of the wireless channel in smart city applications.
We assume that the mobile jammer itself is {\em smart},
in the sense that it is able to replicate
the same modulation format of
the legitimate transmission \cite{Pele.2011}.
Since multicarrier modulation is currently
adopted in many wireless standards,
we focus on the \textit{orthogonal frequency-division multiplexing} (OFDM)
transmission format.
In this case,
besides the adverse effects of the jamming signal,
due to the UAV speed and fast fading channel conditions,
the performance of the multicarrier communication system is also deteriorated
by Doppler frequency shifts.

The proposed anti-jamming solution
relies on a known information-theoretic result \cite{Carl.1978},
according to which, when the signal-to-jammer
ratio (SJR) is sufficiently
low, the optimal detection strategy
of the legitimate transmission (in the minimum-bit-error-rate sense)
consists of first estimating the jammer signal and, then,
subtracting it from
the received data.
What is  challenging for the problem
at hand is that, in order to implement such
a jamming-resistant receiver,
\textit{all} the amplitudes, phases,
time delays, and Doppler shifts
of both the legitimate and jamming
channels have to be accurately
estimated at the receiver.

Traditional training-based approaches
for parameter estimation (see, e.g., \cite{Fle.1999})
could not work well in
UAV-aided WSNs under a jamming attack, since communication resources
may be significantly wasted in
doubly-selective channels and,
most importantly, the
jamming attack during the training phase
might induce an unacceptable
estimation error.
Motivated by these facts,
to estimate all the parameters of interest of
the legitimate  and jamming signals,
we propose a {\em blind}
estimation algorithm that relies on the
{\em almost-cyclostationarity (ACS) properties}
\cite{Gardner-book} of the received signal.

The proposed algorithm capitalizes on two
distinct features of UAV-aided WSNs
for smart city applications:
(i) the availability of a large amount of data,
as a consequence of the  widespread
dissemination of sensors and meters;
(ii) the relaxed latency constraints \cite{Li.2015},
due to the fact that the network scope
is not to react to an instantaneous
event but rather to monitor the history
of sensed data, based on which it is possible to provide new services or
to optimize resource consumption.
To the best of our knowledge, this is the first paper
that investigates the application of
ACS properties
to achieve  anti-jamming communications in
UAV-aided WSNs.

\subsection{Contribution and organization}

The contributions of the paper
are summarized as follows:

\begin{enumerate}[(i)]

\item
As a radical alternative to  existing methods,
we propose an anti-jamming detector
leveraging on the
structure of the jamming
signal rather than treating
it as random noise.
This allows one to greatly reduce
the negative impact of mobile jamming.
At a high-level, such an idea is akin to
multiuser/multiantenna
detection \cite{Ser.2004},
but at a low-level it is significantly different,
since the mobile jammer is not a constituent part
of the network and, thus, its transmitter
cannot be designed to ease
separation of the UAV and jamming signals
at the receiver.

\item
We propose blind algorithms based
on ACS properties to
estimate the channel
parameters of both the UAV and
jamming transmissions,
by developing a simple rule unveiling
the association between
the estimated parameters
and the two superimposed signals.
The developed estimators extend and generalize
previously-proposed ACS-based
estimation techniques
\cite{Chen.I, Chen.II,Gelli, Ciblat, Napolitano.2019}
to doubly-selective channels.

\item
The performances of the proposed channel
estimation and detection methods are validated using
customary metrics.
It is demonstrated that the combination of
jamming cancellation for the detection and
ACS-based estimation of channel
parameters confers robustness
against a mobile jamming attack in UAV-based WSNs,
under different mobility scenarios of the jammer.

\end{enumerate}

The paper is organized as follows
(the main notations are reported in Table~\ref{tab:notation}).
A reference framework and a model of the considered
anti-jamming communication system is described
in Section~\ref{sec:system}.
The proposed ACS-based estimators
of the relevant channel parameters are developed
in Section~\ref{sec:estimation}.
Numerical results are reported
in Section~\ref{sec:simul}.
Finally, the main results obtained
in the paper are summarized in Section~\ref{sec:concl}.

\begin{figure}[t]
\centering
\includegraphics[width=\columnwidth]{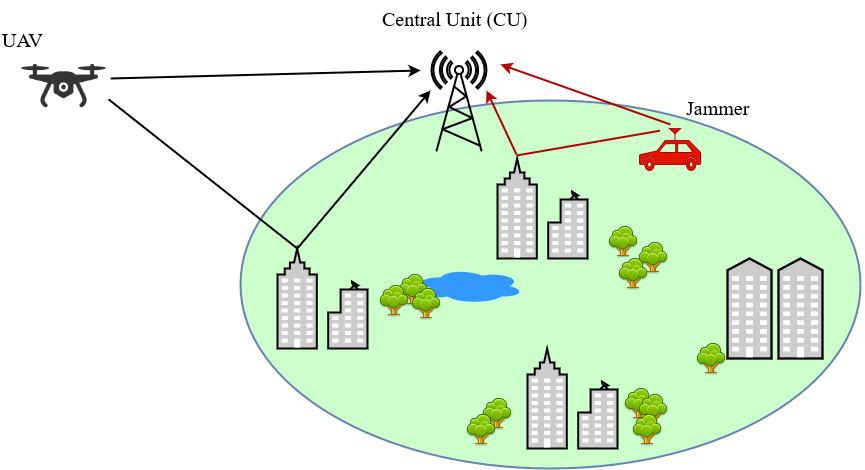}
\caption{UAV-aided WSN for smart city 
applications (downloading phase) in a multipath scenario: 
the UAV transmits data gathered from ground
sensor/meter devices
to a CU in the presence of a smart mobile jammer,
whose aim is to degrade the UAV-to-CU transmission.}
\label{fig:fig_1}
\end{figure}

\section{The reference framework}
\label{sec:system}

With reference to Fig.~\ref{fig:fig_1}, we focus on the downloading phase
where the UAV transmits to the CU, over a dedicated A2G link,
the data collected by the sensors
and meters disseminated around the city
(the proposed anti-jamming strategy can be used in
the uploading phase with minor modifications).
An hostile mobile terminal, referred to as the jammer, is trying to
disrupt the communication
link between the UAV and the CU, by sending  jamming signals.
To cope with the time-dispersive
nature of the channel,
the UAV employs OFDM transmission
with OFDM symbol period $T$ and
sampling period $T_\text{c}$.

The UAV transmits data to the CU in a mobility-based fashion
aimed at improving the physical layer security \cite{Tang} of the link.
Specifically, we will show that nonzero radial velocities between
the UAV and the CU allow to gain robustness against
a mobile jamming attack by exploiting their different
features in the Doppler domain.

Also the jammer is mobile, e.g., 
the jamming transmitter can be
mounted on a UAV or inside a terrestrial vehicle (as depicted in
Fig.~\ref{fig:fig_1}). 
From its viewpoint, mobility allows the jammer to achieve two
desirable goals: 
(i)  to move  quickly from one point to another 
of the city in order to jam multiple CUs;
(ii) to  drastically reduce 
its probability of localization.

\subsection{Basic assumptions}
\label{sec:ass}
In this subsection, we state and discuss all the assumptions that are
subsequently used throughout the paper.

\vspace{2mm}
{\bf A1)}:  The jammer is able to replicate almost
perfectly the modulation format
of the UAV, by injecting in the air
random symbols that are
independent of those transmitted
by the UAV. Compared to the simpler case when
the jammer just sends a high-power unmodulated carrier to
corrupt the symbol stream of the UAV transmission,
the considered attack is harder to be detected using network monitoring tools,
since such tools will sense legitimate traffic on the medium \cite{Pele.2011}.

\vspace{2mm}
{\bf A2)}:
The independent symbol streams $\{\ba[n]\}$ and $\{\bs[n]\}$
emitted  by the UAV and the mobile jammer ($n \in \mathbb{Z}$), respectively, are
modeled as sequences of
zero-mean unit-variance
independent and identically distributed (i.i.d.) complex
noncircular random variables,
with second-order moments $\Es(\ba^2[n]) \neq 0$ and $\Es(\bs^2[n]) \neq 0$,
respectively.%
\footnote{Throughout the
paper, the subscripts U and J
indicate parameters referring to the UAV and the jammer, respectively.}
Real-valued modulation schemes,
such as pulse amplitude modulation (PAM) and binary
phase-shift keying (BPSK),
naturally exhibit such noncircular statistical features.
Furthermore, modulations with quadrature offset, e.g., offset
quadrature phase-shift keying (OQPSK), minimum shift keying
(MSK) and variants, are noncircular, too: they are attractive
for systems using nonlinear amplifiers since
the modulated signal is less sensitive to distortion during symbol transitions.
Extension of the proposed anti-jamming strategy to circular modulation schemes
(i.e., when $\Es(\ba^2[n])=\Es(\bs^2[n])=0$) is outlined in Section~\ref{sec:detection}
(see Remark~1) and Section~\ref{sec:estimation} (see Remark~2).

\vspace{2mm}
{\bf A3)}:
The signals transmitted by the
UAV and the mobile jammer propagate through
doubly-selective channels \cite{Bello}, with
maximum delay spread $\Delta_\text{TX}$
(its reciprocal is proportional to the coherence bandwidth)
and  Doppler spread $D_\text{TX}$
(its reciprocal is proportional
to the coherence time) \cite{Proakis},
for $\text{TX} \in \{\text{U}, \text{J}\}$.
We assume underspread channels, i.e., $\Delta_\text{U} \, D_\text{U} < 1$ and
$\Delta_\text{J} \, D_\text{J} < 1$, which is a valid assumption
for most radio channels \cite{Matz}.

\vspace{2mm}
{\bf A4)}:
The CU
receiver has been previously locked to
the multipath component
at (approximately) the minimum delay
$\tau_\text{min}$
between the UAV-to-CU and jammer-to-CU channels,
in which case the parameter
$\Delta_\text{max} \eqdef \max(\Delta_\text{U}, \Delta_\text{J})$ characterizes the time spreading of both channels.

\vspace{2mm}
{\bf A5)}:
The carrier frequency is much larger
than the signaling bandwidth. In this case,
the shrinking or dilation
of the signaling waveform \cite[Ch.~7]{Napolitano.2012}
can be ignored, and
the effects of motion are accurately
captured by Doppler shifts.

\vspace{2mm}
{\bf A6)}:
During the downloading phase,  the UAV flies at
a sufficiently high altitude such that the UAV and the CU have
a line-of-sight (LoS) connection
with probability $1$.

\vspace{2mm}
{\bf A7)}:
For blind identification of all the
channel parameters (see Section~\ref{sec:estimation}),
all the Doppler shifts of the UAV-to-CU
and jammer-to-CU transmissions
are different and all their pairs
have different sums -- a technical condition
that is satisfied with probability
$1$ in real scenarios.

\vspace{2mm}
{\bf A8)}:
If we denote with
$\psi_\text{DAC}(t)$
the impulse response of the
digital-to-analog converter (DAC) and with
$\psi_\text{ADC}(t)$
the impulse response of the (anti-aliasing) filter at the
input of the analog-to-digital converter (ADC) at the CU, the
impulse response  $\psi(t) \eqdef \psi_\text{DAC}(t) \ast \psi_\text{ADC}(t)$
of the cascade of the DAC interpolation filter and the ADC antialiasing
filter obeys $\psi(t) \equiv 0$ for $t \not \in [0, \Delta_\text{filter})$.
The pulse $\psi(t)$
and its finite duration $\Delta_\text{filter}$ are known at the CU and
$\Delta_\text{filter}+\Delta_\text{max} < T$
for multicarrier systems (see also A4).

\vspace{2mm}
{\bf A9)}:
After filtering, the noise observed at the receiver
is a zero-mean complex circular white Gaussian random process
with power $\sigma_w^2$.

\vspace{2mm}
{\bf A10)}:
The OFDM cyclic prefix (CP)  length is sufficient 
to accommodate the overall delay spread, i.e.,
\[
\Lcp \ge (\Delta_\text{filter}+\Delta_\text{max})/T_\text{c} \: .
\]

\begin{table*}[t]
\centering
\caption{List of the main symbols used throughout the paper (with $\text{TX} \in \{\text{U},\text{J}\}$).}
\label{tab:symbol}
\begin{tabular}{lll} 
 & \bf{Symbol} & \bf{Description}\\
 \hline
 \hline
& $M$ & number of subcarriers \\
\cline{2-3}
& $\Lcp$ & cyclic prefix length \\
\cline{2-3}
\multirow{3}*{\minitab[c]{\bf{System} \\ \bf{symbols}}}
& $P$ & OFDM symbol length \\
\cline{2-3}
& $T_{\mathrm{c}}$ & sampling period \\
\cline{2-3}
& $T$ & OFDM symbol period \\
\cline{2-3}
& $f_0$ & carrier frequency \\
\cline{2-3}
& $\pot_{\text{TX}}$ & average RF transmit power of transmitter TX\\
\cline{2-3}
& $\Delta_{\text{filter}}$ & duration of the ADC impulse response \\
\cline{2-3}
& $\bm{s}_\text{TX}[n] $ & block of $M$ data symbols transmitted by TX during the $n$th OFDM symbol interval\\
\hline
\hline
 & $K_{\text{TX}}$ & number of paths for the TX-to-CU channel \\
\cline{2-3}
& $A_{\text{TX},k}$ & amplitude of the $k$th path for the TX-to-CU channel \\
\cline{2-3}
& $\theta_{\text{TX},k}$ & phase shift of the $k$th path for the TX-to-CU channel \\
\cline{2-3}
& $f_{\text{TX},k}$ & Doppler frequency shift of the $k$th path for the TX-to-CU channel \\
\cline{2-3}
\multirow{3}*{\minitab[c]{\bf{Channel} \\ \bf{symbols}}}& $\tau_{\text{TX},k}$ & delay of the $k$th path for the TX-to-CU channel \\
\cline{2-3}
& $\Delta_{\text{max}}$ & maximum channel time spreading \\
\cline{2-3}
& $g_{\text{TX},k}$ & channel gain of the $k$th path for the TX-to-CU channel \\
\cline{2-3}
& $\nu_{\text{TX},k}$ & normalized Doppler shift of the $k$th path for the TX-to-CU channel \\
\cline{2-3}
& $d_{\text{TX},k}$ & integer delay of the $k$th path for the TX-to-CU channel \\
\cline{2-3}
& $\chi_{\text{TX},k}$ & fractional delay of the $k$th path for the TX-to-CU channel \\
\cline{2-3}
& $\bm{H}_{\text{TX}}[n]$ & time-varying $M \times M$ TX-to-CU channel matrix after CP removal \\
\hline
\hline
\end{tabular}
\end{table*}

\subsection{Signal models}
\label{sec:sig}

According to A1,
the flying UAV and the mobile jammer employ $M$ orthogonal  subcarriers.

Let $s_{\text{TX}}^{[m]}[\ell] \eqdef s_{\text{TX}}[\ell \, M + m]$
represent the symbol transmitted in the $\ell$th data block
on the $m$th subcarrier (see A2), with $\text{TX} \in \{\text{U},\text{J}\}$,
the generic symbol vector
$\bm{s}_\text{TX}[n] \eqdef (s_\text{TX}^{[0]}[n], s_\text{TX}^{[1]}[n],\ldots,
s_\text{TX}^{[M-1]}[n])^\trasp \in \Cset^M$ is
subject to IDFT
and CP insertion, thus yielding $\bm{u}_\text{TX}[n] \eqdef (u_\text{TX}^{[0]}[n], u_\text{TX}^{[1]}[n],\ldots,
u_\text{TX}^{[P-1]}[n])^\trasp= \bm{I}_\text{cp} \, \bm{W}_M \, \bm{s}_\text{TX}[n]$,
where $\bm{I}_\text{cp} \in \mathbb{R}^{P \times M}$ models
the insertion of a CP of length $\Lcp$, with $P \eqdef M+\Lcp$ and $\bm{W}_M$ is the $M$-point
IDFT.

The data vector $\bm{u}_\text{TX}[n]$ undergoes parallel-to-serial (P/S) conversion,
and the resulting sequence $\{u_\text{TX}[\ell]\}$
feeds the DAC, operating at rate
$1/T_\text{c}=P/T$, where $u_\text{TX}[\ell \, P+q]=u_\text{TX}^{[q]}[\ell]$,
 for $q \in \{0,1.\ldots, P-1\}$.
The baseband transmitted continuous-time
signal at the output of the DAC is given,
for $\text{TX} \in \{\text{U}, \text{J}\}$, by
\begin{multline}
x_\text{TX}(t) = \sqrt{2 \, \pot_{\text{TX}}} \sum_{\ell=-\infty}^{+\infty} \sum_{q=0}^{P-1} u_{\text{TX}}^{[q]}[\ell]  \, \psi_\text{DAC}(t- q \, T_\text{c}-\ell \, T)
\label{eq:xell}
\end{multline}
where $\pot_{\text{TX}}$ controls
the average transmission radio-frequency power
and we recall that $\psi_\text{DAC}(t)$
denotes the impulse response of the DAC.

The signal \eqref{eq:xell} is up-converted and transmitted
on the air.
Under A3, A4, and A5,
the baseband received signal
at the CU, after filtering, is given by
\begin{multline}
y(t) = \sum_{k=1}^{\Ka} A_{\text{U},k} \, e^{j \theta_{\text{U},k}} \, e^{j \, 2 \pi  f_{\text{U},k} t } \, \xa(t-\tau_{\text{U},k})
\\ +
\sum_{k=1}^{\Ks} A_{\text{J},k} \, e^{j \theta_{\text{J},k}} \, e^{j 2 \pi f_{\text{J},k} t } \, \xs(t-\tau_{\text{J},k}) + w(t)
\label{eq:sig}
\end{multline}
where
$K_\text{TX}$ is the number of paths between the generic transmitter
$\text{TX} \in \{\text{U},\text{J}\}$ and the CU,
$A_{\text{TX},k}>0$, $\theta_{\text{TX},k} \in[0, 2 \pi)$,
$f_{\text{TX},k}  \in [-D_\text{TX}/2, D_\text{TX}/2]$, and
$\tau_{\text{TX},k} \in [0, \Delta_\text{TX}]$
denote the amplitude, the phase shift, the Doppler
(frequency) shift, and delay of the $k$th path associated with
$x_\text{TX}(t)$, whereas
$w(t)$ is the complex envelope of (filtered) noise, which is statistically independent of $x_\text{TX}(t)$.
Without loss
of generality and according to A6, the LoS component
between the UAV and the CU
is assumed to be the first channel path, i.e.,
the one corresponding to $k=1$ in the leftmost sum in
\eqref{eq:sig}. Under A7,
the Doppler shifts
$f_{\text{U},1}, f_{\text{U},2}, \ldots, f_{\text{U},K_\text{U}}, f_{\text{J},1}, f_{\text{J},2}, \ldots, f_{\text{J},K_\text{J}}$ are pairwise-sum different.

The signal \eqref{eq:sig} is sampled with rate $1/T_\text{c}$ at instants
$t_{n,p} \eqdef n T + p \, T_\text{c}$,\footnote{Hereinafter, it is assumed that
$\tau_\text{min}=0$  without
loss of generality.} for $p \in \{0,1\ldots, P-1\}$.
Let  $y^{[p]}[n] \eqdef y(t_{n,p})$ be the discrete-time
counterpart of \eqref{eq:sig}, one gets
\begin{multline}
y^{[p]}[n] =
\sum_{k=1}^{\Ka} g_{\text{U},k} \, e^{j \, 2 \pi \nu_{\text{U},k} \left(n +\frac{p}{P} \right)  } \,
\xak^{[p]}[n]  \\ +
\sum_{k=1}^{\Ks} g_{\text{J},k} \, e^{j \, 2 \pi \nu_{\text{J},k} \left(n +\frac{p}{P}\right)} \,
\xsk^{[p]}[n] + w^{[p]}[n]
\label{eq:sig-ric}
\end{multline}
where
$g_{\text{TX},k} \eqdef \sqrt{2 \, \pot_{\text{TX}}} \, A_{\text{TX},k} \, e^{j \, \theta_{\text{TX},k}} \in \mathbb{C}$ is
referred to as the (complex) channel gain and
$\nu_{\text{TX},k} \eqdef f_{\text{TX},k} \, T
 \in [-1/2, 1/2)$ is the {\em normalized} Doppler shift,
whereas, for $\text{TX} \in \{\text{U},\text{J}\}$,
\begin{multline}
x_\text{TX,k}^{[p]}[n]=
\sum_{\ell=n-1}^{n} \sum_{q=0}^{P-1}
u_{\text{TX}}^{[q]}[\ell] \\
\alpha_{\text{TX},k}
\left[(n-\ell) P +(p-q) -d_{\text{TX},k}\right]
\label{eq:xpn}
\end{multline}
where, recalling A8,
\barr
\alpha_{\text{TX},k}[h] & \eqdef \psi(h \, T_\text{c}-\chi_{\text{TX},k})
\quad (h \in \mathbb{Z})
\\
\tau_{\text{TX},k} & = d_{\text{TX},k} \, T_\text{c} + \chi_{\text{TX},k}
\earr
with integer delay $d_{\text{TX},k} \in \left \{0,1, \ldots,
\lfloor \Delta_{\text{TX}}/T_\text{c}\rfloor \right\}$
and fractional delay $\chi_{\text{TX},k} \in [0, T_\text{c})$,
and, finally, $w^{[p]}[n] \eqdef w(t_{n,p})$.

By gathering the obtained samples of the received signal
into the vector
$\overline{\yb}[n] \eqdef (y^{[0]}[n], y^{[1]}[n],\ldots,y^{[P-1]}[n])^\trasp \in \Cset^P$ and accounting
for \eqref{eq:sig-ric}, we obtain the following vector model
\begin{multline}
\overline{\yb}[n] = \Hatildezero[n] \, \bab[n] + \Hatildeuno[n] \, \bab[n-1]
+
\Hstildezero[n] \, \bsb[n] \\ + \Hstildeuno[n] \, \bsb[n-1] + \overline{\wb}[n]
\label{eq:y-IBI}
\end{multline}
where, for $\text{TX} \in \{\text{U},\text{J}\}$ and $b \in \{0,1\}$,
\be
\overline{\mathbf{H}}_{\text{TX},b} [n] \eqdef \left[ \sum_{k=1}^{K_\text{TX}}
g_{\text{TX},k}  \, \mathbf{D}_{\text{TX},k} \,
\mathbf{T}_{\text{TX},k,b}  \, e^{j \, 2 \pi \nu_{\text{TX},k} n} \right]
\bm{\Omega}
\label{eq:mat}
\ee
with
\be
\mathbf{D}_{\text{TX},k} \eqdef \diag(1, e^{j \, \frac{2 \pi}{P} \nu_{\text{TX},k}},
\ldots, e^{j \, \frac{2 \pi}{P} \nu_{\text{TX},k} (P-1)})
\ee
whereas $\mathbf{T}_{\text{TX},k,b}\in \Rset^{P \times P}$ is a Toeplitz matrix
whose $(p+1,q+1)$th entry is
$\alpha_{\text{TX},k}\left[b\,  P +(p-q) -d_{\text{TX},k}\right]$,
for $p, q \in \{0,1\ldots, P-1\}$,
$\bm{\Omega} \eqdef \bm{I}_\text{cp} \, \bm{W}_M \in \Cset^{P \times M}$,
and, according to A9, the noise vector
$\overline{\wb}[n] \eqdef (w^{[0]}[n], w^{[1]}[n],\ldots,w^{[P-1]}[n])^\trasp \in \Cset^P$
is a zero-mean complex circular Gaussian random vector
with $\Es(\overline{\wb}[n_1] \, \overline{\wb}^\herm[n_2])=
\sigma_w^2 \, \delta_{n_1-n_2} \, \I_{P}$.

Under A10, 
after CP removal
the received vector \eqref{eq:y}
gets free from interblock interference (IBI) 
and assumes the form
\be
\yb[n] \eqdef \bm{R}_\text{cp} \, \overline{\yb}[n] = \Ha[n] \, \bab[n] +
\Hs[n] \, \bsb[n]  + \wb[n]
\label{eq:y}
\ee
where matrix $\bm{R}_\text{cp} \in \mathbb{R}^{M \times P}$ performs CP
removal,
$\bm{H}_{\text{TX}}[n] \eqdef \bm{R}_\text{cp} \,
\overline{\mathbf{H}}_{\text{TX},0} [n] \in \Cset^{M \times M}$,
and, finally,
the noise contribution is $\wb[n] \eqdef \bm{R}_\text{cp} \, \overline{\wb}[n] \in \Cset^M$.
It is noteworthy that both the UAV and the jammer transmissions are
adversely affected by
intercarrier interference (ICI) due to the presence of Doppler shifts.
Indeed, in the absence of a relative motion among the  two transmitters and
the CU, i.e., when $D_\text{U}=D_\text{J}=0$,
$\mathbf{H}_\text{TX} [n]$ boils down to a time-invariant circulant matrix
that can be diagonalized through DFT.

The main symbols introduced in Section~\ref{sec:system} are summarized in Table~\ref{tab:symbol}. 
In Section~\ref{sec:detection}, we assume that the CU has perfect knowledge of
both $\Ha[n]$ and $\Hs[n]$, by deferring
to Section~\ref{sec:estimation}
the challenging problem of how they can be
estimated from model \eqref{eq:y-IBI}.

\section{UAV symbol detection with anti-jamming capabilities}
\label{sec:detection}

The signal model \eqref{eq:y} describes a {\em multiple-access channel (MAC)}
\cite{Carl.1978, Wyner,Cover-Thomas-book,Tse}, where the CU receives signals from $2 M$
{\em virtual} transmitters, each with its own transmit symbol
$s_\text{TX}^{[p]}[n]$, for $\text{TX} \in \{\text{U}, \text{J}\}$
and $p \in \{0,1\ldots, M-1\}$.
The CU  processes the channel-output vector $\yb[n]$ on a symbol-by-symbol basis
({\em one-shot detection}) to detect the $M$ symbols transmitted by the UAV,
which are gathered in the vector $\bab[n]$.

Two types of strategies can be pursued to detect the UAV symbols.
In the former one, for each $p \in \{0,1\ldots, M-1\}$,
the CU detects the UAV symbol $s_\text{U}^{[p]}[n]$
by treating the remaining ICI symbols $s_\text{U}^{[0]}[n], \ldots, s_\text{U}^{[p-1]}[n],
s_\text{U}^{[p+1]}[n], \ldots, s_\text{U}^{[M-1]}[n]$
and jamming symbols $\bm{s}_\text{J}[n]$ as noise.
However, such a strategy leads to poor performance in practice,
since the power of ICI plus jamming contribution
in \eqref{eq:y} is comparable to (or, even, greater than) the
power of desired symbol.

In case of strong ICI and jamming,
significant performance gains
can be achieved by pursuing
a different strategy \cite{Carl.1978},
wherein the CU treats all the
inputs as a single vector
$\bm{s}[n] \eqdef (\bm{s}_\text{U}^\trasp[n],\bm{s}_\text{J}^\trasp[n])^\trasp \in \Cset^{2M}$
and all the symbols are detected jointly.
In this strategy, the estimate
of the symbol vector
$\bm{s}_\text{J}[n]$ -- whose information
content is useless to the CU --
allows one to eliminate the adverse effects
of the jamming symbols on UAV data detection.
Unfortunately, optimum (in the minimum-probability-error sense)
joint detection \cite{Proakis} entails
a computational complexity that grows
exponentially with the size of $\bm{s}[n]$, and, thus, becomes prohibitive
for a large number of subcarriers.
{\em Widely-linear (WL)} joint detection schemes \cite{Picinbono}, such as the zero-forcing
or minimum-mean-square-error (MMSE)
ones, allow one to
substantially reduce detection complexity
at the expense of (potentially
severe) ICI and jamming enhancement.

Herein, we propose the UAV symbol detection procedure with
anti-jamming capabilities depicted in Fig.~\ref{fig:fig_sic}.
We focus on the case where the transmitted symbols are real-valued, i.e.,
$\bm{s}[n]=\bm{s}^*[n]$ (the
developed detection approach can be straightforwardly extended to
complex noncircular constellations).
To exploit the noncircularity of
$\bm{s}[n]$, we build the following \textit{augmented} model for the received signal:
\barr
\z[n] \eqdef \begin{pmatrix}
\yb[n] \\
\yb^*[n]
\end{pmatrix}
& = \underbrace{\begin{pmatrix} \Ha[n] & \Hs[n] \\
\Ha^*[n] & \Hs^*[n] \end{pmatrix} }_{\Htilde[n] \in \Cset^{(2 M) \times (2 M)}}
 \bm{s}[n] + \underbrace{\begin{pmatrix}
\bm w[n] \\
\bm w^*[n]
\end{pmatrix}}_{\wtilde[n] \in \Cset^{2M}}
\nonumber \\ & = \Htilde[n] \, \bm{s}[n] + \wtilde[n] \: .
\label{eq:ytilde}
\earr
Dropping out the dependence on the block index $n$,
the detection procedure consists of $2M$ steps.
As reported in Fig.~\ref{fig:fig_sic},
at the $i$th step, with $i \in \{0,1,\ldots, 2M-1\}$, three distinct software-defined
functions
are executed:
\begin{enumerate}[(i)]

\item
WL-MMSE pre-detection;

\item
post-sorting algorithm (PSA);

\item
residual ICI-plus-jamming cancellation,
where the initial conditions
$\z_{0}=\z$, $\Htilde_{0}=\Htilde$, $\bm{s}_{0}=\bm{s}$ are set.

\end{enumerate}
The channel state information required by
such operations is provided by the
estimator discussed in Section~\ref{sec:estimation}.

\begin{figure}[t]
\centering
\includegraphics[width=\columnwidth]{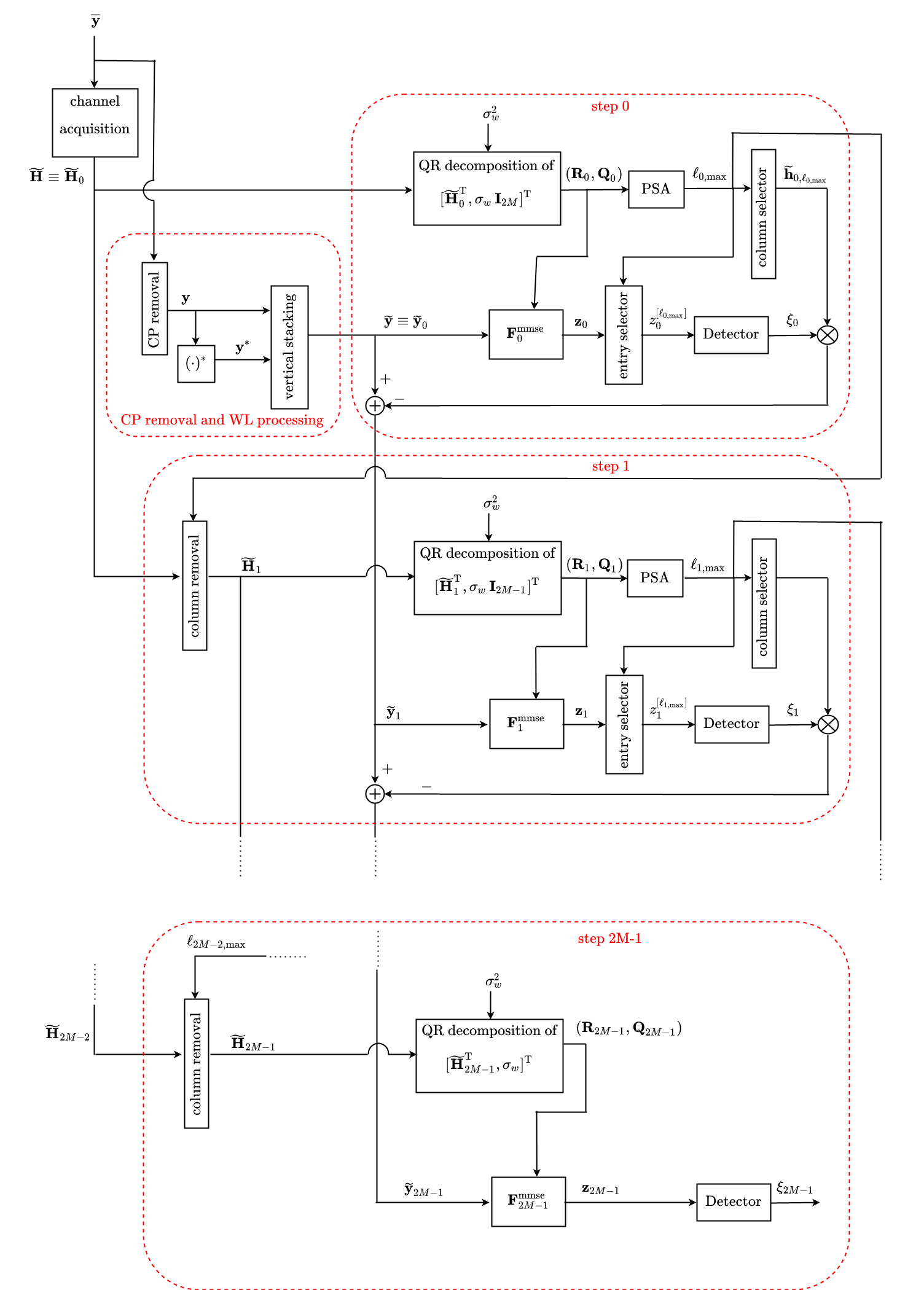}
\caption{The proposed anti-jamming reception structure.}
\label{fig:fig_sic}
\end{figure}

\subsection{WL-MMSE pre-detection}
\label{sec:WL-MMSE}

The WL-MMSE stage provides a preliminary ``soft'' estimate
$\bm{z}_i \eqdef (z_i^{[0]}, z_i^{[1]}, \ldots, z^{[2 M-i-1]}_k)^\trasp \in \Cset^{2 M-i}$
of the symbol vector at the $i$th step, for $i \in \{0,1,\ldots, 2M-1\}$.
In particular, the WL-MMSE  detector
\barr
\Fb^\text{mmse}_i & = \Htilde^\herm_i
\left(\Htilde_i \, \Htilde^\herm_i+ \sigma_w^2 \, \I_{2M}\right)^{-1}
\nonumber \\ & =
\left( \Htilde^\herm_i  \, \Htilde_i \,
+ \sigma_w^2 \, \I_{2M-i}\right)^{-1}
 \Htilde^\herm_i
\label{eq:WL-MMSE}
\earr
is obtained by minimizing
the mean squared error (MSE)
$\Es\left(\left\|\Fb_i \, \z_i - \bm{s}_i\right\|^2\right)$
between the output $\Fb_i \, \z_i$
of the WL detector $\Fb_i \in \Cset^{(2 M-i) \times (2 M)}$
and the symbol  vector $\bm{s}_i$, where,  for $i \in \{1,2,\ldots, 2M-1\}$,
$\z_i \in \Rset^{2M}$, $\bm{s}_i \in \Rset^{2M-i}$ and $\Htilde_i \in \Cset^{(2M) \times (2 M-i)}$
are obtained, respectively, from $\z_{i-1}$, $\bm{s}_{i-1}$, and
$\Htilde_{i-1}$ (see Subsection~\ref{sec:SICIJC}).
The output of the WL-MMSE
pre-detection stage is
\be
\bm{z}_i =   \Fb^\text{mmse}_i \, \z_i \: .
\label{eq:zk}
\ee
It is important to note that $\Fb^\text{mmse}_i \, \Htilde_i \neq \I_{2M-i}$,
that is, there are residual ICI and jamming contributions at the output of the WL-MMSE pre-detector.

An equivalent expression of
\eqref{eq:zk} can be obtained by observing that
\be
\Htilde^\herm_i  \, \Htilde_i + \sigma_w^2 \, \I_{2M-i}= \Hover_i^\herm \, \Hover_i =
\bm{R}_i^\herm \, \bm{R}_i
\label{eq:boh}
\ee
with $\Hover_i \eqdef (\Htilde_i^\trasp, \sigma_w \, \I_{2M-i})^\trasp \in \Cset^{(4M-i) \times (2M-i)}$ being full-column rank by construction, i.e.,
$\rank(\Hover_i )=2M-i$, whose economy-size QR decomposition \cite{Horn}
is $\Hover_i={\bm{Q}}_i \, {\bm{R}}_i$, where
${\bm{Q}}_i \in \Cset^{(4M-i) \times (2M-i)}$ is a semi-unitary matrix, i.e.,
${\bm{Q}}_i^\herm \, {\bm{Q}}_i=\I_{2M-i}$, and
${\bm{R}}_i \in \Cset^{(2M-i) \times (2M-i)}$ is an upper
triangular with nonzero diagonal elements. By substituting the QR decomposition
of $\Hover_i$  in \eqref{eq:WL-MMSE}, the pre-detector output \eqref{eq:zk}
can also be written as
\be
\bm{z}_i =
{\bm{R}}_i^{-1} \,
{\bm{Q}}_i^\herm
\begin{pmatrix}
\z_i \\
\bm{0}_{2M-i}
\end{pmatrix}
\label{eq:output-QR}
\ee
which is more suitable than \eqref{eq:zk} for
the subsequent PSA stage.

\vspace{2mm}
{\em Remark~1}:
When the WL-MMSE criterion is advocated for the soft estimation of
complex circular symbols, i.e., $\Es(\ba^2[n])=\Es(\bs^2[n])=0$, eq.~\eqref{eq:WL-MMSE} boils
down to a linear (L) MMSE filter, that is,
$\Fb^\text{mmse}_i=(\Fb^\text{l-mmse}_i, \Zero_{(2M-i) \times M})$,
where $\Fb^\text{l-mmse}_i \in \Cset^{(2M-i) \times M}$ is the standard
L-MMSE solution.

\subsection{Post-sorting algorithm (PSA)}
\label{sec:PSA}

The PSA is basically aimed at individuating
the soft symbol estimate
having the maximum
{\em signal-to-disturbance-plus-noise ratio (SDNR)}
(or, equivalently, that with the smallest estimation error)
at the output of the WL-MMSE filter, where
``disturbance'' stands for ICI-plus-jamming.
In order to introduce the PSA, we preliminarily  observe that
the {\em post-detection} SDNR $\gamma_{i,m}$ for
the $(m+1)$th entry of $\bm{s}_i$ is given \cite{Cioffi} by
\be
\gamma_{i,m} = \frac{1}{\sigma_w^2 \left\{\left( \Htilde^\herm_i  \, \Htilde_i \,
+ \sigma_w^2 \, \I_{2M-i}\right)^{-1}\right\}_{m+1,m+1}}-1
\label{eq:SDNR}
\ee
for $m \in \{0,1,\ldots, 2M-i-1\}$.
Expression \eqref{eq:SDNR} is based on the assumption of perfect knowledge of $\Htilde_i$.
On the other hand, when $\Htilde_i$ is replaced by its data-estimated version, eq.~\eqref{eq:SDNR} turns out to be particularly sensitive to finite-sample-size effects. 
A certain robustness against channel estimation errors can be
gained by resorting to the QR decomposition of $\Hover_i$ in eq.~\eqref{eq:boh}.
Since ${\bm{R}}_i$ is nonsingular and its inverse ${\bm{R}}_i^{-1}$
is another upper triangular matrix \cite{Horn}, one has $({\bm{R}}_i^\herm \,
{\bm{R}}_i)^{-1}={\bm{R}}_i^{-1} \, [{\bm{R}}_i^{-1}]^\herm $.
Therefore, the SDNR
\eqref{eq:SDNR} can be equivalently written as
\be
\gamma_{i,m} = \frac{1}{\displaystyle \sigma_w^2 \sum_{\ell=m}^{2M-i-1}
\left|\left\{{\bm{R}}_i^{-1}\right\}_{m+1,\ell+1}\right|^2}-1
\label{eq:SDNR-2}
\ee
for $m \in \{0,1,\ldots, 2M-i-1\}$.
We underline that \eqref{eq:SDNR} and \eqref{eq:SDNR-2}, as well as
\eqref{eq:zk} and \eqref{eq:output-QR},
are equivalent only when $\Htilde_i$
is perfectly known, whereas they give rise to
different values if $\Htilde_i$ is replaced by
its data-estimated counterpart.
We will resort to \eqref{eq:SDNR-2} for the PSA stage
since it is less sensitive to channel estimation errors.
The WL-MMSE soft estimate
with the largest SDNR corresponds to
the $(\ell_{i,\text{max}}+1)$th entry $z_{i}^{[\ell_{i,\text{max}}]}$ of $\bm{z}_i$,
with
\be
\ell_{i,\text{max}}  = \arg \max_{\ell \in \{0,1,\ldots, 2M-i-1\}} \gamma_{i, \ell} \:.
\label{eq:lmax}
\ee
The ``best'' soft estimate $z_i^{[\ell_{i,\text{max}}]}$
is then quantized to the nearest
(in Euclidean distance) information symbol to form the
``hard'' estimate $\xi_i=\mathcal{Q}(z_i^{[\ell_{i,\text{max}}]})$,
where $\mathcal{Q}(\cdot)$ is the minimum-distance
decision function. It is noteworthy that $\xi_i$ may be
either the estimate of a symbol transmitted by the UAV or
by the jammer.

\subsection{Residual ICI-plus-jamming cancellation}
\label{sec:SICIJC}

The cancellation
process  reduces the residual ICI and jamming
contributions so as to achieve better performance.
The basic idea is to cancel out the
contribution of $\xi_{i}$ to the remaining symbols,
for $i \in \{0,1,\ldots, 2M-1\}$. Starting from $\xi_{i}$, the
canceler executes the cancellation step
\be
\z_{i+1}=\z_{i}- \xi_{i} \, \htilde_{i,\ell_{i,\text{max}}}
\label{eq:canc}
\ee
where $\htilde_{i,\ell_{i,\text{max}}} \in \Cset^{2M}$
denotes the $(\ell_{i,\text{max}}+1)$th column of $\Htilde_{i}$.
In the case of correct detection, the updated
version of the received vector reads as
\be
\z_{i+1}=\Htilde_{i+1} \, \bm{s}_{i+1} + \wtilde
\ee
where $\Htilde_{i+1} \in \Cset^{(2M) \times (2M-i-1)}$
is obtained from $\Htilde_{i}$ by removing its
$(\ell_{i,\text{max}}+1)$th column $\htilde_{i,\ell_{i,\text{max}}}$ and, similarly,
$\bm{s}_{i+1} \in \Rset^{2M-i-1}$ is obtained from
$\bm{s}_{i}$ by removing its $(\ell_{i,\text{max}}+1)$th entry.

\begin{algorithm}[t]
\caption{The proposed UAV symbol detection procedure
with anti-jamming capabilities}
\label{table:tab_1}
\begin{algorithmic}
\STATE  {\bf Input quantities}: $\z_0=\z$, $\Htilde_0=\Htilde$,
and $\sigma_w^2$.

\STATE
{\bf Output quantities}: hard estimates $\xi_0, \xi_1, \ldots, \xi_{2M-1}$ of
the $2M$ transmitted symbols $\bm{s}_\text{U}$ and  $\bm{s}_\text{J}$.

\begin{enumerate}[1.]

\itemsep=1mm

\item
Set $i=0$.

\item
Evaluate the QR decomposition of $\Hover_i$.

\item
Perform the soft pre-detection \eqref{eq:output-QR}.

\item
Find the best soft estimate $z_i^{[\ell_{i,\text{max}}]}$ using \eqref{eq:lmax}.

\item
Compute the hard estimate $\xi_i=\mathcal{Q}(z_i^{[\ell_{k,\text{max}}]})$.

\item
Cancel out the symbol $\xi_i$ from $\z_i$ through \eqref{eq:canc}.

\item
Build $\Htilde_{i+1}$ from
$\Htilde_{i}$ by removing its
$(\ell_{i,\text{max}}+1)$th column.

\item
Set $i=i+1$: if $i < 2M$, then go back to Step 2, else end the procedure.

\end{enumerate}
\end{algorithmic}
\end{algorithm}

\subsection{Computational complexity}
\label{sec:complexity-detection}

The proposed UAV symbol detection procedure with
anti-jamming capabilities
is summarized as Algorithm~\ref{table:tab_1}
at the top of this page.
The computational complexity of Algorithm~\ref{table:tab_1} is mainly dominated by the QR decomposition of the
partitioned matrix  $\Hover_i = (\Htilde_i^\trasp, \sigma_w \, \I_{2M-i})^\trasp \in \Cset^{(4M-i) \times (2M-i)}$,
which involves
$\mathcal{O}[(4M-i) \, (2M-i)^2]$ floating point
operations \cite{Bjorck} if computed from scratch,
for $i \in \{0,1,\ldots, 2M-1\}$.
However, for
$i \in \{1,2,\ldots, 2M-1\}$, the matrix $\Hover_{i}$ is obtained
by deleting a row and a column from $\Hover_{i-1}$
(so-called {\em downdating} operations).
Algorithms for downdating the QR decomposition are
presented in \cite{Bjorck}: they can be modified in a straightforward
fashion to reduce the computational complexity of
Algorithm~\ref{table:tab_1},
by capitalizing on the fact
that the QR decomposition behaves nicely under partitioning.

To obtain an estimate of the blocks
of symbols transmitted  by the UAV during the
downloading phase,
one has to build the time-varying matrix
$\Htilde[n]$ in \eqref{eq:ytilde} and, then, implement
Algorithm~\ref{table:tab_1} for each symbol interval $n$
in the downloading time window.
Alternatively, one might resort to the {\em frequency-domain
representation} of such a linear time-variant filtering process \cite{Franks}, which
allows one to implement Algorithm~\ref{table:tab_1} by means of
linear time-invariant filtering
of frequency-shifted versions of the received vector
$\z[n]$ in \eqref{eq:ytilde}, thus reducing run-time complexity (in terms of floating point operations, subscripting, and memory traffic).

\section{Estimation of channel parameters}
\label{sec:estimation}

The basic assumption underlying the anti-jamming detection strategy developed in Section~\ref{sec:detection} is the knowledge of the channel matrices $\Ha[n]$ and $\Hs[n]$.
Acquisition of channel state information under a mobile jamming attack is a
challenging task, since channel estimation
in current wireless data communications standards
is mainly based
on training sequences sent
from the transmitter to the receiver.
Such training-based channel estimation algorithms
have been designed only to be resilient to
non-malicious interference and noise \cite{Fle.1999}.
A jammer can exploit this  weakness and efficiently
use its power budget in order to
significantly contaminate the
output of the channel estimator.

Henceforth, we propose a {\em blind} solution for estimating $\Ha[n]$ and $\Hs[n]$,
which does not rely on training sequences, but instead
exploits the unique
ACS properties
of both the UAV and jamming signals.
To this aim, for the sake of design,
we fix the number of paths
$K_\text{U}$ between the UAV and the CU
to a value large enough to
capture all the dominant impinging waves of the UAV transmission.\footnote{Alternatively, classical
information theoretic methods \cite{Wax.1985} for model selection
can be used to estimate the exact value of $K_\text{U}$ during a secure setup session.}
Instead of estimating the channels $\Ha[n]$ and $\Hs[n]$
({\em communication theory viewpoint}),
we choose herein to estimate
the parameters characterizing
such channels, i.e.,
$g_{\text{TX},k}$, $f_{\text{TX},k}$ or, equivalently,
$\nu_{\text{TX},k}$, and $\tau_{\text{TX},k}$, for
$k \in \{1,2,\ldots, K_\text{TX}\}$ and
$\text{TX} \in \{\text{U},\text{J}\}$,
by following an {\em array signal
processing approach}.

Specifically, we rely on the pioneering
works \cite{Chen.I, Chen.II},
wherein several reliable estimation methods
based on signal cyclostationarity have been developed
for time-difference-of-arrival of a single
signal of interest (SOI), such
as those based on cyclic cross-correlation,
spectral correlation ratio, and
spectral coherence alignment.
Estimation of  both
power and timing parameters of multiple
SOIs has been discussed in \cite{Gelli}.
However, the approaches in \cite{Chen.I, Chen.II,Gelli} are targeted
at flat-flat channels,
for which there exists a single path
only, i.e., $K_\text{U}=K_\text{J}=1$ in \eqref{eq:sig}, and the Doppler shift
can be assumed to be zero over the
observation time interval,
i.e., $f_{\text{U},1}=f_{\text{J},1}=0$ in \eqref{eq:sig}.
The basic methodologies developed in \cite{Chen.I, Chen.II,Gelli}
can be extended to time-varying channels,
by capitalizing on the fact
that Doppler shifts change
the spectral correlation property of the
transmitted signals \cite{Ciblat}, \cite[Ch.~8]{Napolitano.2019}.
However, it is assumed in such studies that the channel is
time selective {\em but} frequency nonselective (or frequency flat),
i.e., $K_\text{U}=K_\text{J}=1$ in \eqref{eq:sig},
which is a restrictive assumption for smart city applications,
where the channel might be non-flat in both time and frequency.

In the forthcoming subsections,
we exploit the ACS properties of the received signal
to blindly estimate channel gains, Doppler shifts, and time delays
of both the UAV and jamming transmissions, by generalizing
the methods developed in \cite{Chen.I, Chen.II,Gelli, Ciblat, Napolitano.2019}
to the case of doubly-selective channels.
The starting point of our estimation procedure is signal model
\eqref{eq:y-IBI}, i.e.,
the \textit{entire} OFDM block
(including the CP) is processed
for channel estimation purposes,
by also taking advantage of IBI.

\subsection{Second-order wide-sense statistical characterization of the received signal}

The ACS features of the received signal \eqref{eq:y-IBI}
stem from
the fact that $\overline{\mathbf{H}}_{\text{TX},b} [n]$
in \eqref{eq:mat} is a discrete-time
{\em almost-periodic} (AP) matrix \cite{Cord} with (possibly)
incommensurate frequencies belonging to the (order) set
\be
\mathcal{A}_\text{TX} \eqdef \{\nu_{\text{TX},1}, \nu_{\text{TX},2}, \ldots, \nu_{\text{TX},K_\text{TX}}\} \subseteq [-1/2, 1/2) \: .
\ee
As a consequence, for $r \in \{-1,0,1\}$,
the correlation matrix
$\Rb_{\overline{\yb}\overline{\yb}}[n, r] \eqdef \Es(\overline{\yb}[n] \,
\overline{\yb}^\herm[n-r]) \in \Cset^{P \times P}$
and the conjugate correlation matrix $\Rb_{\overline{\yb}\overline{\yb}^*}[n, r] \eqdef \Es(\overline{\yb}[n] \,
\overline{\yb}^\trasp[n-r]) \in \Cset^{P \times P}$
are both AP matrices.
Accordingly, the multivariate process \eqref{eq:y-IBI} is said to be
second-order wide-sense {\em almost-cyclostationary} \cite{Gardner-book}.
To save space, in the sequel, we use the notation
$\Rb_{\overline{\yb}\overline{\yb}^{(*)}}[n, r] \eqdef \Es(\overline{\yb}[n] \,
\{\overline{\yb}^\herm[n-r]\}^{(*)}) \in \Cset^{P \times P}$,
where the superscript $(*)$ denotes optional complex conjugation:
if conjugation is absent, then $\Rb_{\overline{\yb}\overline{\yb}^{(*)}}[n, r] \equiv \Rb_{\overline{\yb}\overline{\yb}}[n, r]$; if conjugation is present, then
$\Rb_{\overline{\yb}\overline{\yb}^{(*)}}[n, r] \equiv \Rb_{\overline{\yb}\overline{\yb}^{*}}[n, r]$.

It can be verified that
\begin{multline}
\Rb_{\overline{\yb}\overline{\yb}^{(*)}}[n, r] =
\sum_{k_1=1}^{\Ka} \sum_{h_1=1}^{\Ka}
\bm{\Xi}_{\text{U}^{(*)},k_1,h_1}[r] \, e^{j 2 \pi (\nu_{\text{U},k_1} -(-) \nu_{\text{U},h_1}) n}
\\ + \sum_{k_2=1}^{\Ks} \sum_{h_2=1}^{\Ks}
\bm{\Xi}_{\text{J}^{(*)},k_2,h_2}[r] \, e^{j 2 \pi (\nu_{\text{J},k_2} -(-) \nu_{\text{J},h_2}) n}
+ \Rb_{\overline{\wb}\overline{\wb}^{(*)}}[r]
\label{eq:Ryy}
\end{multline}
where $(-)$ is an optional minus sign linked to $(*)$: if conjugate is absent, then
$-(-)=-$; if conjugate is present, then
$-(-)=+$. Moreover,
the matrices in \eqref{eq:Ryy} are defined
in \eqref{eq:Psi-1}--\eqref{eq:Psi+1} at the top of the next page,
\begin{figure*}[!t]
\normalsize
\barr
\bm{\Xi}_{\text{TX}^{(*)},k,h}[-1]  & \eqdef  g_{\text{TX},k} \, (g_{\text{TX},h}^*)^{(*)}
\, (e^{- j \, 2 \pi \nu_{\text{TX},h}})^{(*)} \,
\mathbf{D}_{\text{TX},k} \, \mathbf{T}_{\text{TX},k,0} \, \bm{\Omega} \,
(\bm{\Omega}^\herm \, \mathbf{T}_{\text{TX},h,1}^\herm  \,
\mathbf{D}_{\text{TX},h}^*)^{(*)} \in \Cset^{P \times P}
\label{eq:Psi-1}
\\
\bm{\Xi}_{\text{TX}^{(*)},k,h}[0] & \eqdef
g_{\text{TX},k} \, (g_{\text{TX},h}^{*})^{(*)} \,
\mathbf{D}_{\text{TX},k} \, [\mathbf{T}_{\text{TX},k,0} \, \bm{\Omega} \,
(\bm{\Omega}^\herm \, \mathbf{T}_{\text{TX},h,0}^\herm)^{(*)}
+ \mathbf{T}_{\text{TX},k,1} \, \bm{\Omega} \,
(\bm{\Omega}^\herm \, \mathbf{T}_{\text{TX},h,1}^\herm)^{(*)}] \,
(\mathbf{D}_{\text{TX},h}^{*})^{(*)} \in \Cset^{P \times P}
\label{eq:Psi-0}
\\
\bm{\Xi}_{\text{TX}^{(*)},k,h}[1] & \eqdef
g_{\text{TX},k} \, (g_{\text{TX},h}^{*})^{(*)}
\, (e^{j \, 2 \pi \nu_{\text{TX},h}})^{(*)} \,
\mathbf{D}_{\text{TX},k} \, \mathbf{T}_{\text{TX},k,1} \, \bm{\Omega} \,
(\bm{\Omega}^\herm \, \mathbf{T}_{\text{TX},h,0}^\herm  \,
\mathbf{D}_{\text{TX},h}^{*})^{(*)} \in \Cset^{P \times P}
\label{eq:Psi+1}
\earr
\hrulefill
\end{figure*}
for $k,h \in \{1,2,\ldots, K_\text{TX}\}$ and
$\text{TX} \in \{\text{U},\text{J}\}$,
$\Rb_{\overline{\wb}\overline{\wb}^{(*)}}[r] \eqdef \Es(\overline{\wb}[n] \,
\{\overline{\wb}^\herm[n-r]\}^{(*)}) \in \Cset^{P \times P}$
is given by
$\Rb_{\overline{\wb}\overline{\wb}}[r] =\sigma_w^2 \, \I_{P} \, \delta_r$
when conjugation is absent,
whereas it boils down to
$ \Rb_{\overline{\wb}\overline{\wb}^{*}}[r] = \Zero_{P \times P}$, for
any $r \in \Zset$,  when conjugation is present.
The frequency set $\mathcal{A}_{\overline{\yb}\overline{\yb}^{(*)}}$
of the AP matrix \eqref{eq:Ryy}
can be written as reported in \eqref{eq:Ayy} at the top of the next page,
\begin{figure*}[!t]
\normalsize
\begin{multline}
\mathcal{A}_{\overline{\yb}\overline{\yb}^{(*)}} =
\Big \{\text{$\alpha_{\text{U}^{(*)},k_1,h_1} =\nu_{\text{U},k_1} -(-) \nu_{\text{U},h_1}$ (mod $1$),
for $k_1,h_1 \in \{1,2,\ldots, K_\text{U}\}$} \Big \}
\\ \cup
\Big \{\text{$\alpha_{\text{J}^{(*)},k_2,h_2} =\nu_{\text{J},k_2} -(-) \nu_{\text{J},h_2}$ (mod $1$),
for $k_2,h_2 \in \{1,2,\ldots, K_\text{J}\}$} \Big \}
\label{eq:Ayy}
\end{multline}
\hrulefill
\end{figure*}
which collects the pairwise {\em difference} of the Doppler shifts in $\mathcal{A}_\text{U}$ and
$\mathcal{A}_\text{J}$ in the case of the correlation matrix $\Rb_{\overline{\yb}\overline{\yb}}[n, r]$,
whereas it gathers the pairwise {\em sum} of the Doppler shifts in $\mathcal{A}_\text{U}$ and
$\mathcal{A}_\text{J}$ when the conjugate correlation matrix $\Rb_{\overline{\yb}\overline{\yb}^{*}}[n, r]$ is of concern.

Under mild regularity conditions, the AP matrix \eqref{eq:Ryy} can be equivalently written
in terms of the generalized Fourier series expansion \cite{Cord} as follows
\be
\Rb_{\overline{\yb}\overline{\yb}^{(*)}}[n,r] =
\sum_{\alpha \in \mathcal{A}_{\overline{\yb}\overline{\yb}^{(*)}}}
\Rb_{\overline{\yb}\overline{\yb}^{(*)}}^{\alpha}[r] \, e^{j 2 \pi \alpha n}
\ee
where, for $r \in \{-1,0,1\}$,
\barr
\Rb_{\overline{\yb}\overline{\yb}^{(*)}}^{\alpha}[r] & \eqdef
\left \langle \Rb_{\overline{\yb}\overline{\yb}^{(*)}}[n,r]
\, e^{-j 2 \pi \alpha n} \right \rangle
\nonumber \\ & =
\lim_{N \to +\infty}
\frac{1}{2 N+1} \sum_{n=-N}^{N} \Rb_{\overline{\yb}\overline{\yb}^{(*)}}[n,r]
\, e^{-j 2 \pi \alpha n}
\label{eq:Ralphagen}
\earr
is the $\alpha$th generalized Fourier series coefficient of the time-varying
correlation matrix $\Rb_{\overline{\yb}\overline{\yb}^{(*)}}[n,r]$, which is also
referred to as the {\em cyclic correlation matrix} (CCM) at cycle frequency $\alpha$
when conjugation is absent, whereas it is called the
{\em conjugate cyclic correlation matrix} (CCCM)
at cycle frequency $\alpha$
when conjugation is present.
It can be
verified that, for $r \in \{-1,0.1\}$,
\be
\Rb_{\overline{\yb}\overline{\yb}^{(*)}}^{\alpha}[r]  =
\begin{cases}
\bm{\Xi}_{\text{U}^{(*)},k_1,h_1}[r] \:, & \text{for $\alpha=\alpha_{\text{U}^{(*)},k_1,h_1}$}
\\ & \text{$k_1, h_1 \in \{1,2,\ldots, K_\text{U}\}$} \:;
\\
\bm{\Xi}_{\text{J}^{(*)},k_2,h_2}[r] \:, & \text{for $\alpha=\alpha_{\text{J}^{(*)},k_2,h_2}$}
\\ & \text{$k_2, h_2 \in \{1,2,\ldots, K_\text{J}\}$}\:;
\end{cases}
\label{eq:Ralpha}
\ee
whereas
$\Rb_{\overline{\yb}\overline{\yb}^{(*)}}^{\alpha}[r] \equiv \Zero_{P \times P}$
if $\alpha \not \in \mathcal{A}_{\overline{\yb}\overline{\yb}^{(*)}}$.\footnote{Eq.~\eqref{eq:Ralpha} holds in the more general case where the noise vector $\overline{\wb}[n]$ in \eqref{eq:y-IBI}
is a non-stationary colored random process that does not
exhibit almost-cyclostationarity at the cycle frequency
$\alpha \in \mathcal{A}_{\overline{\yb}\overline{\yb}^{(*)}}$.
}
The proposed estimation approaches rely only on the conjugate second-order statistics
of the almost-cyclostationary random process \eqref{eq:y-IBI}. In principle, the knowledge
of the
correlation matrices $\Rb_{\overline{\yb}\overline{\yb}}[n,-1]$,
$\Rb_{\overline{\yb}\overline{\yb}}[n,0]$, and $\Rb_{\overline{\yb}\overline{\yb}}[n,1]$
can also be exploited to improve estimation accuracy and/or relax some
assumptions.\footnote{\label{foot:relax}For instance, the assumption
that  the Doppler shifts are pairwise-sum different might be relaxed
by also using the set
$\mathcal{A}_{\overline{\yb}\overline{\yb}}$.
}
However, these enhancements are not pursued herein.

\vspace{2mm}
{\em Remark~2}: When the transmitted symbols are circular,
the conjugate correlation matrix
of the received signal \eqref{eq:y-IBI} is
identically zero, i.e.,  $\Rb_{\overline{\yb}\overline{\yb}^*}[n, r] \equiv \Zero_{P \times P}$.
In this case, as an alternative, the proposed
estimation algorithms can be suitably modified
to exploit \textit{higher-order cyclostationarity properties} of the
received signal \cite{Spooner}.

\subsection{Estimation of Doppler shifts}
\label{sec:Doppler}

Estimation
of Doppler shifts $f_{\text{TX},k}$
is equivalent to that of $\nu_{\text{TX},k}$,  for
$k \in \{1,2,\ldots, K_\text{TX}\}$ and
$\text{TX} \in \{\text{U},\text{J}\}$.
Such a knowledge can be acquired
by  relying on  the AP conjugate
correlation matrices
$\Rb_{\overline{\yb}\overline{\yb}^*}[n,-1]$,
$\Rb_{\overline{\yb}\overline{\yb}^*}[n,0]$, and $\Rb_{\overline{\yb}\overline{\yb}^*}[n,1]$
in \eqref{eq:Ryy},
where  the cardinality of
the frequency set $\mathcal{A}_{\overline{\yb}\overline{\yb}^*}$
defined in \eqref{eq:Ayy}
is equal to
\be
L_\text{a} \eqdef [\Ka \, (\Ka+1)+\Ks\, (\Ks+1)]/2
\label{eq:La}
\ee
since its elements
are different by assumption.
In the case of frequency-flat channels, i.e., $K_\text{U}=K_\text{J}=1$,
it follows that
$\mathcal{A}_{\overline{\yb}\overline{\yb}^*}=\left\{
2 \, \nu_{\text{U},1}, 2 \, \nu_{\text{J},1}\right\}$.
The complication due to the presence
of multipath stems from the fact that,
when $K_\text{U}> 1$ and $K_\text{J}>1$, the set
$\mathcal{A}_{\overline{\yb}\overline{\yb}^*}$ contains also 
the pairwise sums of
the elements of $\mathcal{A}_\text{U}$ and $\mathcal{A}_\text{J}$, which
significantly complicates
estimation of the individual
Doppler shift sets $\mathcal{A}_\text{U}$ and $\mathcal{A}_\text{J}$
from  the received data.

Herein, we first derive 
the proposed algorithm to estimate
the Doppler shifts under the assumption that $\Rb_{\overline{\yb}\overline{\yb}^*}^{\alpha}[-1]$, $\Rb_{\overline{\yb}\overline{\yb}^*}^{\alpha}[0]$, and
$\Rb_{\overline{\yb}\overline{\yb}^*}^{\alpha}[1]$ are perfectly known; the synthesis of its
data-estimated version
is briefly discussed at the end of this subsection.
Knowledge of the CCCMs $\Rb_{\overline{\yb}\overline{\yb}^*}^{\alpha}[-1]$,
$\Rb_{\overline{\yb}\overline{\yb}^*}^{\alpha}[0]$,
and $\Rb_{\overline{\yb}\overline{\yb}^*}^{\alpha}[1]$ enables one
to blindly retrieve
the unknown cycle frequencies
$\left\{\alpha_{\text{U}^{(*)},k_1,h_1}\right\}_{k_1,h_1=1}^{K_\text{U}}$
and
$\left\{\alpha_{\text{J}^{(*)},k_1,h_1}\right\}_{k_1,h_1=1}^{K_\text{J}}$
by defining the {\em one-dimensional} objective function:
\begin{multline}
J(\alpha) \eqdef \sum_{r=-1}^{1} \left \|\Rb_{\overline{\yb}\overline{\yb}^*}^{\alpha}[r] \right \|^2
\\ =
\sum_{r=-1}^{1} \trace\left(\Rb_{\overline{\yb}\overline{\yb}^*}^{\alpha}[r]
\left\{\Rb_{\overline{\yb}\overline{\yb}^*}^{\alpha}[r]\right\}^\herm \right)\: ,
\\ \quad \text{with $\alpha \in [-1/2, 1/2)$} \: .
\label{eq:J}
\end{multline}
By using differential calculus arguments,
it can be analytically shown
that $J(\alpha)$ has local maxima
in $[-1/2,1/2)$ at points
$\{\alpha_{\text{U}^{(*)},k_1,h_1}\}_{k_1,h_1=1}^{K_\text{U}}$
and $\{\alpha_{\text{J}^{(*)},k_2,h_2}\}_{k_2,h_2=1}^{K_\text{J}}$.
Hence, the cycle frequencies
of the second-order almost-cyclostationary
process \eqref{eq:y-IBI}  can be acquired by  searching
for the $L_\text{a}$ local maxima of \eqref{eq:J}
over the unit interval $[-1/2,1/2)$.\footnote{Acquisition
of $L_\text{a}$ can be obtained by counting the number of
local maxima of $J(\alpha)$ and/or by using
classical information
theoretic methods for model selection, such as Akaike's and
Rissanen's criteria  \cite{Wax.1985}.
}
Starting from the knowledge of $K_\text{U}$ and $L_\text{a}$,
and recalling \eqref{eq:La},
the number of paths $K_\text{J}$ between the jammer and the CU
can be obtained by taking the positive solution of the second-order equation
$K_\text{J}^2+K_\text{J}=2 \, L_\text{a}-\Ka \, (\Ka+1)$.

Let $\alphalabel \eqdef (\alpha_1, \alpha_2, \ldots, \alpha_{L_\text{a}})^\trasp \in [-1/2,1/2)^{L_\text{a}}$ collect the $L_\text{a}$ maximum points of  $J(\alpha)$ in the interval
$[-1/2, 1/2)$, the next step
is to infer the Doppler shifts gathered into the
two vectors
$\boldsymbol{\nu}_\text{U} \eqdef (\nu_{\text{U},1}, \nu_{\text{U},2}, \ldots,
\nu_{\text{U},K_\text{U}})^\trasp \in [-1/2,1/2)^{\Ka}$
and
$\boldsymbol{\nu}_\text{J} \eqdef (\nu_{\text{J},1}, \nu_{\text{J},2}, \ldots,
\nu_{\text{J},K_\text{J}})^\trasp \in [-1/2,1/2)^{\Ks}$
from the knowledge of $\alphalabel$. Let $\boldsymbol{\nu}_\text{ord} \eqdef (\boldsymbol{\nu}_\text{U}^\trasp,
\boldsymbol{\nu}_\text{J}^\trasp)^\trasp \in [-1/2,1/2)^{K}$.

\vspace{2mm}
\begin{proposition}
\label{prop:1}
Estimation of $\boldsymbol{\nu}_\text{ord}$
from the observed vector
$\alphalabel$ is tantamount to
solving the matrix equation
\be
\Pblabel \, \bm B \, \boldsymbol{\nu}_\text{ord} = \alphalabel
\label{eq:syst}
\ee
with respect to the unknowns $\Pblabel$ and $\boldsymbol{\nu}_\text{ord}$,
where we have defined the matrix
$\bm B  \eqdef (2 \, \bm I_{K}, \bm \Gamma^\trasp)^\trasp \in \Rset^{L_\text{a} \times K}$,
with
\be
\bm \Gamma \eqdef \left( \begin{array}{ccccc} 1& 1& 0& \ldots & 0 \\
1& 0& 1& \ldots & 0 \\
\vdots & \ldots & \ddots & \ddots & \vdots \\
1 & 0 & 0 & \ldots & 1 \\
0 & 1 & 1 & \ldots & 0 \\
\vdots & \ldots & \ddots & \ddots & \vdots \\
\end{array} \right)  \in \mathbb{R}^{(L_\text{a}-K) \times K}
\label{eq:Gamma}
\ee
whereas $\Pblabel \in \Rset^{L_\text{a} \times L_\text{a}}$ is a permutation matrix \cite{Horn}
having exactly one entry in each row and column equal to $1$, and
all the other entries equal to $0$.

\end{proposition}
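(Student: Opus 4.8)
The plan is to verify that the vector $\bm B\,\boldsymbol{\nu}_\text{ord}$ lists, in a fixed structurally-determined order, exactly the same $L_\text{a}$ pairwise Doppler sums that are returned (in arbitrary order) as the local maxima of $J(\alpha)$ in \eqref{eq:J}, and then to identify the reordering between the two lists with the unknown permutation $\Pblabel$. First I would recall, from the conjugate-correlation instance of \eqref{eq:Ayy} together with \eqref{eq:Ralpha}, that the cycle frequencies carrying nonzero CCCMs are precisely the intra-transmitter sums $\nu_{\text{TX},k}+\nu_{\text{TX},h}$ (mod $1$), for $\text{TX}\in\{\text{U},\text{J}\}$ and $k\le h$. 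By assumption A7 these $L_\text{a}$ values in \eqref{eq:La} are pairwise distinct, so the $L_\text{a}$ maxima of $J(\alpha)$ are in one-to-one correspondence with them and are collected, in whatever order the search over $[-1/2,1/2)$ produces, into $\alphalabel$.

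Next I would show that the linear map $\boldsymbol{\nu}_\text{ord}\mapsto\bm B\,\boldsymbol{\nu}_\text{ord}$ reproduces this same set of sums. Writing $\bm B=(2\,\bm I_{K},\bm \Gamma^\trasp)^\trasp$, the top block $2\,\bm I_{K}$ sends $\boldsymbol{\nu}_\text{ord}$ to the $K$ self-sums $2\nu_{\text{ord},k}$ (the diagonal terms $k=h$), while the rows of the bottom block $\bm \Gamma$ in \eqref{eq:Gamma} are the indicator vectors $\bm e_i^\trasp+\bm e_j^\trasp$ of the admissible unordered pairs $i<j$ lying in the same transmitter block, so that $\bm \Gamma\,\boldsymbol{\nu}_\text{ord}$ yields the $L_\text{a}-K$ off-diagonal sums $\nu_{\text{ord},i}+\nu_{\text{ord},j}$. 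The combinatorial bookkeeping here is the part to check most carefully, and I expect it to be the main obstacle: one must confirm that $\bm \Gamma$ enumerates each intra-transmitter pair exactly once and contains no cross-transmitter pair, which is what forces the row count to equal $L_\text{a}-K=\binom{\Ka}{2}+\binom{\Ks}{2}$ and guarantees that the multiset of entries of $\bm B\,\boldsymbol{\nu}_\text{ord}$ coincides with $\mathcal{A}_{\overline{\yb}\overline{\yb}^*}$ with the correct multiplicity.

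Finally I would close the equivalence in both directions. Since $\bm B\,\boldsymbol{\nu}_\text{ord}$ and $\alphalabel$ list the same $L_\text{a}$ distinct values, they differ only by a reordering, so there exists a permutation matrix $\Pblabel$ with $\Pblabel\,\bm B\,\boldsymbol{\nu}_\text{ord}=\alphalabel$, which is exactly \eqref{eq:syst}; conversely, for any admissible pair $(\Pblabel,\boldsymbol{\nu}_\text{ord})$ solving \eqref{eq:syst}, the full column rank of $\bm B$ — its top block $2\,\bm I_{K}$ is already invertible — forces $\boldsymbol{\nu}_\text{ord}$ to equal one half of the top $K$ entries of $\Pblabel^\trasp\,\alphalabel$, so that once $\Pblabel$ is fixed the Doppler vector is uniquely recovered. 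This establishes that blind estimation of $\boldsymbol{\nu}_\text{ord}$ from $\alphalabel$ is tantamount to jointly determining the association permutation $\Pblabel$ and solving the resulting consistent linear system, as claimed. The only delicate point beyond the structure of $\bm \Gamma$ is the (mod $1$) reduction of the sums, and I would note that A7 is precisely the hypothesis ensuring the reduced values remain distinct, so the bijection underlying $\Pblabel$ is well-defined.
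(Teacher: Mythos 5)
Your proposal is correct and follows essentially the same route as the paper's own proof: the paper formalizes your two steps as the composition of a structured linear map $\mathcal{L}_1$ (represented by the full-column-rank matrix $\bm B$, whose top block gives the self-sums $2\nu_{\text{ord},k}$ and whose $\bm\Gamma$ rows give the intra-transmitter pair sums) with an unknown relabeling map $\mathcal{L}_2$ (represented by $\Pblabel$), and inverting $\mathcal{L}_1 \circ \mathcal{L}_2$ is exactly your set-equality-plus-reordering argument. Your explicit bookkeeping that $\bm\Gamma$ contains only intra-transmitter pairs (so its row count is $\binom{\Ka}{2}+\binom{\Ks}{2}=L_\text{a}-K$) and your converse direction via the invertible top block are slightly more detailed than the paper's treatment, but they do not constitute a different method.
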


\begin{proof}
See Appendix~\ref{app:1}.
\end{proof}

\vspace{2mm}
The permutation matrix $\Pblabel$ is completely characterized by
a vector of integers, where the integer at position $i$ is
the column index of the unit element of row $i$ of $\Pblabel$.
Hence, eq.~\eqref{eq:syst} represents a {\em mixed-integer}
system of $L_\text{a}$ linear equations.

System~\eqref{eq:syst} is underdetermined in the unknowns $\Pblabel$  and
$\boldsymbol{\nu}_\text{ord}$,
and can be solved in the
least-squares (LS) sense (see, e.g., \cite{Kay})
by formulating the
following constrained problem:
\be
\arg \min_{\shortstack{\footnotesize
$\Pb \in \mathcal{P}$ \\ \footnotesize $\boldsymbol{\nu} \in [-1/2,1/2)^{K}$}}
\left \|\Pb \, \bm B \, \boldsymbol{\nu}- \alphalabel \right \|^2 \:
\label{eq:LS}
\ee
where $\mathcal{P}$ denote the subset of all $L_\text{a} \times L_\text{a}$
permutation matrices.

For a given permutation matrix $\Pb$, minimization in \eqref{eq:LS} 
with respect to $\boldsymbol{\nu}$ is an overdetermined
LS problem, whose solution is given by
$\boldsymbol{\nu}_\text{LS} = \left( {\bm B}^\trasp\, \bm B \right)^{-1} {\bm B}^\trasp \,
\Pb^\trasp \, \alphalabel$ \cite{Kay}.
Substituting such a LS solution into \eqref{eq:LS}, we obtain
the {\em integer} linear programming problem:
\be
\arg \max_{\Pb \in \mathcal{P}}
\left( \alphalabel^\trasp \, \Pb \, \boldsymbol{\Pi}
\, \Pb^\trasp \, \alphalabel \right)
\label{eq:LSmax}
\ee
where $\boldsymbol{\Pi} \eqdef {\bm B} \left( {\bm B}^\trasp\, \bm B \right)^{-1} {\bm B}^\trasp
\in \Rset^{L_\text{a} \times L_\text{a}}$
is the orthogonal projector onto the range (or column space)  of $\bm{B}$.
Since $\Pb^\trasp=\Pb^{-1}$ permutes columns in the same way as the
permutation matrix $\Pb$ permutes rows, the transformation
$\boldsymbol{\Pi} \rightarrow \Pb \, \boldsymbol{\Pi}
\, \Pb^\trasp$ permutes the rows and the columns of $\boldsymbol{\Pi}$
in the same way.
By virtue of the Rayleigh-Ritz theorem \cite{Horn}, one has
\begin{multline}
\alphalabel^\trasp \, \Pb \, \boldsymbol{\Pi}
\, \Pb^\trasp \, \alphalabel \le \lambda_\text{max}(\Pb \, \boldsymbol{\Pi}
\, \Pb^\trasp) \, \|\alphalabel\|^2 \\ =\lambda_\text{max}(\boldsymbol{\Pi}) \, \|\alphalabel\|^2
= \|\alphalabel\|^2=\|\boldsymbol{\alpha}_\text{ord}\|^2
\end{multline}
where we have used the facts that the transformation $\boldsymbol{\Pi} \rightarrow \Pb \, \boldsymbol{\Pi}
\, \Pb^\trasp$ does not alter the eigenvalues of $\boldsymbol{\Pi}$ and, additionally, the eigenvalues
of an orthogonal projector matrix are $0$ or $1$.
Taking into account \eqref{eq:L1}--\eqref{eq:L2} and remembering
that $\Pblabel^\trasp \, \Pblabel=\I_{L_\text{a}}$,
it is readily verified that the cost function in \eqref{eq:LSmax} is maximized when $\Pb=\Pblabel$.
However, it can be numerically verified that, owing to the
particular structure of matrix $\bm B$ in \eqref{eq:sys-1},
there exist permutation matrices $\Pb \neq \Pblabel$
such that $\Pb \, \boldsymbol{\Pi}
\, \Pb^\trasp =\Pblabel \, \boldsymbol{\Pi}
\, \Pblabel^\trasp$.
In a nutshell, the optimization problem \eqref{eq:LSmax} admits
more than one solution:
this result mathematically certifies the fact that the composite mapping
$\mathcal{L}_1 \circ \mathcal{L}_2$ is not invertible, as previously announced.

Let  $\Pb_{\text{LS},0}, \Pb_{\text{LS},1}, \ldots, \Pb_{\text{LS},Q-1}$
denote the $Q$ permutation matrices achieving the same optimal value $\|\alphalabel\|^2$
in \eqref{eq:LSmax}, where $\Pb_{\text{LS},0} \eqdef \Pblabel$ and
$Q$ can be upper bounded as $Q < K!$, and
consider the corresponding estimates of the Doppler shifts
\begin{multline}
\boldsymbol{\nu}_{\text{LS},q} = \left( {\bm B}^\trasp\, \bm B \right)^{-1} {\bm B}^\trasp \,
\Pb_{\text{LS},q}^\trasp \, \alphalabel \: ,
\\ \quad \text{for $q \in \{0,1,\ldots, Q-1\}$} \: .
\label{eq:solfin}
\end{multline}
It can be observed by direct inspection that
$\boldsymbol{\nu}_{\text{LS},0}=\boldsymbol{\nu}_\text{ord}$,
whereas
$\boldsymbol{\nu}_{\text{LS},1}, \boldsymbol{\nu}_{\text{LS},2}, \ldots,
\boldsymbol{\nu}_{\text{LS},Q-1}$ are different permutations of the desired
vector $\boldsymbol{\nu}_\text{ord}$.
This inherent {\em permutation ambiguity} has to be solved in order
to correctly distinguish the Doppler shifts of the UAV-to-CU
transmission from those of the jammer-to-CU one.
Remarkably, it is not necessary to completely remove such an ambiguity for
detection purposes by singling $\boldsymbol{\nu}_\text{ord}$ out
in $\{\boldsymbol{\nu}_{\text{LS},q}\}_{q=0}^{Q-1}$, but it is instead sufficient to
separate the common entries of the
vectors $\{\boldsymbol{\nu}_{\text{LS},q}\}_{q=0}^{Q-1}$ in two sets
$\widetilde{\mathcal{A}}_\text{U}$ and $\widetilde{\mathcal{A}}_\text{J}$:
the former one gathers all the Doppler shifts of the UAV-to-CU transmission
in any order ($\widetilde{\mathcal{A}}_\text{U}$ is
the unordered version of $\mathcal{A}_\text{U}$);
the latter one collects all the Doppler shifts of the jammer-to-CU transmission
in any order ($\widetilde{\mathcal{A}}_\text{J}$ is
the unordered version of $\mathcal{A}_\text{J}$).
\begin{algorithm}[t]
\caption{The proposed algorithm for retrieval of Doppler shifts}
\label{table:tab_2}
\begin{algorithmic}
\STATE  {\bf Input quantities}: $K_\text{U}$, $\nu_{\text{U},1}$,
$\Rb_{\overline{\yb}\overline{\yb}^*}^{\alpha}[-1]$,
$\Rb_{\overline{\yb}\overline{\yb}^*}^{\alpha}[0]$, and $\Rb_{\overline{\yb}\overline{\yb}^*}^{\alpha}[1]$.

\STATE
{\bf Output quantities}: $\widetilde{\mathcal{A}}_\text{U}$ and
$\widetilde{\mathcal{A}}_\text{J}$.

\STATE
{\bf Initialization}: $i=1$ and $\widetilde{\mathcal{A}}_\text{U}=\emptyset$.

\begin{enumerate}[1.]

\itemsep=1mm

\item
Search for the $L_\text{a}$ local maxima of \eqref{eq:J}
over $[-1/2,1/2)$
and collect them in
$\alphalabel$.

\item
Evaluate the number of paths $K_\text{J}$ between the jammer and the CU
by taking the positive solution of
$K_\text{J}^2+K_\text{J}=2 \, L_\text{a}-\Ka \, (\Ka+1)$.

\item
Generate a tentative permutation matrix $\Pb \in \Rset^{L_\text{a} \times L_\text{a}}$.

\item
{\bf If} $\alphalabel^\trasp \, \Pb \, \boldsymbol{\Pi}
\, \Pb^\trasp \, \alphalabel = \|\alphalabel\|^2$, {\bf then} $\Pb_\text{LS}=\Pb$
is a solution of \eqref{eq:LSmax} and
go on Step 5, {\bf else} go back to Step 3 by generating a different tentative permutation matrix.

\item
Compute the vector of Doppler shifts $\boldsymbol{\nu}_{\text{LS}} = \left( {\bm B}^\trasp\, \bm B \right)^{-1} {\bm B}^\trasp \,  \Pb_\text{LS}^\trasp \, \alphalabel$.

\WHILE{cardinality of $\widetilde{\mathcal{A}}_\text{U}$ is less than $K_\text{U}$}
        \STATE {{\bf if} $\{\boldsymbol{\nu}_{\text{LS}}\}_i
        +\nu_{\text{U},1}$ is an entry of $\alphalabel$

        {\bf then}

        add $\{\boldsymbol{\nu}_{\text{LS}}\}_i$ to $\widetilde{\mathcal{A}}_\text{U}$ and set $i=i+1$.
}

 \ENDWHILE

\item
$\widetilde{\mathcal{A}}_\text{J}$ is the set of elements of $\boldsymbol{\nu}_{\text{LS}}$ not in
$\widetilde{\mathcal{A}}_\text{U}$.

\end{enumerate}
\end{algorithmic}
\end{algorithm}
This classification problem can be solved by resorting to Algorithm~\ref{table:tab_2}
reported at the top of this page, which is based on the additional knowledge of the Doppler shift $f_{\text{U},1}$ (or, equivalently, $\nu_{\text{U},1}$) of the LoS path between the UAV
and the CU.\footnote{\label{foot:1}Since floating point calculations are carried out
with a finite precision depending on the used compilers and CPU architectures, the comparison between
$\{\boldsymbol{\nu}_{\text{LS}}\}_i +\nu_{\text{U},1}$ and the $\ell$th entry $\{\alphalabel\}_\ell$ of $\alphalabel$ in
Algorithm~\ref{table:tab_2} has to be numerically done by checking if
the relative error  $\left|\{\boldsymbol{\nu}_{\text{LS}}\}_i
+\nu_{\text{U},1}-\{\alphalabel\}_\ell\right|/\left|\{\alphalabel\}_\ell\right|$
is smaller than a sufficiently small threshold $\epsilon>0$.}
The Doppler shift of such a LoS component depends on the relative speed, the transmitting
wavelength, and the moving direction of the UAV with respect to the CU.
Such parameters are commonly measured and/or estimated by the UAV relying on
a set of sensors in conjunction with appropriate on-board algorithms  and they are typically communicated
to the CU through a secure control  and non-payload communication link \cite{Zeng}.

A key observation underlying our blind estimation approach
is that, in practice, $\Rb_{\overline{\yb}\overline{\yb}^*}^{\alpha}[r]$
can be directly estimated from the received data
using the consistent estimate (see, e.g., \cite{Dand})
\be
\widehat{\Rb}_{\overline{\yb}\overline{\yb}^*}^{\alpha}[r] =
\frac{1}{2 N+1} \sum_{n=-N}^{N}
\overline{\yb}[n] \,  \overline{\yb}^\trasp[n-r] \, e^{-j \, 2 \pi \alpha n}
\label{eq:Rciclest}
\ee
for $r \in \{-1,0,1\}$. Hence, in practice, an estimate $\alphalabelhat$ of the vector
$\alphalabel$ can be directly obtained from data by finding the $\widehat{L}_\text{a}$
distinct peaks of the function
\begin{multline}
\widehat{J}(\alpha) \eqdef \sum_{r=-1}^{1} \left \|\widehat{\Rb}_{\overline{\yb}\overline{\yb}^*}^{\alpha}[r] \right \|^2 \\ =
\sum_{r=-1}^{1} \trace\left(\widehat{\Rb}_{\overline{\yb}\overline{\yb}^*}^{\alpha}[r]
\left\{\widehat{\Rb}_{\overline{\yb}\overline{\yb}^*}^{\alpha}[r]\right\}^\herm \right)
\\ \quad \text{with $\alpha \in [-1/2, 1/2)$}
\label{eq:JN}
\end{multline}
which is the starting point to build the estimates
$\widehat{\mathcal{A}}_\text{U}$ and $\widehat{\mathcal{A}}_\text{J}$
of the  Doppler shift sets
$\widetilde{\mathcal{A}}_\text{J}$ and $\widetilde{\mathcal{A}}_\text{J}$, respectively.
The estimation procedure is a straightforward modification of
Algorithm~\ref{table:tab_2}, with $J(\alpha)$, $L_\text{a}$, $\alphalabel$, and
$\boldsymbol{\nu}_{\text{LS}}$ 
replaced with $\widehat{J}(\alpha)$, $\widehat{L}_\text{a}$,
$\alphalabelhat$, and
$\widehat{\boldsymbol{\nu}}_{\text{LS}}=\left( {\bm B}^\trasp\, \bm B \right)^{-1} {\bm B}^\trasp \,  \Pb_\text{LS}^\trasp \, \alphalabelhat$, respectively, and the comparison
between $\{\widehat{\boldsymbol{\nu}}_{\text{LS}}\}_i +\nu_{\text{U},1}$ and the $\ell$th entry
$\{\alphalabelhat\}_\ell$ of $\alphalabelhat$ is performed 
by checking the corresponding relative error
(see footnote~\ref{foot:1}).
The consistency and asymptotic normality
of estimate for $\widehat{J}(\alpha)$ can be proven by paralleling
the derivations detailed in \cite{Ciblat}.

\begin{algorithm}[t]
\caption{The proposed algorithm for acquisition of time delays and channel gains}
\label{table:tab_3}
\begin{algorithmic}
\STATE  {\bf Input quantities}: $\widetilde{\mathcal{A}}_\text{U}$,
$\widetilde{\mathcal{A}}_\text{J}$, $\Rb_{\overline{\yb}\overline{\yb}^*}^{\alpha}[-1]$,
$\Rb_{\overline{\yb}\overline{\yb}^*}^{\alpha}[0]$, and $\Rb_{\overline{\yb}\overline{\yb}^*}^{\alpha}[1]$.

\STATE
{\bf Output quantities}: $\tau_{\text{TX},k}$ and $g_{\text{TX},k}$,
for each $k \in \{1,2,\ldots, K_\text{TX}\}$ and
$\text{TX} \in \{\text{U},\text{J}\}$.

\begin{enumerate}[1.]

\itemsep=1mm

\item
Build the matrices $\bm{\Phi}_{\text{TX},k}[-1] $, $\bm{\Phi}_{\text{TX},k}[0]$,
and $\bm{\Phi}_{\text{TX},k}[1] $ as defined in
\eqref{eq:Phi-1}-\eqref{eq:Phi+1}.

\item
Construct the matrix $\bm{\Phi}_{\text{TX},k}=\bm{\Phi}_{\text{TX},k}[-1] +
\bm{\Phi}_{\text{TX},k}[0]+ \bm{\Phi}_{\text{TX},k}[1] $ as reported in
\eqref{eq:Phi}.

\item
The time delay $\tau_{\text{TX},k}$ is acquired by
search for the maximum of \eqref{eq:I}
over $\beta \in [0, \Delta_\text{max}]$.

\item
Use $\tau_{\text{TX},k}$ to
form the matrix
$\mathbf{C}_{\text{TX},k}=\bm{W}_P \,  \diag(\vb_{\text{TX},k}) \,
\bm{W}_P^\herm$.

\item
Channel gains are obtained from \eqref{eq:g}
up to a sign.

\end{enumerate}
\end{algorithmic}
\end{algorithm}

\subsection{Joint estimation of channel gains and time delays}
\label{sec:Doppler}
Hereinafter, we assume that the sets $\widetilde{\mathcal{A}}_\text{U}$ and
$\widetilde{\mathcal{A}}_\text{J}$ are perfectly known.
The proposed technique to jointly estimate the channel gains
$g_{\text{TX},k}$ and delays
$\tau_{\text{TX},k}= d_{\text{TX},k} \, T_\text{c} + \chi_{\text{TX},k}$,
for any $k \in \{1,2,\ldots, K_\text{TX}\}$ and
$\text{TX} \in \{\text{U},\text{J}\}$,
also relies on  the AP
conjugate correlation matrices
$\Rb_{\overline{\yb}\overline{\yb}^*}[n,-1]$,
$\Rb_{\overline{\yb}\overline{\yb}^*}[n,0]$, and $\Rb_{\overline{\yb}\overline{\yb}^*}[n,1]$
in \eqref{eq:Ryy}.
Specifically,
let us consider the CCCMs \eqref{eq:Ralpha}, for $r \in\{-1,0,1\}$,
at cycle frequency $\alpha=\alpha_{\text{TX}^*,k,k}=2 \, \nu_{\text{TX},k}$, which,
according to \eqref{eq:Psi-1}--\eqref{eq:Psi+1}, assume the forms
reported in \eqref{eq:Ryyalpha-1}--\eqref{eq:Ryyalpha+1} at the top of the next page,
\begin{figure*}[!t]
\normalsize
\barr
\Rb_{\overline{\yb}\overline{\yb}^*}^{2 \, \nu_{\text{TX},k}}[-1]
& = g_{\text{TX},k}^2
\, e^{j \, 2 \pi \nu_{\text{TX},k}} \,
\mathbf{D}_{\text{TX},k} \, \mathbf{T}_{\text{TX},k,0} \,
\bm{\Omega} \, \bm{\Omega}^\trasp \,
\mathbf{T}_{\text{TX},k,1}^\trasp  \,
\mathbf{D}_{\text{TX},k}
\label{eq:Ryyalpha-1}
\\
\Rb_{\overline{\yb}\overline{\yb}^*}^{2 \, \nu_{\text{TX},k}}[0] & =
g_{\text{TX},k}^2 \,
\mathbf{D}_{\text{TX},k} \, \left(\mathbf{T}_{\text{TX},k,0} \,
\bm{\Omega} \, \bm{\Omega}^\trasp
\, \mathbf{T}_{\text{TX},k,0}^\trasp+
\mathbf{T}_{\text{TX},k,1} \,
\bm{\Omega} \, \bm{\Omega}^\trasp \,
\mathbf{T}_{\text{TX},k,1}^\trasp \right)
\mathbf{D}_{\text{TX},k}
\label{eq:Ryyalpha-0}
\\
\Rb_{\overline{\yb}\overline{\yb}^*}^{2 \, \nu_{\text{TX},k}}[1] & =
g_{\text{TX},k}^2
\, e^{-j \, 2 \pi \nu_{\text{TX},h}} \,
\mathbf{D}_{\text{TX},k} \, \mathbf{T}_{\text{TX},k,1} \,
\bm{\Omega} \, \bm{\Omega}^\trasp \,
\mathbf{T}_{\text{TX},k,0}^\trasp  \,
\mathbf{D}_{\text{TX},k}
\label{eq:Ryyalpha+1}
\earr
\hrulefill
\end{figure*}
for $k \in \{1,2,\ldots, K_\text{TX}\}$ and $\text{TX} \in \{\text{U},\text{J}\}$.
Relying on \eqref{eq:Ryyalpha-1}-\eqref{eq:Ryyalpha+1} and the previously
acquired knowledge of the Doppler shifts,
let us build the matrices
\barr
\bm{\Phi}_{\text{TX},k}[-1] & \eqdef  e^{-j 2 \pi \nu_{\text{TX},k}}  \,
\mathbf{D}_{\text{TX},k}^* \, \Rb_{\overline{\yb}\overline{\yb}^*}^{2 \,
\nu_{\text{TX},k}}[-1] \, \mathbf{D}_{\text{TX},k}^*
\nonumber \\ & =
g_{\text{TX},k}^2 \, \mathbf{T}_{\text{TX},k,0} \,
\bm{\Omega} \, \bm{\Omega}^\trasp \,
\mathbf{T}_{\text{TX},k,1}^\trasp
\label{eq:Phi-1}
\\
\bm{\Phi}_{\text{TX},k}[0] & \eqdef
\mathbf{D}_{\text{TX},k}^* \, \Rb_{\overline{\yb}\overline{\yb}^*}^{2 \, \nu_{\text{TX},k}}[0]
\, \mathbf{D}_{\text{TX},k}^*
\nonumber \\ & =
g_{\text{TX},k}^2  \left(\mathbf{T}_{\text{TX},k,0} \,
\bm{\Omega} \, \bm{\Omega}^\trasp
\, \mathbf{T}_{\text{TX},k,0}^\trasp
\right. \nonumber \\  & + \left.
\mathbf{T}_{\text{TX},k,1} \,
\bm{\Omega} \, \bm{\Omega}^\trasp \,
\mathbf{T}_{\text{TX},k,1}^\trasp \right)
\label{eq:Phi0}
\\
\bm{\Phi}_{\text{TX},k}[1] & \eqdef
e^{j  2 \pi \nu_{\text{TX},k}} \, \mathbf{D}_{\text{TX},k}^* \, \Rb_{\overline{\yb}\overline{\yb}^*}^{2 \, \nu_{\text{TX},k}}[1] \, \mathbf{D}_{\text{TX},k}^*
\nonumber \\ & =
g_{\text{TX},k}^2 \,  \mathbf{T}_{\text{TX},k,1} \,
\bm{\Omega} \, \bm{\Omega}^\trasp \,
\mathbf{T}_{\text{TX},k,0}^\trasp  \: .
\label{eq:Phi+1}
\earr

\vspace{2mm}
\begin{proposition}
\label{prop:2}
Acquisition of the delay
$\tau_{\text{TX},k}$ can be pursued by searching over the
interval $[0, \Delta_\text{max}] \subset [0, T/2)$
for the global maximum of the {\em one-dimensional} cost function defined in
\eqref{eq:I} at the top of the next page,
\begin{figure*}[!t]
\normalsize
\barr
\mathcal{I}_{\text{TX},k}(\beta) & \eqdef
\left| \sum_{p=0}^{P/2-1} \{\bm{W}_P^\herm \, \bm{\Phi}_{\text{TX},k} \, \bm{W}_P^*\}_{p,p}
\, (\Psi_{\text{TX},k}^2[p])^*  \, \{\bm{\Upsilon}\}_{p,p}^* \, e^{j \frac{4 \pi}{T} \beta \, p}
\right|
\nonumber \\ & = |g_{\text{TX},k}|^2
\left| \sum_{p=0}^{P/2-1} \left|\Psi_{\text{TX},k}[p]\right|^4
\, \left|\{\bm{\Upsilon}\}_{p,p}\right|^2 e^{-j \frac{4 \pi}{T} (\tau_{\text{TX},k}-\beta) \, p}
\right|
\:, \quad \text{with $\beta \in [0, \Delta_\text{max}]$}
\label{eq:I}
\earr
\hrulefill
\end{figure*}
where
\barr
\bm{\Phi}_{\text{TX},k} & \eqdef \sum_{r=-1}^{1} \bm{\Phi}_{\text{TX},k}[r]
\\
\bm{\Upsilon} & \eqdef \bm{W}_P^\herm \, \bm{\Omega} \, \bm{\Omega}^\trasp \,
\bm{W}_P^*  \in \Cset^{P \times P}
\label{eq:ups}
\earr
and, for $p \in \{0,1,\ldots, P-1\}$,
the sample $\Psi_{\text{TX},k}[p]$ is given by \eqref{eq:58}--\eqref{eq:59}
in Appendix~\ref{app:3}.
\end{proposition}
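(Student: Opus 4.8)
The plan is to reduce $\mathcal{I}_{\text{TX},k}(\beta)$ to a nonnegatively-weighted sum of phasors whose common phase vanishes exactly at $\beta=\tau_{\text{TX},k}$, after which the claimed maximization follows at once from the triangle inequality. The work therefore splits into an algebraic reduction (turning the top line of \eqref{eq:I} into the bottom line) and an elementary optimization argument.

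First I would collapse the three building blocks \eqref{eq:Phi-1}--\eqref{eq:Phi+1}. Summing them and factoring out the common $g_{\text{TX},k}^2$ and the shared inner term $\bm{\Omega}\,\bm{\Omega}^\trasp$ gives
\[
\bm{\Phi}_{\text{TX},k} = g_{\text{TX},k}^2\,(\mathbf{T}_{\text{TX},k,0}+\mathbf{T}_{\text{TX},k,1})\,\bm{\Omega}\,\bm{\Omega}^\trasp\,(\mathbf{T}_{\text{TX},k,0}+\mathbf{T}_{\text{TX},k,1})^\trasp \: .
\]
The key structural observation is that the combined matrix $\mathbf{T}_{\text{TX},k}\eqdef\mathbf{T}_{\text{TX},k,0}+\mathbf{T}_{\text{TX},k,1}$ is \emph{circulant}: its $(p+1,q+1)$th entry equals $\alpha_{\text{TX},k}[(p-q)-d_{\text{TX},k}]+\alpha_{\text{TX},k}[(p-q)-d_{\text{TX},k}+P]$, i.e. the period-$P$ periodization of the integer-shifted pulse, and the assumption $\Delta_\text{filter}+\Delta_\text{max}<T$ (A8/A10) forces the wrap-around term to vanish for negative lags, so the entry depends only on $(p-q)\bmod P$. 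Hence $\mathbf{T}_{\text{TX},k}=\bm{W}_P\,\bm{\Lambda}_{\text{TX},k}\,\bm{W}_P^\herm$ with $\bm{\Lambda}_{\text{TX},k}$ diagonal, and a short manipulation exploiting the symmetry and unitarity of $\bm{W}_P$ (so that $\bm{W}_P^\herm\,\mathbf{T}_{\text{TX},k}\,\bm{W}_P=\bm{W}_P^\trasp\,\mathbf{T}_{\text{TX},k}^\trasp\,\bm{W}_P^*=\bm{\Lambda}_{\text{TX},k}$) yields $\bm{W}_P^\herm\,\bm{\Phi}_{\text{TX},k}\,\bm{W}_P^* = g_{\text{TX},k}^2\,\bm{\Lambda}_{\text{TX},k}\,\bm{\Upsilon}\,\bm{\Lambda}_{\text{TX},k}$, whose $(p,p)$ entry is $g_{\text{TX},k}^2\,\{\bm{\Lambda}_{\text{TX},k}\}_{p,p}^2\,\{\bm{\Upsilon}\}_{p,p}$.

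Next I would identify the eigenvalue with the sampled frequency response of the fractionally-delayed pulse. By the periodization/sampling duality, $\{\bm{\Lambda}_{\text{TX},k}\}_{p,p}=e^{-j\frac{2\pi}{P}p\,d_{\text{TX},k}}\sum_{h}\psi(h\,T_\text{c}-\chi_{\text{TX},k})\,e^{-j\frac{2\pi}{P}p h}$; peeling off the full linear delay phase $e^{-j\frac{2\pi}{T}p\,\tau_{\text{TX},k}}$ (recall $\tau_{\text{TX},k}=d_{\text{TX},k}\,T_\text{c}+\chi_{\text{TX},k}$ and $T=P\,T_\text{c}$) is precisely what defines the residual factor $\Psi_{\text{TX},k}[p]$ of \eqref{eq:58}--\eqref{eq:59}, so that $\{\bm{\Lambda}_{\text{TX},k}\}_{p,p}^2=\Psi_{\text{TX},k}^2[p]\,e^{-j\frac{4\pi}{T}p\,\tau_{\text{TX},k}}$. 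Substituting this into the defining (top) expression of \eqref{eq:I}, the products $\Psi_{\text{TX},k}^2[p]\,(\Psi_{\text{TX},k}^2[p])^*=|\Psi_{\text{TX},k}[p]|^4$ and $\{\bm{\Upsilon}\}_{p,p}\{\bm{\Upsilon}\}_{p,p}^*=|\{\bm{\Upsilon}\}_{p,p}|^2$ collapse, the delay phase combines with $e^{j\frac{4\pi}{T}\beta p}$ into $e^{-j\frac{4\pi}{T}(\tau_{\text{TX},k}-\beta)p}$, and $|g_{\text{TX},k}^2|=|g_{\text{TX},k}|^2$ factors out, reproducing the bottom line of \eqref{eq:I}.

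Finally, the maximization is immediate. Setting $c_p\eqdef|\Psi_{\text{TX},k}[p]|^4\,|\{\bm{\Upsilon}\}_{p,p}|^2\ge 0$, the triangle inequality gives $\mathcal{I}_{\text{TX},k}(\beta)\le|g_{\text{TX},k}|^2\sum_{p}c_p$, with equality iff all phasors $e^{-j\frac{4\pi}{T}(\tau_{\text{TX},k}-\beta)p}$ carrying nonzero weight coincide; at $\beta=\tau_{\text{TX},k}$ every exponent vanishes, so the bound is attained and $\tau_{\text{TX},k}$ is a global maximizer. Uniqueness over the search range follows because, with two consecutive active bins (e.g. $c_0,c_1>0$), equality forces $\frac{4\pi}{T}(\tau_{\text{TX},k}-\beta)\in 2\pi\Zset$, and since $[0,\Delta_\text{max}]\subset[0,T/2)$ makes $|\tau_{\text{TX},k}-\beta|<T/2$, the only admissible value is $\beta=\tau_{\text{TX},k}$. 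I expect the main obstacle to be the middle step, namely rigorously establishing the circulant structure from the cyclic-prefix length and then extracting a \emph{clean} linear delay phase $e^{-j\frac{4\pi}{T}\tau_{\text{TX},k}p}$ from the fractionally sampled pulse — exactly the content deferred to the definition of $\Psi_{\text{TX},k}[p]$ in Appendix~\ref{app:3}; the algebraic collapse and the triangle-inequality argument are routine by comparison.
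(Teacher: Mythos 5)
Your proposal is correct and follows essentially the same route as the paper's proof: collapsing $\sum_{r}\bm{\Phi}_{\text{TX},k}[r]$ into $g_{\text{TX},k}^2\,\mathbf{C}_{\text{TX},k}\,\bm{\Omega}\,\bm{\Omega}^\trasp\,\mathbf{C}_{\text{TX},k}^\trasp$ with $\mathbf{C}_{\text{TX},k}=\mathbf{T}_{\text{TX},k,0}+\mathbf{T}_{\text{TX},k,1}$ circulant (the paper exhibits this via forward/backward shift matrices rather than your entry-wise periodization, a cosmetic difference), diagonalizing by the DFT so that the eigenvalues factor as $\Psi_{\text{TX},k}[p]\,e^{-j\frac{2\pi}{T}\tau_{\text{TX},k}p}$, and concluding with the triangle inequality. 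Your uniqueness argument is in fact slightly more explicit than the paper's, which only asserts equality iff $\beta=\tau_{\text{TX},k}+i\,T/2$ and implicitly restricts to the search interval $[0,\Delta_\text{max}]\subset[0,T/2)$.
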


\begin{proof}
See Appendix~\ref{app:2}.
\end{proof}

\vspace{2mm}
Once the Doppler shifts and delays are acquired, the fading coefficients
can be estimated  from \eqref{eq:Phi} by solving the LS optimization problem:
\be
\arg \min_{\rho \in \Cset}
\left\|\bm{\Phi}_{\text{TX},k} -
\rho^2 \, \mathbf{C}_{\text{TX},k} \, \bm{\Omega} \, \bm{\Omega}^\trasp \,
\mathbf{C}_{\text{TX},k}^\trasp \right \|^2
\ee
from which one gets
\be
g_{\text{TX},k}^2 =\frac{\trace(\bm{\Phi}_{\text{TX},k} \, \mathbf{C}_{\text{TX},k}^*
\, \bm{\Omega}^* \, \bm{\Omega}^\herm \,
\mathbf{C}_{\text{TX},k}^\herm)}
{\trace(\mathbf{C}_{\text{TX},k} \, \bm{\Omega} \, \bm{\Omega}^\trasp \,
\mathbf{C}_{\text{TX},k}^\trasp \mathbf{C}_{\text{TX},k}^*
\, \bm{\Omega}^* \, \bm{\Omega}^\herm \,
\mathbf{C}_{\text{TX},k}^\herm)}
\label{eq:g}
\ee
for $k \in \{1,2,\ldots, K_\text{TX}\}$ and
$\text{TX} \in \{\text{U},\text{J}\}$.
Since \eqref{eq:g} provides the square of
the complex gain $g_{\text{TX},k}$,
there
is a $\pi$-radian phase ambiguity associated with the
recovery of $\measuredangle g_{\text{TX},k}$.
Such a sign ambiguity can be removed
by building all the $2^{\Ka+\Ks}$ possible
matrices $\Htilde[n] $ in
\eqref{eq:ytilde} and, then, choosing
the one that yields the best
pre-detection performance.
The whole estimation process is summarized in
Algorithm~\ref{table:tab_3}
reported at the top of this page.

In practice, an estimate $\widehat{\tau}_{\text{TX},k}$ of the
delay $\tau_{\text{TX},k}$ is derived by searching for the peak of
the finite-sample version $\widehat{\mathcal{I}}_{\text{TX},k}(\beta)$ of the cost function
$\mathcal{I}_{\text{TX},k}(\beta)$ defined in \eqref{eq:I}, where
$\widehat{\mathcal{I}}_{\text{TX},k}(\beta)$ is obtained by replacing
$\Rb_{\overline{\yb}\overline{\yb}^*}^{2 \, \nu_{\text{TX},k}}[r]$
in \eqref{eq:Phi-1}--\eqref{eq:Phi+1},
for $r \in \{-1,0,1\}$,
with their corresponding estimates \eqref{eq:Rciclest},
evaluated at the
estimated cycle frequency
$\alpha= 2\, \widehat{\nu}_{\text{TX},k}$.
Additionally, the estimate $\widehat{\nu}_{\text{TX},k}$ of the
$k$th Doppler shift associated with the TX-to-CU channel -- which
has been previously achieved as explained
in Subsection~\ref{sec:Doppler} -- has to be used to replace
$\mathbf{D}_{\text{TX},k}$ in \eqref{eq:Phi-1}--\eqref{eq:Phi+1}
with its estimate
$\widehat{\mathbf{D}}_{\text{TX},k} \eqdef \diag(1, e^{j \, \frac{2 \pi}{P} \widehat{\nu}_{\text{TX},k}}, \ldots, e^{j \, \frac{2 \pi}{P} \widehat{\nu}_{\text{TX},k} (P-1)})$.
So doing, one constructs
the matrices
\barr
\widehat{\bm{\Phi}}_{\text{TX},k}[-1] & =e^{-j \, 2 \pi \widehat{\nu}_{\text{TX},k}}  \,
\widehat{\mathbf{D}}_{\text{TX},k}^* \, \widehat{\Rb}_{\overline{\yb}\overline{\yb}^*}^{2 \,
\widehat{\nu}_{\text{TX},k}}[-1] \, \widehat{\mathbf{D}}_{\text{TX},k}^*
\nonumber \\
\widehat{\bm{\Phi}}_{\text{TX},k}[0]  & =
\widehat{\mathbf{D}}_{\text{TX},k}^* \, \widehat{\Rb}_{\overline{\yb}\overline{\yb}^*}^{2 \, \widehat{\nu}_{\text{TX},k}}[0]
\, \widehat{\mathbf{D}}_{\text{TX},k}^*
\\
\widehat{\bm{\Phi}}_{\text{TX},k}[1] & =
e^{j \, 2 \pi \widehat{\nu}_{\text{TX},k}} \, \widehat{\mathbf{D}}_{\text{TX},k}^* \, \widehat{\Rb}_{\overline{\yb}\overline{\yb}^*}^{2 \, \widehat{\nu}_{\text{TX},k}}[1] \,
\widehat{\mathbf{D}}_{\text{TX},k}^*
\\
\widehat{\bm{\Phi}}_{\text{TX},k} & =\widehat{\bm{\Phi}}_{\text{TX},k}[-1] +
\widehat{\bm{\Phi}}_{\text{TX},k}[0]+ \widehat{\bm{\Phi}}_{\text{TX},k}[1] \: .
\earr
After calculating $\widehat{\tau}_{\text{TX},k}$, the corresponding estimate
$\widehat{g}_{\text{TX},k}$ of the channel gain $g_{\text{TX},k}$ is obtained
from \eqref{eq:g} by replacing $\bm{\Phi}_{\text{TX},k}$ and
$\mathbf{C}_{\text{TX},k}$ with their corresponding estimates
$\widehat{\bm{\Phi}}_{\text{TX},k}$ and
$\widehat{\mathbf{C}}_{\text{TX},k}=\bm{W}_P \,  \diag(\widehat{\vb}_{\text{TX},k}) \,
\bm{W}_P^\herm$, respectively,
where the $p$th entry of
$\widehat{\vb}_{\text{TX},k}$
is given by $\{\widehat{\vb}_{\text{TX},k}\}_p= \widehat{\Psi}_{\text{TX},k}[p] \,
e^{-j \frac{2 \pi}{T} \widehat{\tau}_{\text{TX},k} \, p}$, for $p \in \{0,1,\ldots, P-1\}$,
and $\widehat{\Psi}_{\text{TX},k}[p]$ is obtained from \eqref{eq:59}
in the Appendix  by replacing $\chi_{\text{TX},k}$ with $\widehat{\chi}_{\text{TX},k}$.

Study of the asymptotic
behavior of the proposed estimators 
of channel gains and time delays
is a challenging mathematical problem. The performances of these estimators
are assessed by simulations in Section~\ref{sec:simul}.

\begin{table}[t]
\caption{MSE of UAV channel parameters (high-speed jammer).}
\label{tab:Uav-h}
\centering{}%
\begin{tabular}{cccc}
\hline
\noalign{\vskip\doublerulesep}
\textbf{MSE} (dB) & SJR = $-3$ dB & SJR = $0$ dB & SJR = $3$ dB \tabularnewline[\doublerulesep]
\hline
\noalign{\vskip\doublerulesep}
\hline
\noalign{\vskip\doublerulesep}
Channel gain  &
$-33.4944$ &  $-37.1155$ & $-39.7152$
\tabularnewline[\doublerulesep]
\hline
\noalign{\vskip\doublerulesep}
Doppler shift
&
$-63.8131$ & $-63.8131$ & $-63.8859$
\tabularnewline[\doublerulesep]
\hline
\noalign{\vskip\doublerulesep}
Time delay
&
$-39.7772$ & $-43.4395$ & $-46.1916$
\tabularnewline[\doublerulesep]
\hline
\hline
\end{tabular}
\end{table}
\begin{table}[t]
\caption{MSE of jammer channel parameters (high-speed jammer).}
\label{tab:Jammer-h}
\centering{}%
\begin{tabular}{cccc}
\hline
\noalign{\vskip\doublerulesep}
\textbf{MSE} (dB) & SJR = $-3$ dB & SJR = $0$ dB & SJR = $3$ dB \tabularnewline[\doublerulesep]
\hline
\noalign{\vskip\doublerulesep}
\hline
\noalign{\vskip\doublerulesep}
Channel gain  &
$-41.5433$ & $-38.4094$ & $-34.2107$
\tabularnewline[\doublerulesep]
\hline
\noalign{\vskip\doublerulesep}
Doppler shift
&
$-70.0077$ & $-70.0077$ & $-69.9169$
\tabularnewline[\doublerulesep]
\hline
\noalign{\vskip\doublerulesep}
Time delay
&
$-46.6658$ & $-43.1931$ & $-38.3779$
\tabularnewline[\doublerulesep]
\hline
\hline
\end{tabular}
\end{table}
\begin{figure*}[!t]
\begin{minipage}[b]{9cm}
\centering
\includegraphics[width=\linewidth]{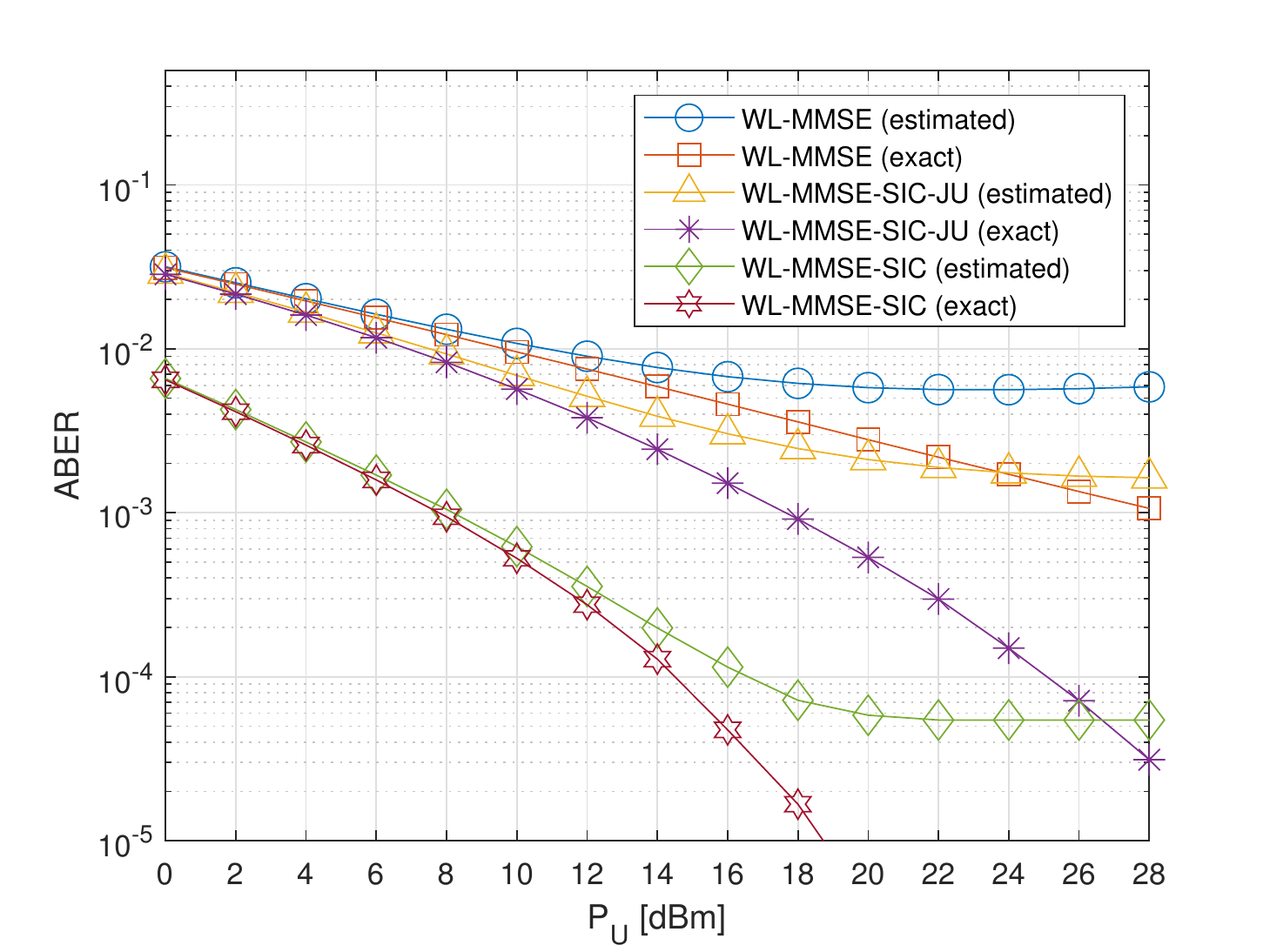}
\end{minipage}
\begin{minipage}[b]{8cm}
\centering
\includegraphics[width=0.99\linewidth]{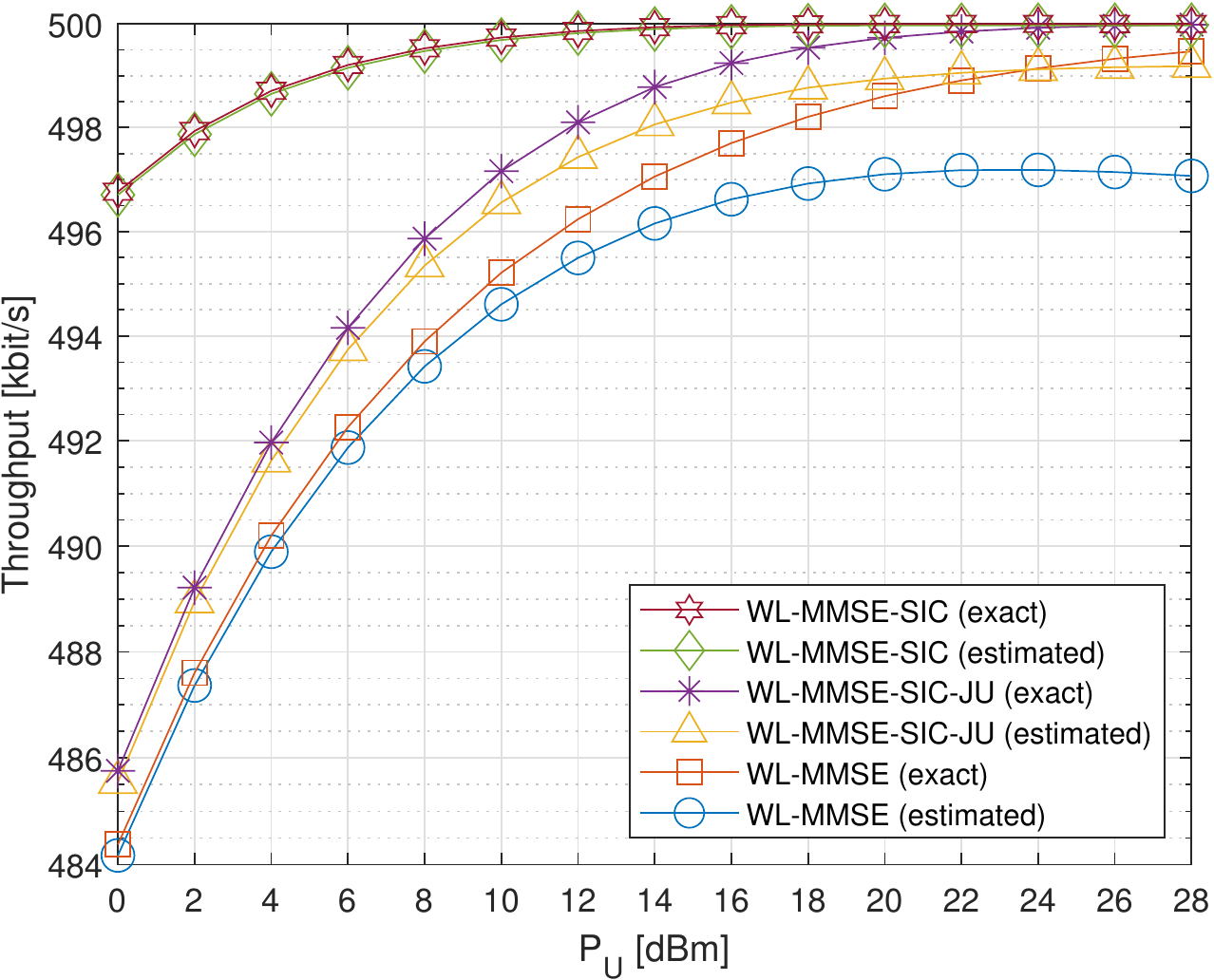}
\end{minipage}
\caption{ABER (left) and throughput  (right) versus SNR (SJR = $-3$ dB, high-speed jammer).
}
\label{fig:fig_2}
\end{figure*}
\begin{figure*}[!t]
\begin{minipage}[b]{9cm}
\centering
\includegraphics[width=\linewidth]{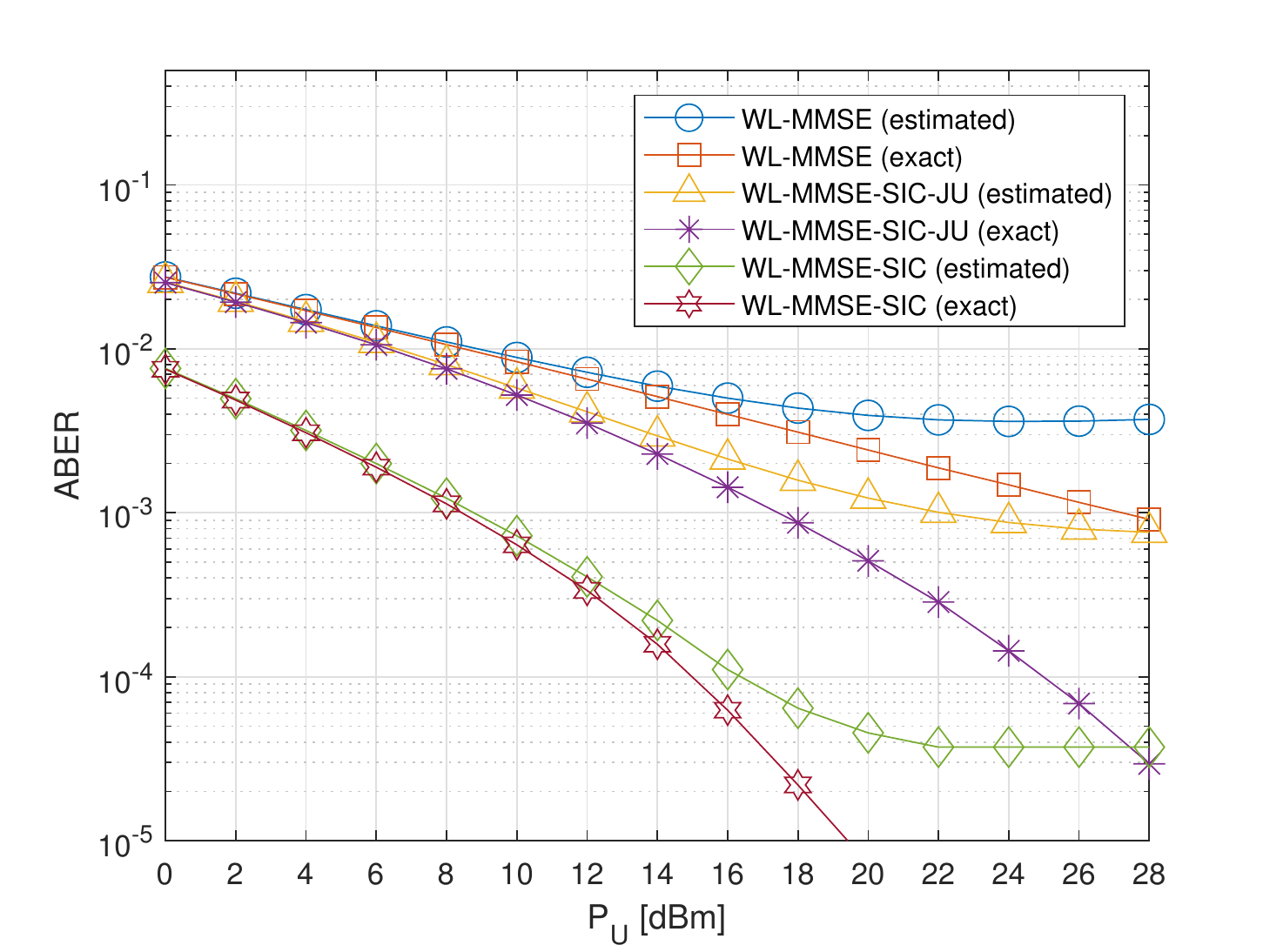}
\end{minipage}
\begin{minipage}[b]{8cm}
\centering
\includegraphics[width=0.99\linewidth]{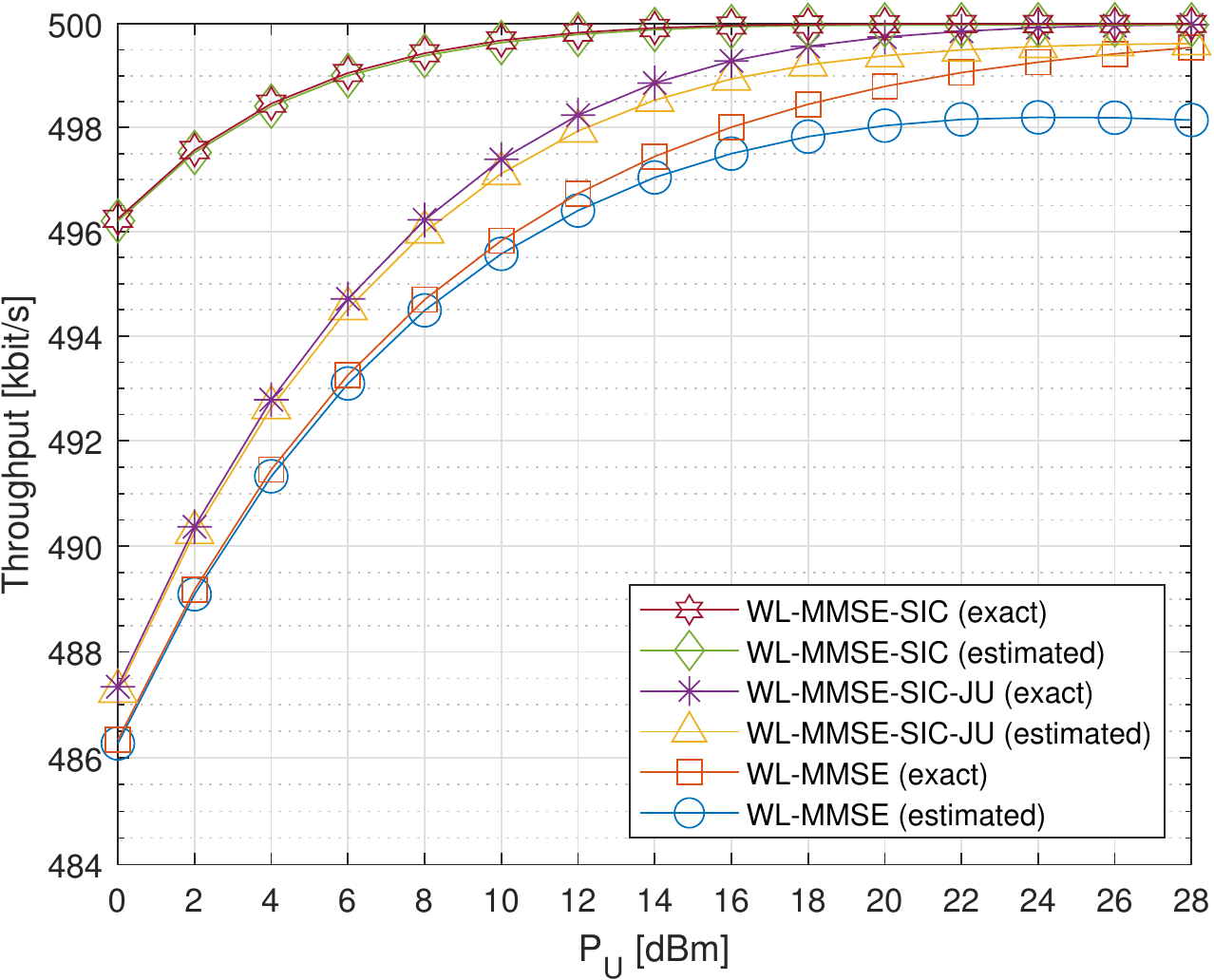}
\end{minipage}
\caption{ABER (left) and throughput  (right) versus SNR (SJR = $0$ dB, high-speed jammer).
}
\label{fig:fig_3}
\end{figure*}
\begin{figure*}[!t]
\begin{minipage}[b]{9cm}
\centering
\includegraphics[width=\linewidth]{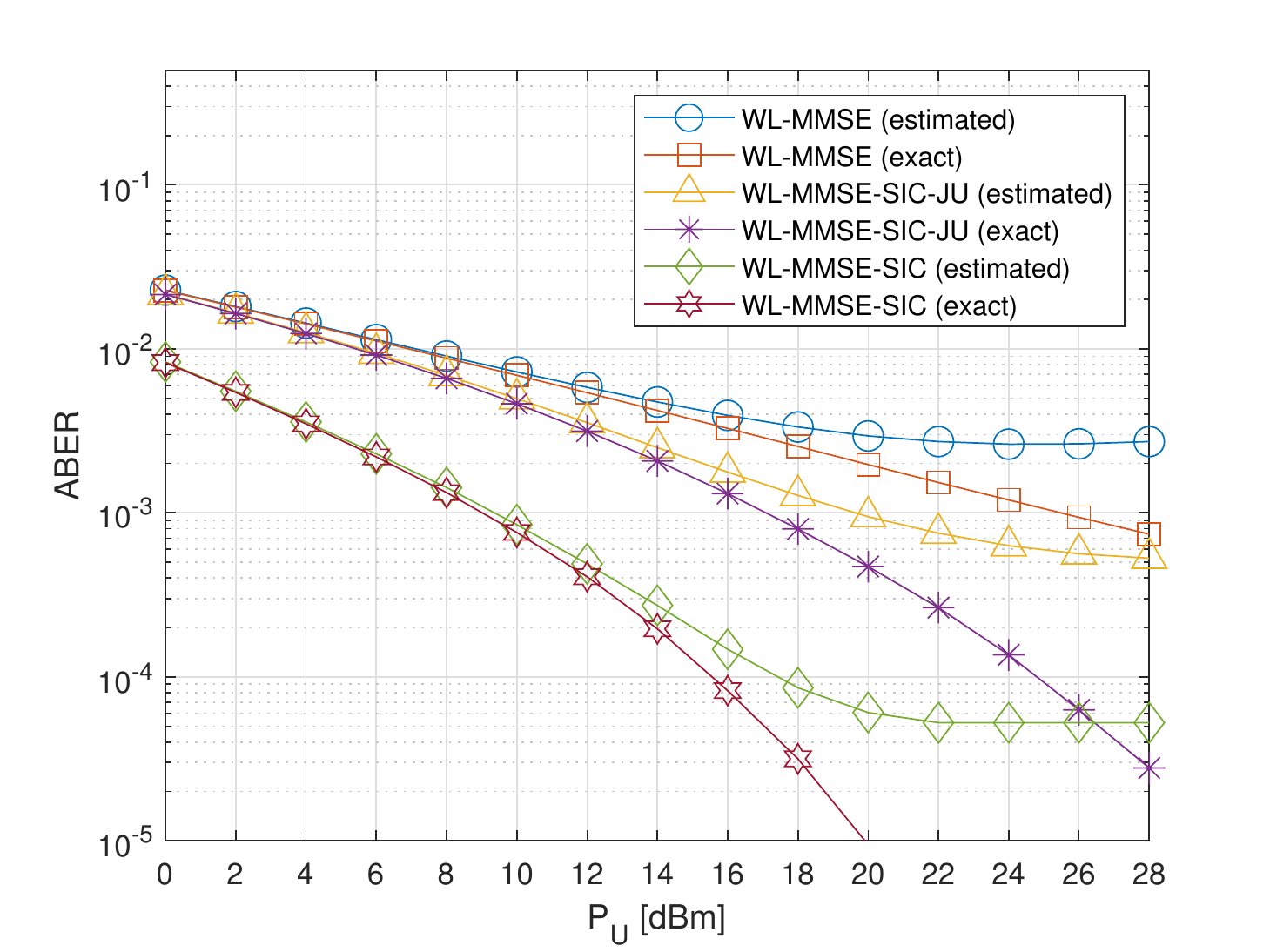}
\end{minipage}
\begin{minipage}[b]{8cm}
\centering
\includegraphics[width=0.99\linewidth]{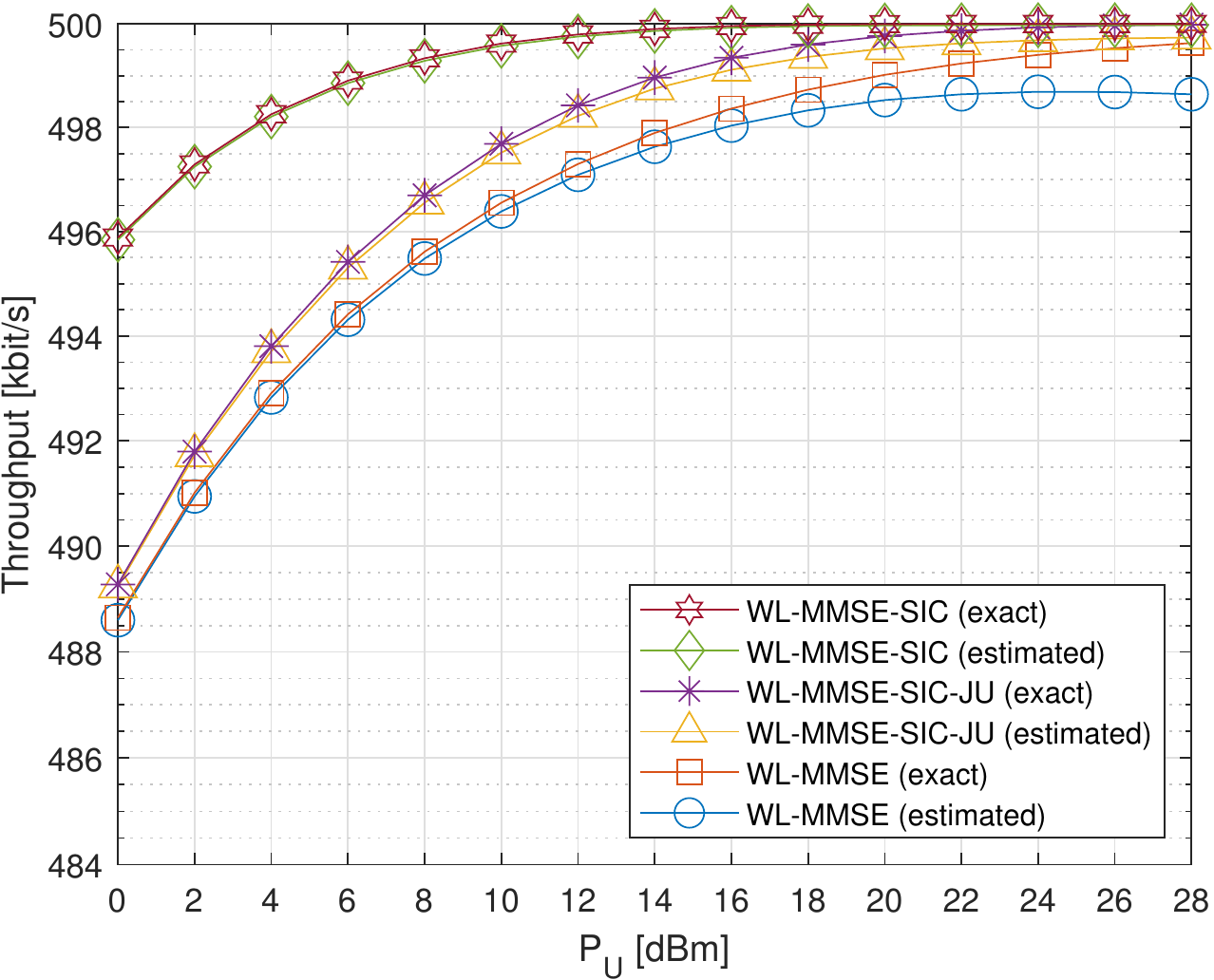}
\end{minipage}
\caption{ABER (left) and throughput  (right) versus SNR (SJR = $3$ dB, high-speed jammer).
}
\label{fig:fig_4}
\end{figure*}

\begin{table}[t]
\caption{MSE of UAV channel parameters (low-speed jammer).}
\label{tab:Uav-l}
\centering{}
\begin{tabular}{cccc}
\hline
\noalign{\vskip\doublerulesep}
\textbf{MSE} (dB) & SJR = $-3$ dB & SJR = $0$ dB & SJR = $3$ dB \tabularnewline[\doublerulesep]
\hline
\noalign{\vskip\doublerulesep}
\hline
\noalign{\vskip\doublerulesep}
Channel gain  &
$-35.4503$ &  $-39.4990$ & $-42.3258$
\tabularnewline[\doublerulesep]
\hline
\noalign{\vskip\doublerulesep}
Doppler shift
&
$-63.4707$ & $-63.4707$ & $-63.4707$
\tabularnewline[\doublerulesep]
\hline
\noalign{\vskip\doublerulesep}
Time delay
&
$-38.7822$ & $-43.5352$ & $-46.5625$
\tabularnewline[\doublerulesep]
\hline
\hline
\end{tabular}
\end{table}
\begin{table}[t]
\caption{MSE of jammer channel parameters (low-speed jammer).}
\label{tab:Jammer-l}
\centering{}
\begin{tabular}{cccc}
\hline
\noalign{\vskip\doublerulesep}
\textbf{MSE} (dB) & SJR = $-3$ dB & SJR = $0$ dB & SJR = $3$ dB \tabularnewline[\doublerulesep]
\hline
\noalign{\vskip\doublerulesep}
\hline
\noalign{\vskip\doublerulesep}
Channel gain  &
$-41.9432$ & $-38.8324$ & $-34.4111$
\tabularnewline[\doublerulesep]
\hline
\noalign{\vskip\doublerulesep}
Doppler shift
&
$-57.1067$ & $-57.2068$ & $-57.2068$
\tabularnewline[\doublerulesep]
\hline
\noalign{\vskip\doublerulesep}
Time delay
&
$-45.9246$ & $-43.2448$ & $-39.4336$
\tabularnewline[\doublerulesep]
\hline
\hline
\end{tabular}
\end{table}
\begin{figure*}[!t]
\begin{minipage}[b]{9cm}
\centering
\includegraphics[width=\linewidth]{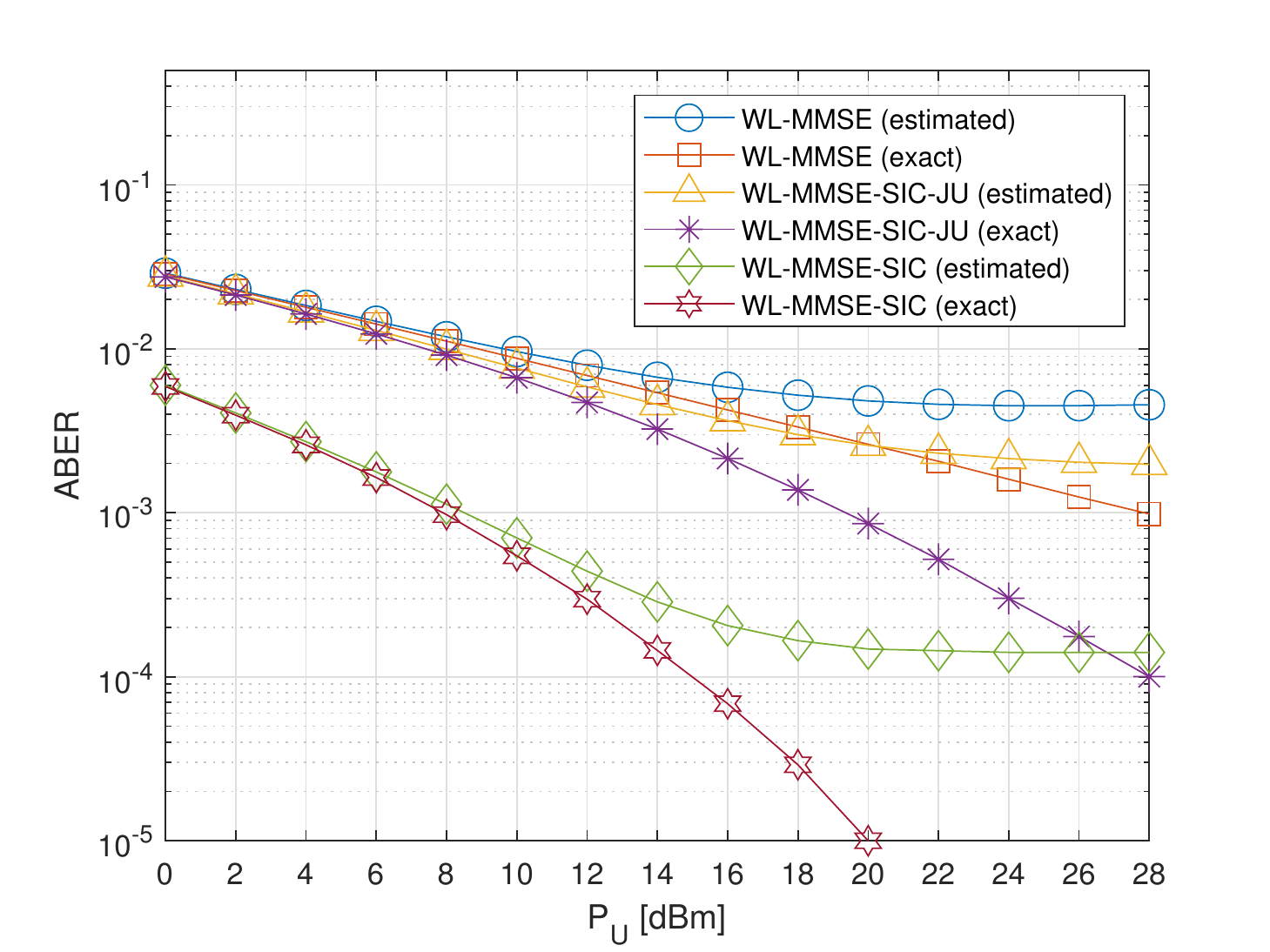}
\end{minipage}
\begin{minipage}[b]{8cm}
\centering
\includegraphics[width=0.99\linewidth]{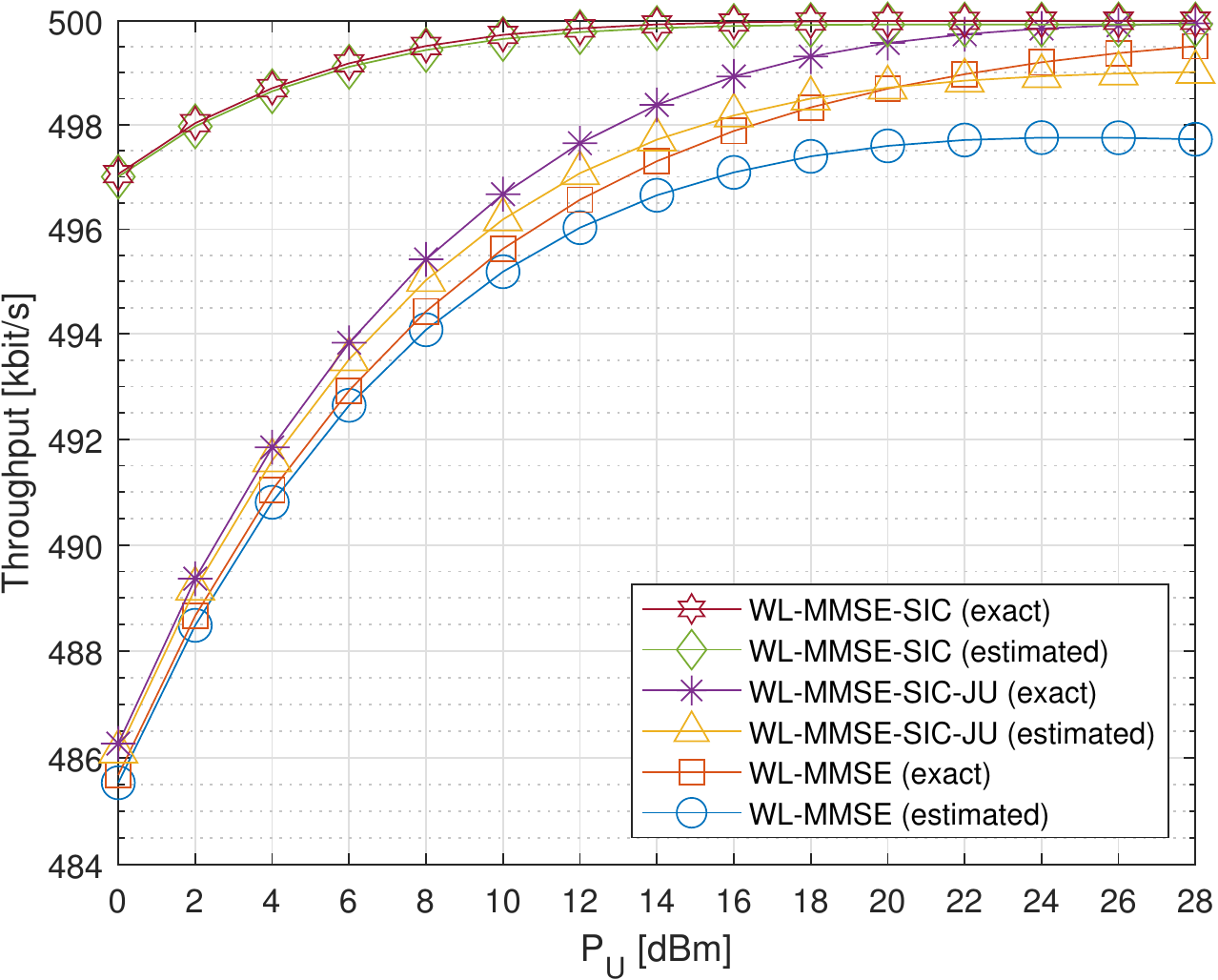}
\end{minipage}
\caption{ABER (left) and throughput  (right) versus SNR (SJR = $-3$ dB, low-speed jammer).
}
\label{fig:fig_5}
\end{figure*}
\begin{figure*}[!t]
\begin{minipage}[b]{9cm}
\centering
\includegraphics[width=\linewidth]{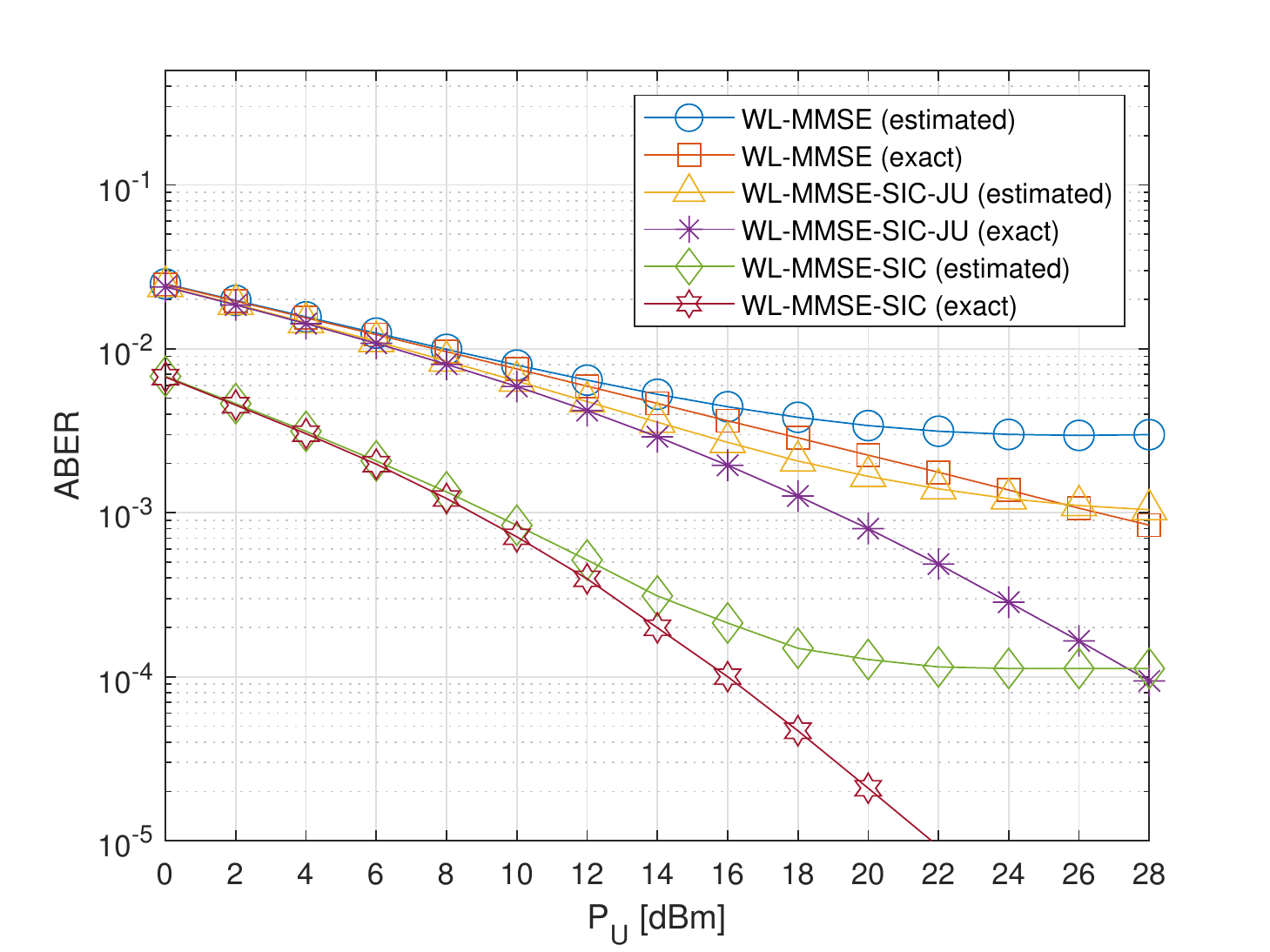}
\end{minipage}
\begin{minipage}[b]{8cm}
\centering
\includegraphics[width=0.99\linewidth]{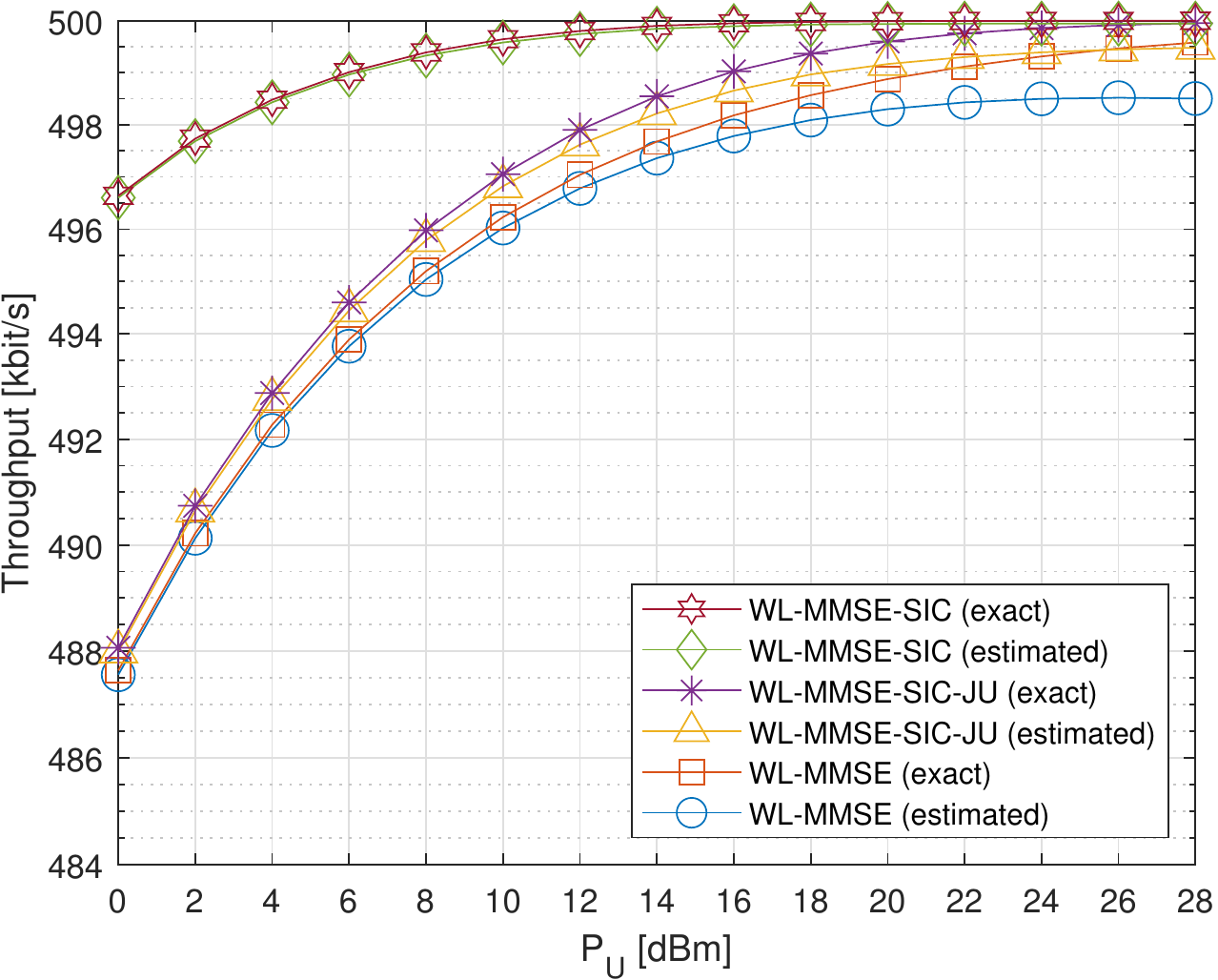}
\end{minipage}
\caption{ABER (left) and throughput  (right) versus SNR (SJR = $0$ dB, low-speed jammer).
}
\label{fig:fig_6}
\end{figure*}
\begin{figure*}[!t]
\begin{minipage}[b]{9cm}
\centering
\includegraphics[width=\linewidth]{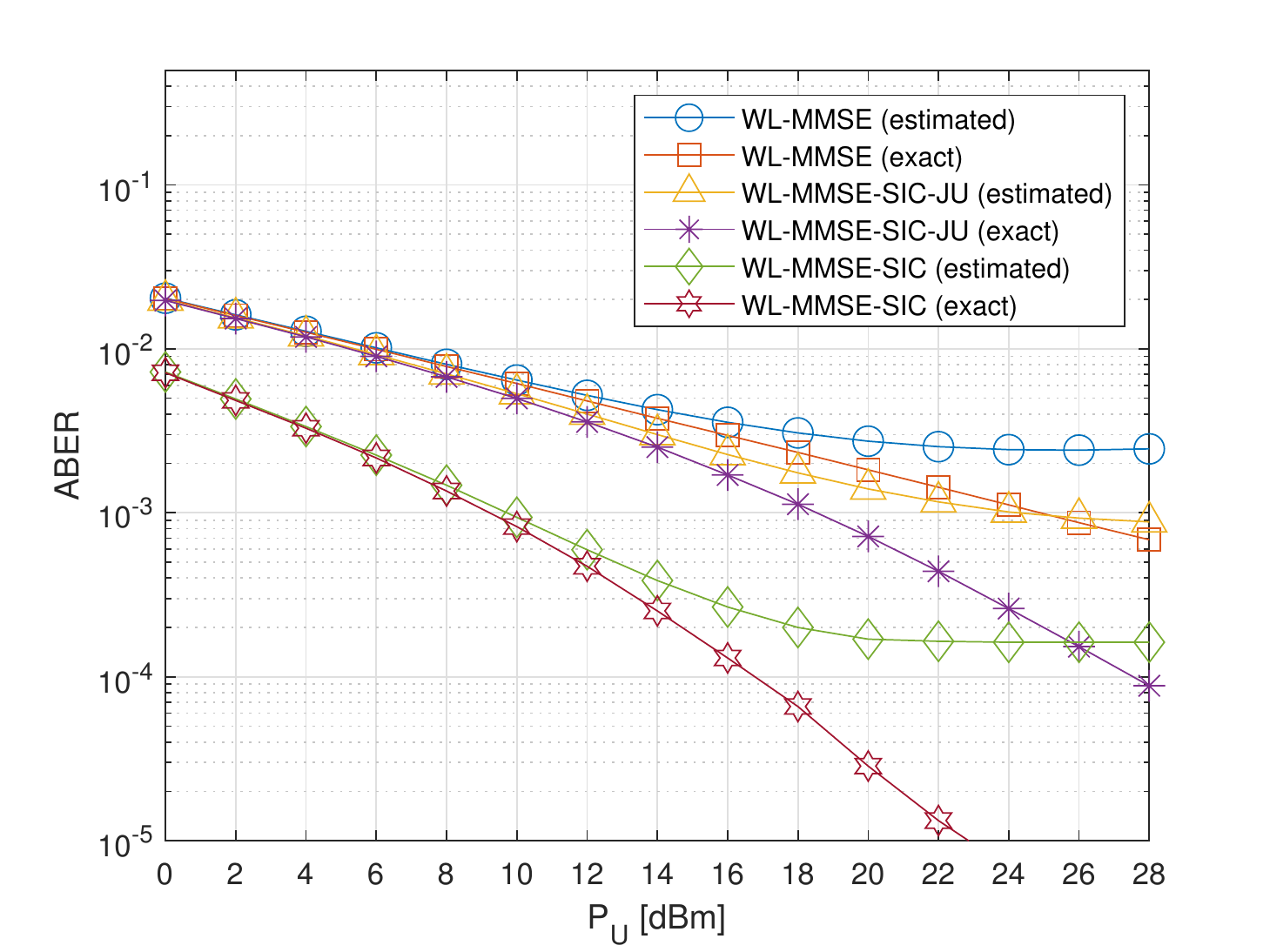}
\end{minipage}
\begin{minipage}[b]{8cm}
\centering
\includegraphics[width=0.99\linewidth]{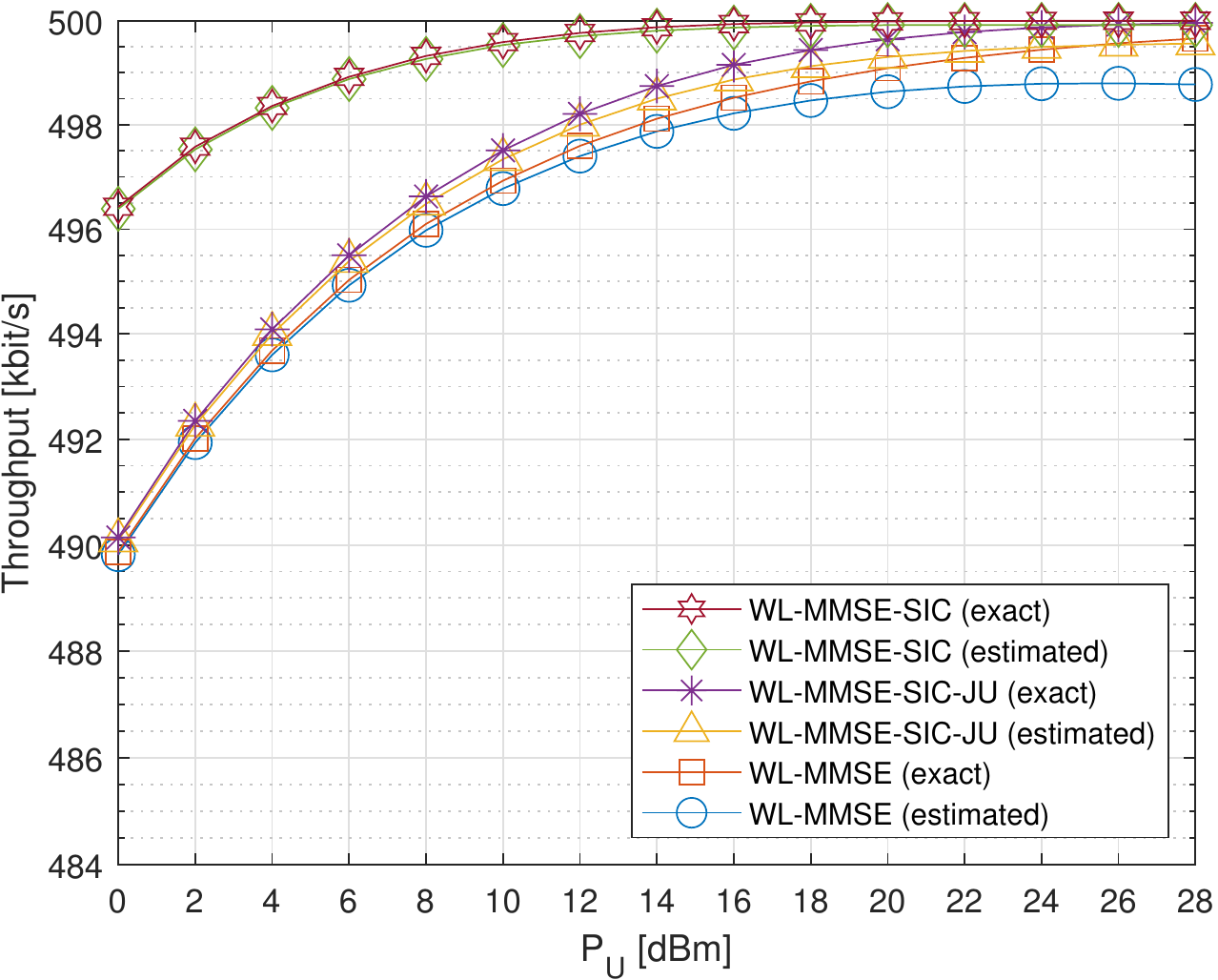}
\end{minipage}
\caption{ABER (left) and throughput  (right) versus SNR (SJR = $3$ dB, low-speed jammer).
}
\label{fig:fig_7}
\end{figure*}

\section{Numerical results}
\label{sec:simul}

In this Section, we provide simulation results aimed at evaluating the
performance of the considered anti-jamming detector (see Algorithm~\ref{table:tab_1}),
equipped with the proposed channel estimators (see Algorithms~\ref{table:tab_2} and \ref{table:tab_3}).
To this aim, we consider the following simulation setting.
The UAV and the mobile jammer employ OFDM modulation with $M=16$ subcarriers,
CP of length $\Lcp=4$, sampling rate $1/T_\text{c}=625$ kHz, and
binary-phase-shift-keying (BPSK) signaling.
The carrier frequency is set to $f_0=27$ GHz.
The number of paths of the UAV-to-CU and jammer-to-CU
links are fixed to  $K_{\text{U}}=K_{\text{J}}=2$.
For $k \in \{1, 2\}$ and $\text{TX} \in \{\text{U},\text{J}\}$, the gains
$g_{\text{TX},k}$ of both the UAV and jammer channels
are generated as statistically independent Rayleigh random variables (RVs), with
\be
\sigma_\text{TX}^2 \eqdef \Es(|g_{\text{TX},k}|^2) = 2 \, \pot_\text{TX} \left(\frac{\lambda_0}
{4 \pi  d_\text{TX}}\right)^2
\ee
where $\lambda_0=c_0/f_0$ is the wavelength,
$c_0 = 3 \cdot 10^8$ m/s
denotes the light speed, and $d_\text{TX}$ is the distance
between TX and the CU,
whereas the delays $\tau_{\text{TX},k}$ are randomly generated according to the
one-sided exponentially decreasing delay power spectrum \cite{Haas}, i.e.,
\be
\tau_{\text{TX},k} = -\tau_{\text{slope}} \, \text{ln} [1-u_k  (1-e^{-\Delta_{\text{TX}}/\tau_{\text{slope}}})]
\ee
where the maximum delay is $\Delta_{\text{U}}= \Delta_{\text{J}}=3 \, T_\text{c}$, the slope-time is
$\tau_{\text{slop}} = 2 \, T_\text{c}$, and $u_k$ are independent 
RVs uniformly distributed in $[0,1]$.
We set $d_\text{U}=d_\text{J}=100$ m and
the noise variance $\sigma^2_w$ is equal 
to $-113$ dBm.

We assume that the flight speed of the UAV 
is $v_{\text{U}} = 10$ m/s, while we consider
two mobility scenarios for the jammer:

\begin{enumerate}

\itemsep=0mm

\item
{\em high-speed jammer}: the jammer moves at $v_{\text{J}} = 20$ m/s;

\item
{\em low-speed jammer}: the jammer moves at $v_{\text{J}} = 5$ m/s.

\end{enumerate}
The Doppler frequencies $f_{\text{TX},k}$ are customarily generated as
$f_{\text{TX},k} = D_{\text{TX}} \, \text{cos}(\vartheta_{\text{TX},k})$, where
$D_{\text{TX}} \eqdef (v_{\text{TX}}/c_0) f_0$ and $\vartheta_{\text{TX},k}$ are independent RVs, uniformly
distributed in $[0,2 \pi)$.
During the downloading phase, the UAV transmits $5$ MB 
of collected data.

In each Monte Carlo run, a new set of symbols, noise and channel parameters for both the UAV and jammer are randomly generated.
As a performance measure, we report the average bit error rate (ABER) at the CU regarding
the UAV transmission and the corresponding throughput,\footnote{For the considered simulation setting, the maximum achievable throughput is given by $(1/T_\text{c})\, (M/P)=625 \cdot 0.8=500$
kbit/s, due to the redundancy arising from the CP insertion at the OFDM transmitter.}
as well as
the arithmetic mean of the MSEs of the Doppler shifts
$f_{\text{TX},k}$ (normalized by $D^2_\text{TX}$),  the time delays
$\tau_{\text{TX},k}$ (normalized by $\Delta^2_\text{TX}$),
and channel gains  $g_{\text{TX},k}$ (normalized by $\sigma^2_\text{TX}$),
for
$k \in \{1,2\}$ and
$\text{TX} \in \{\text{U},\text{J}\}$.
Three different values 
of $\text{SJR} \in \{-3,0,3\}$ dB are considered,
with $\text{SJR} \eqdef \sigma_\text{U}^2/\sigma_\text{J}^2$,
representing the situations in which the 
transmit power of the UAV is half, equal to, or double of 
that of the jammer, respectively.

In addition to the proposed WL symbol detection procedure
with anti-jamming  capabilities based on
serial interference cancellation (SIC), we also
evaluated the performance of the simpler
WL-MMSE detector.
We also implemented the WL-MMSE-SIC
{\em jamming-unaware (JU)} detector, which
recovers the UAV symbol vector $\bm{s}_\text{U}[n]$
-- rather than both $\bm{s}_\text{U}[n]$ and $\bm{s}_\text{J}[n]$ --
and involves only the knowledge of $\Ha[n]$,
thereby treating
the jammer as a disturbance.
The ABER curves of all the considered detection
techniques are reported in the case of perfect knowledge of
$\Ha[n]$ and $\Hs[n]$,
referred to as ``exact'', as well as when
the channel state information
is built by using the
estimates of the relevant channel parameters
(i.e., Doppler shifts, time delays, and channel gains)
obtained from the received data,
referred to as ``estimated''.

\subsection{Scenario 1: High-speed jammer}
The performance of the proposed finite-sample versions of the
estimators outlined in  Algorithms~\ref{table:tab_2} and  \ref{table:tab_3}
is reported in Tables~\ref{tab:Uav-h} and \ref{tab:Jammer-h} for the
UAV and jammer transmissions, respectively, when $\pot_\text{U}=10$ dBm
and $v_{\text{J}} = 20$ m/s.
It is apparent that the proposed estimators exhibit very satisfactory
MSE performance for all the considered SJR values.
Remarkably, the estimation accuracy of the Doppler shifts is almost
insensitive to the SJR. It is also evident that, compared with the
UAV case,  the channel parameters of the jammer are estimated with a better accuracy.
This is due to the fact that a larger value of the transmitter speed enables
a better estimate of its cycle frequencies, which in its turn has a positive effect on
the estimation process of the time delays and channel gains, whose
estimators require prior acquisition of the Doppler shifts.
Results not reported here show that the estimation accuracy
is the same for different values of $\pot_\text{U}$.

Figs.~\ref{fig:fig_2}, \ref{fig:fig_3}, and \ref{fig:fig_4} depict the ABER (left-side plot)
and throughput (right-side plot) performance
of the WL-MMSE  detector and the proposed WL-MMSE-SIC one, whose
synthesis is detailed in Algorithm~\ref{table:tab_1}, as well as
the performance of the
WL-MMSE-SIC-JU receiver.
As confirmed by our theoretical prediction, the system performance when the
WL-MMSE detector is employed at the CU is seriously degraded by the residual
ICI and jamming contributions at its output. In particular, the data-estimated version
of the WL-MMSE detector exhibits a high ABER floor well above $10^{-3}$,
which in its turn leads to a corresponding throughput floor.
Compared with the WL-MMSE detector,
the WL-MMSE-SIC-JU receiver allows one to significantly improve the system performance
in ideal conditions (i.e., exact knowledge of $\Ha[n]$ and the correlation matrix of
$\z_i$, for $i \in \{0,1,\ldots, 2M-1\}$), but pays an acceptable performance degradation
when implemented from data, due to the presence of a high-power jammer.
On the other hand, the proposed WL-MMSE-SIC detector is able to effectively
remove the residual ICI and jamming impairments after the WL-MMSE pre-detection,
by ensuring performances that are almost unaffected by the considered SJR values
in both the exact and estimated versions. Specifically, the data-estimated
version of the WL-MMSE-SIC shows a BER floor only below $10^{-4}$,
due to the finite-sample estimation of the relevant channel parameters,
which does not prevent to achieve the maximum throughput.

\subsection{Scenario 2: Low-speed jammer}

The numerical experiment of the previous Section is repeated for the case of $v_{\text{J}} = 5$ m/s
and the corresponding results are reported in Tables~\ref{tab:Uav-l} and \ref{tab:Jammer-l},
as well as in Figs.~\ref{fig:fig_5}, \ref{fig:fig_6}, and \ref{fig:fig_7}.
It is seen that, as intuitively expected, the only noticeable effect of  a lower speed of the jammer is
a reduction of the estimation accuracy of its Doppler shifts, 
which however has a negligible impact on ABER and throughput performance of the UAV transmission.

\section{Conclusions and directions for future work}
\label{sec:concl}

In this paper, we have investigated anti-jamming
communications in UAV-aided WSNs operating over
doubly-selective channels
for smart city applications. We have focused on the
downloading phase, when the UAV is subject to a jamming attack when it is transmitting
to a remote CU the data collected through the city.

The following countermeasures have been proposed:
\begin{enumerate}[(i)]
\item
Joint detection of the UAV and jammer symbols
is performed to address the case of strong ICI and
jamming power, in conjunction with serial
cancellation of the residual ICI and jamming
contribution based on post-sorting of the
detector output.
\item
The Doppler shifts, time delays and channel gains
required to implement the improved detection
strategy are blindly acquired by exploiting
the ACS properties of the received signals
through algorithms
that exploit the detailed
structure of multicarrier modulation format.
\end{enumerate}

In summary, this study demonstrates that, if sophisticated
reception strategies and channel estimators are employed during jamming attack,
it is still possible to ensure a satisfactory performance 
in terms of ABER, throughput,
and MSE of the estimated channel parameters, which is only
slightly affected by the transmitted power of the jammer.
It is worth noting that the proposed countermeasures 
can be used  outside the application
scenario of smart cities, whenever a large amount of data is available
and latency constraints are relaxed.
Finally, we have studied the performance of the proposed
estimators by means of numerical simulations.
In this respect,
a first interesting research subject consists of
investigating the consistency
and asymptotic distribution of the
developed estimators from a
theoretical viewpoint.
To further corroborate the effectiveness of the proposed
anti-jamming countermeasures,  an additional research issue
is to deploy a physical testbed by programming
general purpose hardware according to the software-defined radio paradigm.

\appendices

\section{Proof of Proposition~\ref{prop:1}}
\label{app:1}
Lat us take into account two different linear mappings:
the former one is given by
\be
\mathcal{L}_1:
\boldsymbol{\nu}_\text{ord} \in [-1/2,1/2)^{K}
\longrightarrow \boldsymbol{\alpha}_\text{ord} \in [-1/2,1/2)^{L_\text{a}}
\label{eq:L1}
\ee
where we remember that $\boldsymbol{\nu}_\text{ord} =(\boldsymbol{\nu}_\text{U}^\trasp,
\boldsymbol{\nu}_\text{J}^\trasp)^\trasp \in [-1/2,1/2)^{K}$,
with $K \eqdef K_\text{U}+K_\text{J}$, and
\begin{multline}
\boldsymbol{\alpha}_\text{ord} \eqdef (\alpha_{\text{U}^{(*)},1,1}, \ldots,
\alpha_{\text{U}^{(*)},\Ka,\Ka},
\alpha_{\text{J}^{(*)},1,1}, \ldots, \alpha_{\text{J}^{(*)},\Ks,\Ks},\\
\alpha_{\text{U}^{(*)},1,2}, \ldots, \alpha_{\text{U}^{(*)},1,\Ka},
\alpha_{\text{J}^{(*)},1,2}, \ldots, \alpha_{\text{J}^{(*)},1,\Ks}, \\
\alpha_{\text{U}^{(*)},2,3}, \ldots, \alpha_{\text{U}^{(*)},2,\Ka},
\alpha_{\text{J}^{(*)},2,3}, \ldots, \alpha_{\text{J}^{(*)},2,\Ka}, \ldots \\
\alpha_{\text{U}^{(*)},\Ka-1,\Ka}, \alpha_{\text{J}^{(*)},\Ks-1,\Ks})^\trasp \in [-1/2,1/2)^{L_\text{a}}
\end{multline}
whereas the latter one reads as
\be
\mathcal{L}_2:
\boldsymbol{\alpha}_\text{ord} \in [-1/2,1/2)^{L_\text{a}}
\longrightarrow \alphalabel \in [-1/2,1/2)^{L_\text{a}} \: .
\label{eq:L2}
\ee
Let us form the composite mapping
from $[-1/2,1/2)^{K}$ into $[-1/2,1/2)^{L_\text{a}}$, denoted
by $\mathcal{L}_1 \circ \mathcal{L}_2$:
estimation of $\boldsymbol{\nu}_\text{ord}$
from the observed vector
$\alphalabel$ is tantamount to inverting the
composite mapping $\mathcal{L}_1 \circ \mathcal{L}_2$.
Unfortunately, $\mathcal{L}_1 \circ \mathcal{L}_2$ is not invertible
unless some a priori knowledge is assumed.
Indeed, the linear transformation \eqref{eq:L1}
can be explicitly written as
\be
\boldsymbol{\alpha}_\text{ord}=\underbrace{\begin{pmatrix}
2 \, \bm I_{K} \\
\bm \Gamma
\end{pmatrix}}_{\bm B  \in \Rset^{L_\text{a} \times K}}  \boldsymbol{\nu}_\text{ord}
= \bm B \, \boldsymbol{\nu}_\text{ord}
\label{eq:sys-1}
\ee
where the matrix $\bm \Gamma \in \mathbb{R}^{(L_\text{a}-K) \times K}$
-- defined in \eqref{eq:Gamma} -- has exactly two entries 
in each row equal to $1$, and  all other entries equal to $0$.
By construction, the matrix $\bm{B}$ is full-column rank, i.e.,
$\rank(\bm{B})=K$.
On the other hand, the linear mapping \eqref{eq:L2} is represented by
the linear system
\be
\alphalabel = \Pblabel \, \boldsymbol{\alpha}_\text{ord}
\label{eq:sys-1}
\ee
where $\Pblabel \in \Rset^{L_\text{a} \times L_\text{a}}$ is an {\em unknown} permutation matrix,
which has been defined in the statement of the Proposition.
Since $\Pblabel$ is a permutation matrix, it results that
$\Pblabel^{-1}=\Pblabel^\trasp$ and, hence,
inversion of the composite mapping $\mathcal{L}_1 \circ \mathcal{L}_2$
is equivalent to solving the matrix equation
\be
\Pblabel \, \bm B \, \boldsymbol{\nu}_\text{ord} = \alphalabel
\label{eq:syst}
\ee
with respect to the unknowns $\Pblabel$ and $\boldsymbol{\nu}_\text{ord}$.

\section{Proof of Proposition~\ref{prop:2}}
\label{app:2}
The proof capitalizes on a
parameterization of the Toeplitz matrix
$\mathbf{T}_{\text{TX},k,b}$ -- defined in
 \eqref{eq:mat} with $b \in \{0,1\}$ -- in terms of {\em forward shift} $\Fb \in \Rset^{P \times P}$
and {\em backward shift} $\Bb \eqdef \Fb^\trasp \in \Rset^{P \times P}$ matrices \cite{Horn},
whose first column and the first row are given by
$[0, 1, 0, \ldots, 0]^\trasp$
and $[0, 1, 0, \ldots, 0]$, respectively.
One has
\barr
\mathbf{T}_{\text{TX},k,0} & = \sum_{\ell=0}^{\Lcp}
\psi(\ell \, T_\text{c}-\tau_{\text{TX},k}) \, \Fb^{\ell}
\label{eq:T0}
\\
\mathbf{T}_{\text{TX},k,1} & =
\sum_{\ell=1}^{\Lcp}
\psi(\ell \, T_\text{c}-\tau_{\text{TX},k}) \, \Bb^{P-\ell}
\label{eq:T1}
\earr
where $\tau_{\text{TX},k} \in [0, \Delta_\text{max}]$
and we remember that the pulse
$\psi(t) = \psi_\text{DAC}(t) \ast \psi_\text{ADC}(t)$ is known
at the CU.

It is crucial to note that
\be
\bm{\Phi}_{\text{TX},k} \eqdef \sum_{r=-1}^{1} \bm{\Phi}_{\text{TX},k}[r] =
g_{\text{TX},k}^2  \, \mathbf{C}_{\text{TX},k} \, \bm{\Omega} \, \bm{\Omega}^\trasp \,
\mathbf{C}_{\text{TX},k}^\trasp
\label{eq:Phi}
\ee
where $\mathbf{C}_{\text{TX},k} \eqdef  \mathbf{T}_{\text{TX},k,0}
+ \mathbf{T}_{\text{TX},k,1} \in \Rset^{P \times P}$ is a circulant matrix \cite{Horn} by construction,
whose first column is given by
\begin{multline}
\mathbf{c}_{\text{TX},k} \eqdef \Big(\psi(-\tau_{\text{TX},k}), \psi(T_\text{c}-\tau_{\text{TX},k}),
\ldots,
 \\
\psi(\Lcp \, T_\text{c}-\tau_{\text{TX},k}), 0, \ldots, 0 \Big)^\trasp \: .
\end{multline}
Using standard eigenstructure
concepts, one gets
\be
\mathbf{C}_{\text{TX},k}=\bm{W}_P \,  \diag(\vb_{\text{TX},k}) \,
\bm{W}_P^\herm
\ee
where the $p$th entry of
\be
\vb_{\text{TX},k} \eqdef \sqrt{P} \,\bm{W}_P^\herm \, \mathbf{c}_{\text{TX},k} \in \Cset^P
\ee
is given by $\{\vb_{\text{TX},k}\}_p= \Psi_{\text{TX},k}[p] \, e^{-j \frac{2 \pi}{T}    \tau_{\text{TX},k} \,p}$,  with  $\Psi_{\text{TX},k}[p]$ given by \eqref{eq:58}--\eqref{eq:59}
in Appendix~\ref{app:3}, for $p \in \{0,1,\ldots, P-1\}$.

In order to acquire the delay $\tau_{\text{TX},k} \in [0, \Delta_\text{max}]$,
for any $k \in \{1,2,\ldots, K_\text{TX}\}$ and $\text{TX} \in \{\text{U},\text{J}\}$,
we observe that, using the eigenstructure of  $\mathbf{C}_{\text{TX},k}$,
it results that
\be
\bm{W}_P^\herm \, \bm{\Phi}_{\text{TX},k} \, \bm{W}_P^*=
g_{\text{TX},k}^2 \, \Eb_{\text{TX},k} \, \bm{\Psi}_{\text{TX},k} \,
\bm{\Upsilon}  \, \bm{\Psi}_{\text{TX},k} \, \Eb_{\text{TX},k}
\ee
where we have defined the matrices
\barr
\Eb_{\text{TX},k} & \eqdef \diag\left(1, e^{-j \frac{2 \pi}{T} \tau_{\text{TX},k}}, \ldots,
e^{-j \frac{2 \pi}{T} \tau_{\text{TX},k} (P-1)} \right)
\\
\bm{\Psi}_{\text{TX},k} & \eqdef \diag\left(\Psi_{\text{TX},k}[0],
\Psi_{\text{TX},k}[1], \ldots, \Psi_{\text{TX},k}[P-1]\right)
\earr
and $\bm{\Upsilon}$ has been defined in \eqref{eq:ups}.

For $p \in  \{0,1,\ldots, P-1\}$, it is readily seen that
the $p$th diagonal entry of $\bm{W}_P^\herm \, \bm{\Phi}_{\text{TX},k} \, \bm{W}_P^*$
is given by
\be
\{\bm{W}_P^\herm \, \bm{\Phi}_{\text{TX},k} \, \bm{W}_P^*\}_{p,p}=
g_{\text{TX},k}^2 \, \Psi_{\text{TX},k}^2[p] \, \{\bm{\Upsilon}\}_{p,p}
\, e^{-j \frac{4 \pi}{T} \tau_{\text{TX},k} \,p} \: .
\ee
It is shown in Appendix~\ref{app:3} that, in the case of zero excess bandwidth,
$\Psi_{\text{TX},k}[0], \Psi_{\text{TX},k}[1], \ldots, \Psi_{\text{TX},k}[P/2-1]$ are equal to the corresponding
coefficients of the
DFT of $\psi(\ell \, T_\text{c})$ and, thus, they are known at the CU.
At this point, we are in the position of considering
the one-dimensional cost function defined in
\eqref{eq:I}. By resorting to the triangle inequality, one has
\be
\mathcal{I}_{\text{TX},k}(\beta) \le  |g_{\text{TX},k}|^2
\sum_{p=0}^{P/2-1} |\Psi_{\text{TX},k}[p]|^4 \, |\{\bm{\Upsilon}\}_{p,p}|^2
\ee
with equality if and
only if $\beta=\tau_{\text{TX},k}+i\, T/2$ ($i \in \Zset$).
This concludes the proof of the statement in Proposition~\ref{prop:2}.

\section{Expression of $\Psi_{\text{TX},k}[p]$}
\label{app:3}

Vector $\vb_{\text{TX},k}$ collects the coefficients of the $P$-point DFT of
$\mathbf{c}_{\text{TX},k}$. To evaluate such a DFT, we first observe that the
finite-length sequence $\{\psi(\ell \, T_\text{c}-\tau_{\text{TX},k})\}_{\ell=0}^{\Lcp}$
is the sampled version
of the waveform  $\psi(t-\tau_{\text{TX},k})$ at rate $1/T_\text{c}$.
The Fourier transform of $\psi(t-\tau_{\text{TX},k})$ is given by
$\Psi(f) \, e^{-j \, 2 \pi f \tau_{\text{TX},k}}$, where $\Psi(f)$
is the Fourier transform of $\psi(t)$ with two-sided
bandwidth $1/T_\text{c}$ approximately.\footnote{We assume zero excess bandwidth.
The derivations of this appendix can be straightforwardly generalized to the case
of a bandwidth greater than $1/T_\text{c}$.} By virtue of the Nyquist-Shannon sampling theorem, the
Fourier transform $\overline{\Psi}_{\text{TX},k}(\nu)$ of
the sequence $\psi(\ell \, T_\text{c}-\tau_{\text{TX},k})$ is given by
\be
\overline{\Psi}_{\text{TX},k}(\nu)= \frac{1}{T_\text{c}} \sum_{i=-\infty}^{+\infty}
\Psi\left(\frac{\nu-i}{T_\text{c}}\right) \, e^{-j \, 2 \pi \left(\frac{\nu-i}{T_\text{c}}\right) \tau_{\text{TX},k}} \: .
\ee
The DFT coefficients of
$\{\psi(\ell \, T_\text{c}-\tau_{\text{TX},k})\}_{\ell=0}^{\Lcp}$
are samples of $\overline{\Psi}_{\text{TX},k}(\nu)$
spaced in frequency at integer multiples of $1/P$. Thus, the $p$th entry of
$\vb_{\text{TX},k}$ reads as
\begin{multline}
\{\vb_{\text{TX},k}\}_p=\overline{\Psi}_{\text{TX},k}\left(\frac{p}{P}\right) = \left(e^{-j \frac{2 \pi}{T} \tau_{\text{TX},k}}\right)^p
\\ \cdot \underbrace{\frac{1}{T_\text{c}} \sum_{i=-\infty}^{+\infty}
\Psi\left(\frac{p-i \, P}{T}\right) \, e^{j \, \frac{2 \pi}{T_\text{c}} \chi_{\text{TX},k} i}}_{\Psi_{\text{TX},k}[p]}
\label{eq:58}
\end{multline}
for $p \in \{0,1,\ldots, P-1\}$.  Since $\Psi(f) \approx 0$ for $f \not \in (-0.5/T_\text{c},0.5/T_\text{c})$, it follows that\footnote{In practice, the integer $P=M+\Lcp$ is even since $M$ is a power of $2$ and $\Lcp=M/4$.}
\barr
\Psi_{\text{TX},k}[p] & \approx \frac{1}{T_\text{c}} \left[ \Psi\left(\frac{p}{T}\right) +
\Psi\left(\frac{p- P}{T}\right) \, e^{j \, \frac{2 \pi}{T_\text{c}} \chi_{\text{TX},k}}
\right]
\nonumber \\ &
= \begin{cases}
\frac{1}{T_\text{c}} \, \Psi\left(\frac{p}{T}\right) \:,
\\  \hspace{10mm} \text{for $p \in \left\{0, 1, \ldots, \frac{P}{2}-1\right\}$}
\\
\frac{1}{T_\text{c}} \, \Psi\left(\frac{p- P}{T}\right) \, e^{j \, \frac{2 \pi}{T_\text{c}} \chi_{\text{TX},k}} \:,
\\  \hspace{10mm} \text{for $p \in \left\{\frac{P}{2}-1, \frac{P}{2}, \ldots, P-1\right\}$} \: .
\end{cases}
\label{eq:59}
\earr
If $\tau_{\text{TX},k}$ is an integer
multiple of $T_\text{c}$, i.e., the fractional delay $\chi_{\text{TX},k}$ is zero, then
$\Psi_{\text{TX},k}[p]$ is the $p$th coefficient of the DFT of the sequence  $\psi(\ell \, T_\text{c})$.
On the other hand, when $\chi_{\text{TX},k} \neq 0$, the first $P/2$
coefficients $\Psi_{\text{TX},k}[0], \Psi_{\text{TX},k}[1], \ldots, \Psi_{\text{TX},k}[P/2-1]$ do not depend on the fractional delay and they are equal to the corresponding coefficients of the
DFT of $\psi(\ell \, T_\text{c})$,
whereas the remaining ones
$\Psi_{\text{TX},k}[P/2-1], \Psi_{\text{TX},k}[P/2], \ldots, \Psi_{\text{TX},k}[P-1]$
also depend on $\chi_{\text{TX},k}$.



\begin{IEEEbiography}
[{\includegraphics[width=1in,height=1.25in,clip,keepaspectratio]{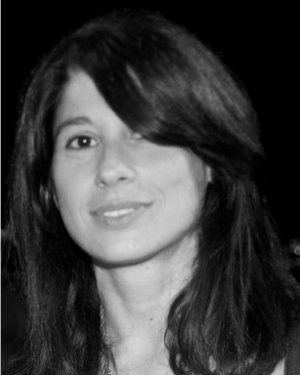}}]
{Donatella Darsena} (M'06-SM'16) received the Dr. Eng. degree summa cum laude in telecommunications engineering in 2001, and the Ph.D. degree in electronic and telecommunications engineering in 2005, both from the University of Napoli Federico II, Italy. From 2001 to 2002, she worked as embedded system designer in the Telecommunications, Peripherals and Automotive Group, STMicroelectronics, Milano, Italy. Since 2005, she has been an Assistant Professor with the Department of Engineering, University of Napoli Parthenope, Italy. Her research interests are in the broad area of signal processing for communications, with current emphasis on multicarrier modulation systems, space-time techniques for cooperative and cognitive communications, green communications for IoT. 
Dr. Darsena was an Associate Editor for the IEEE COMMUNICATIONS LETTERS from 
December 2016 to July 2019. 
She has served as Associate Editor for IEEE ACCESS since October 2018, 
Senior Area Editor for IEEE COMMUNICATIONS LETTERS 
since August 2019, and Associate Editor for 
IEEE SIGNAL PROCESSING LETTERS since 2020.

\end{IEEEbiography}

\vspace*{-2\baselineskip}

\begin{IEEEbiography}
[{\includegraphics[width=1in,height=1.25in,clip,keepaspectratio]{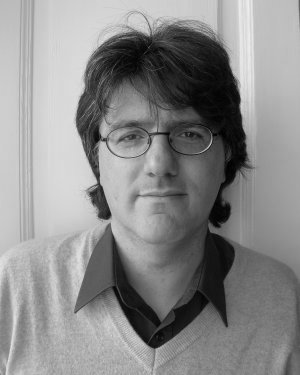}}]
{Giacinto Gelli}(M'18-SM'20)
was born in Napoli, Italy, on July 29, 1964.
He received the Dr. Eng. degree \textit{summa cum laude} in electronic
engineering in 1990, and the Ph.D. degree in computer science and
electronic engineering in 1994, both from the University of Napoli
Federico II.

From 1994 to 1998, he was an Assistant Professor with the
Department of Information Engineering, Second University of
Napoli.
Since 1998 he has been with the Department of Electrical Engineering and Information Technology, University of Napoli Federico II,
first as an Associate Professor,
and since November 2006 as a Full Professor of Telecommunications.
He also held teaching positions at the University Parthenope of
Napoli.
His research interests are in the broad area of
signal and array processing for communications,
with current emphasis on multicarrier modulation systems and
space-time techniques for cooperative and cognitive
communications systems.
 \end{IEEEbiography}

\vspace*{-2\baselineskip}

\begin{IEEEbiography}
[{\includegraphics[width=1in,height=1.25in,clip,keepaspectratio]{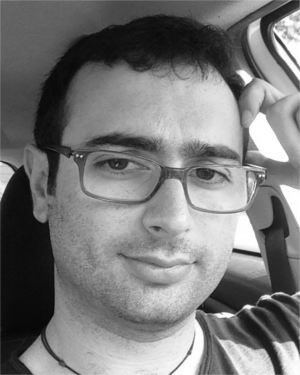}}]
{Ivan Iudice}
was born in Livorno, Italy, on November 23, 1986.
He received the B.S. and M.S. degrees
in telecommunications engineering in 2008 and 2010,
respectively, and the Ph.D. degree
in information technology and electrical engineering in 2017,
all from University of Napoli Federico II, Italy.

Since 2011, he has been with the Italian Aerospace Research Centre (CIRA), Capua, Italy. 
He first served as part of the Electronics and Communications laboratory and, since November 2020, 
he has served as part of the Security of Systems and Infrastructures laboratory.
His research activities lie in the area of
signal and array processing for communications,
with current interests are focused
on physical layer cyber security
and space-time techniques for
cooperative communications systems.
\end{IEEEbiography}

\vspace*{-2\baselineskip}

\begin{IEEEbiography}[
{\includegraphics[width=1in,height=1.25in,clip,keepaspectratio]{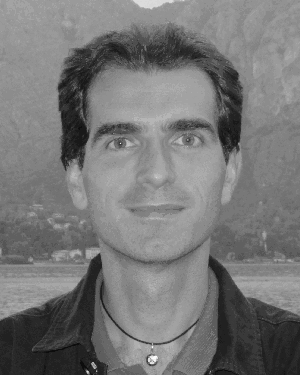}}]
{Francesco Verde}(M'10-SM'14) was born in Santa Maria Capua Vetere,
Italy, on June 12, 1974. He received the Dr. Eng. degree
\textit{summa cum laude} in electronic engineering
from the Second University of Napoli, Italy, in 1998, and the Ph.D.
degree in information engineering
from the University of Napoli Federico II, in 2002.
Since December 2002, he has been with the University of Napoli Federico II. He first served as an Assistant Professor of signal theory and mobile communications
and, since December 2011, he has served as an Associate Professor of telecommunications with the Department of Electrical Engineering and Information Technology.
His research activities include orthogonal/non-orthogonal multiple-access techniques, space-time processing for cooperative/cognitive communications, wireless systems optimization, and 
physical-layer security.

Prof. Verde has been involved in several technical program committees of major IEEE conferences in signal processing and wireless communications.
He has served as Associate Editor for IEEE TRANSACTIONS ON COMMUNICATIONS since 2017 and Senior Area Editor of the IEEE SIGNAL PROCESSING LETTERS since 2018. He was an Associate Editor of the IEEE TRANSACTIONS ON SIGNAL PROCESSING (from 2010 to 2014) and  IEEE SIGNAL PROCESSING LETTERS (from 2014 to 2018), as well as Guest Editor of the EURASIP Journal on Advances in Signal Processing in 2010 and SENSORS MDPI in 2018-2021. 
\end{IEEEbiography}


\begin{thebibliography}{99}

\bibitem{Ham.2019}
R.~Hammons and J.~Myers, ``Architects of our future: Redefining smart cities to be people-centric and socially responsible,'' {\em IEEE Internet of Things Magazine\/}, vol.\ 2, pp.\ 10-14, June 2019.

\bibitem{Meh.2017}
Y.~Mehmood, F.~Ahmad, I.~Yaqoob, A.~Adnane, M.~Imran, and S.~Guizani,
``Internet-of-Things-based smart cities: Recent advances and challenges,''
\IeeeCOMMMAG, vol.\ 55, pp.\ 16-24, Sept.\ 2017.

\bibitem{Hase.2021}
R.M.A.~Haseeb-ur-rehman {\em et al.}, ``Sensor cloud frameworks: State-of-the-art, taxonomy, and research issues," \IeeeSENSJ, Early Access.

\bibitem{Zaman.2021}
S.K.u.~Zaman {\em et al.}, ``Mobility-aware computational offloading in mobile
edge networks: a survey," {\em Cluster Comput.\/}, 2021.

\bibitem{Jah.2020}
R.~Jaiswal, A.~Agarwal, and R.~Negi, ``Smart solution for reducing the COVID-19 risk using smart city technology,'' {\em IET Smart Cities}, vol.\ 2, pp.\ 82-88, 7 2020.

\bibitem{Kul.2017}
P.~Kulkarni and T.~Farnham, ``Smart city wireless connectivity considerations and cost Analysis:
Lessons learnt from smart water case studies,'' {\em IEEE Access}, vol.\ 4, pp.\ 660-672, 2016.

\bibitem{Oso.2020}
A.~Osorio, M.~Calle, J.D.~Soto, and J.E.~Candelo-Becerra, ``Routing in LoRaWAN: Overview and challenges,'' \IeeeCOMMMAG, vol.\ 58, pp.\ 72-76, June 2020.

\bibitem{Zhang.2020}
S.~Zhang, H.~Zhang, and L.~Song, ``Beyond D2D: Full dimension
UAV-to-everything communications in 6G,” {\em IEEE Trans. Veh. Technol.\/},
vol.\ 69, pp.\ 6592-6602, Jun.\ 2020.

\bibitem{Hayat.2016}
S.~Hayat, E.~Yanmaz, and R.~Muzaffar, ``Survey on unmanned aerial vehicle networks for civil applications: A communications viewpoint'', {\em IEEE Commun.\ Surv.\ Tutor.\/},
vol.\ 18, pp.\ 2624-2661, 2018.

\bibitem{Wu-Mei.2019}
Q.~Wu, W.~Mei, and R.~Zhang, ``Safeguarding wireless network with
UAVs: A physical layer security perspective,'' {\em IEEE Wireless Comm.\/},
vol.\ 26, pp.\ 12-18, Oct.\ 2019.

\bibitem{Zou.2016}
Y.~Zou {\em et al.}, ``A survey on wireless security: Technical challenges,
recent advances, and future trends,'' {\em Proc. IEEE},
vol.\ 104, pp.\ 1727-1765, Sept.\ 2016.

\bibitem{Pele.2011}
K.~Pelechrinis, M.~Iliofotou, and S.V.~Krishnamurthy, ``Denial
of service attacks in wireless networks: The Case of jammers,''
{\em IEEE Commun.\ Surveys \& Tutorials}, vol.\ 13, pp.\ 245-57, 2011.

\bibitem{Zarpel}
B.B.~Zarpel\~{a}o, R.S.~Miani, C.T.~Kawakani,
and S.C.~de Alvarenga,  ``A survey of intrusion detection in Internet of
Things,'' {\em J.\ Netw.\ Comput.\ Appl.\/}, vol.\ 84, pp.\ 25-37, 2017.

\bibitem{Saeed}
A.~Saeed, A.~Ahmadinia, A.~Javed, and H.~Larijani, ``Intelligent intrusion
detection in low-power IoTs,"  {\em ACM Trans.\
Internet Technol.\ (TOIT)}, vol.\ 16, pp.\ 1-25, 2016.

\bibitem{Wazid-1}
M.~Wazid, A.K.~Das, V.~Bhat, and A.V.~Vasilakos, ``LAM-CIoT: Lightweight
authentication mechanism in
cloud-based IoT environment," {\em J.\ Netw.\ Comput.\ Appl.\/}, vol.\ 150,  pp.\ 1-16, 2020.

\bibitem{Wazid-2}
M.~Wazid, A.K.~Das, N.~Kumar, A.V.~Vasilakos, ``Design of secure key management and user authentication scheme for fog computing services," {\em Future Gener.\ Comput.\ Syst.\/},
vol.\ 91, pp.\ 475-492, 2019.

\bibitem{Wang.2020}
X.~Wang {\em et al.}, ``Dynamic spectrum anti-jamming communications: Challenges and opportunities,'' {\em IEEE Commun.\ Magazine}, vol.\ 58, pp.\ 79-85, Feb.\ 2020.

\bibitem{Yang.2013}
D.~Yang, G.~Xue, J.~Zhang, A.~Richa, and X.~Fang, ``Coping with a smart
jammer in wireless networks: A Stackelberg game approach,” {\em IEEE Trans.
Wireless Commun.\/}, vol.\ 12, pp. 4038-4047, Aug.\ 2013.

\bibitem{Liu.2018}
X.~Liu, Y.~Xu, L.~Jia, Q.~Wu, and A.~Anpalagan, ``Anti-jamming communications
using spectrum waterfall: A deep reinforcement learning
approach,'' {\em IEEE Commun. Lett.\/}, vol.\ 22, pp. 998-1001, May 2018.

\bibitem{Zhou.2016}
P.~Zhou and T.~Jiang, ``Toward optimal adaptive wireless communications
in unknown environments,'' {\em IEEE Trans.Wireless Commun.\/}, vol.\ 15, ,
pp.\ 3655-3667, May 2016.

\bibitem{Jia.2018}
L.~Jia, Y.~Xu, Y.~Sun, S.~Feng, L.~Yu, and A.~Anpalagan, ``A multi-domain
anti-jamming defense scheme in heterogeneous wireless networks,'' {\em IEEE
Access}, vol.\ 6, pp. 40177-40188, Jun.\ 2018.

\bibitem{Xiao.2012}
L.~Xiao, Y.~Chen, W.S.~Lin, and K.R.~Liu, ``Indirect reciprocity security
game for large-scale wireless networks,'' {\em IEEE Trans.\ Inf.\ Forensics
Secur.\/}, vol.\ 7, pp.\ 1368-1380, Aug.\ 2012.

\bibitem{Zhang.2010}
N.~Zhang, W.~Yu, X.~Fu, and S.K.~Das, ``Maintaining defender's reputation
in anomaly detection against insider attacks,'' {\em IEEE Trans.\ Syst.\ Man,
Cybern.\ -- Part B: Cybern.\/}, vol.\ 40, pp.\ 597-611, Jun.\ 2010.

\bibitem{Xu.2018}
Y.~Xu {\em et al.}, ``A one-leader multi-follower Bayesian-Stackelberg game
for anti-jamming transmission in UAV communication networks,'' {\em IEEE
Access}, vol.\ 6, pp. 21697-21709, Apr.\ 2018.

\bibitem{Peng.2019}
J.~Peng, Z.~Zhang, Q.~Wu, and B.~Zhang, ``Anti-jamming communications in
UAV swarms: A reinforcement learning approach,'' {\em IEEE Access}, vol.\ 7,
pp.\ 180532-180543, 2019.

\bibitem{Lu.2020}
X.~Lu, L.~Xiao, C.~Dai, and H.~Dai, ``UAV-aided cellular communications with deep reinforcement learning against jamming,'' {\em IEEE Wireless Commun.\/}, vol.\ 27, pp.\ 48-53, Aug.\ 2020.

\bibitem{Xiao.2018}
L.~Xiao, C.~Xie, M.~Min, and W.~Zhuang,  ``User-centric view of
unmanned aerial vehicle transmission against smart attacks,'' {\em IEEE
Trans.\ Veh.\ Technol.\/}, vol.\ 67, pp.\ 3420-3430, Apr.\ 2018.

\bibitem{Wu.2019}
Y.~Wu, W.~Fan, W.~Yang, X.~Sun, and X.~Guan, ``Robust trajectory and
communication design for multi-UAV enabled wireless networks in the
presence of jammers,'' {\em IEEE Access}, vol.\ 8, pp. 2893-2905, Dec.\ 2019.

\bibitem{Mah.2019}
M.~Mah, H.~Lim, and A.W.~Tan, ``UAV relay flight path planning in the
presence of jamming signal,'' {\em IEEE Access}, vol.\ 7, pp.\ 40913-40924,
Mar.\ 2019.

\bibitem{Duo.2021}
B.~Duo, Q.~Wu, X.~Yuan, and R.~Zhang, ``Anti-jamming 3D trajectory design for UAV-enabled wireless sensor networks under probabilistic LoS channel,'' \IeeeTVT, vol.\ 69, pp.\ 16288-16293,
Dec.\ 2020.

\bibitem{Tse-book}
D.~Tse and P.~Viswanath, {\em Fundamentals of Wireless Communication}
Cambridge, U.K.: Cambridge Univ. Press, 2005.

\bibitem{Carl.1978}
A.~Carleial, ``Interference channels,'' \IeeeTIT, vol.\ 24, pp.\ 60-70, Jan.\ 1978.

\bibitem{Fle.1999}
B.H.~Fleury, M.~Tschudin, R.~Heddergott, D.~Dahlhaus, and K.5Ingeman Pedersen,
``Channel parameter estimation in mobile radio environments using the SAGE algorithm,''
\IeeeJSAC, vol.\ 17, pp.\ 434-450, Mar.\ 1999.

\bibitem{Gardner-book}
W. A. Gardner, {\em Introduction to Random Processes}. New York: Mc-Graw-Hill, 1990.

\bibitem{Li.2015}
Y.~Li, P.~Hui, D.~Jin, and S.~Chen, ``Delay-tolerant network protocol testing and evaluation,''
{\em IEEE Commun.\ Magazine}, vol.\ 53, pp.\ 258-266, Jan.\ 2015.

\bibitem{Ser.2004}
S.~Serbetli and A.~Yener, ``Transceiver optimization for multiuser MIMO systems,''
\IeeeTSP, vol.\ 52, pp.\ 214--226, Jan.\ 2004.

\bibitem{Chen.I}
W.A.~Gardner and C.~Chen, ``Signal-selective time-difference-of-arrival estimation for passive location of man-made signal sources in highly corruptive environments -- Part I: Theory and method,''
\IeeeTSP, vol.\ 40, pp.\ 1168-1184, May 1992.

\bibitem{Chen.II}
C.~Chen and W.A.~Gardner, ``Signal-selective time-difference of arrival estimation for passive location of man-made signal sources in highly corruptive environments -- Part II: Algorithms and performance,''
\IeeeTSP, vol.\ 40, pp.\ 1185-1197, May 1992.

\bibitem{Gelli}
G.~Gelli, ``Power and timing parameter estimation of multiple cyclostationary
signals from sensor array data,'' {\em Signal Process.\/},
vol.\ 42, pp.\ 97-102, 1995.

\bibitem{Ciblat}
P.~Ciblat, P.~Loubaton, E.~Serpedin, and G.B.~Giannakis, ``Performance analysis of blind carrier frequency offset estimators for noncircular transmissions through frequency-selective channels,''
\IeeeTSP, vol.\ 50, pp.\ 130-140, Jan.\ 2002.

\bibitem{Napolitano.2019}
A.~Napolitano, {\em Cyclostationary Processes and Time Series: Theory, Applications, and Generalizations}. Academic Press, 2019.

\bibitem{Horn}
R.A.~Horn and C.R.~Johnson, {\em Matrix Analysis}.
New York: Cambridge Univ.\ Press, 1990.

\bibitem{Tang}
J.~Tang, M.~Dabaghchian, K.~Zeng, and H.~Wen, ``Impact of mobility on physical layer security over wireless fading channels,'' \IeeeTWC, vol.\ 17, pp.\ 7849--7864, Dec.\ 2018.

\bibitem{Mann}
H.B.~Mann, {\em Addition Theorems: The Addition Theorems
of Group Theory and Number Theory}. Wiley, 1965.

\bibitem{Bello}
P.A.~Bello, ``Characterization of randomly time-variant linear channels,''
{\em IEEE Trans.\ Commun.\ Syst.\/}, vol.\ 11, pp.\ 360--393,
Nov.\ 1963.

\bibitem{Proakis}
J.G.~Proakis, {\em Digital Communications}. New York, NY,
USA: McGraw-Hill, 2001.

\bibitem{Matz}
G.~Matz, H.~Bolcskei, and F.~Hlawatsch, ``Time-frequency foundations of
communications: Concepts and tools,'' in {\em IEEE Signal Process. Mag.\/},
vol.\ 30, pp.\ 87--96, Nov.\ 2013.

\bibitem{Napolitano.2012}
A.~Napolitano, {\em Generalizations of cyclostationary signal processing: Spectral analysis and
applications}. Hoboken, NJ, USA: Wiley 2012.

\bibitem{Wax.1985}
M.~Wax and T.~Kailath, ``Detection of signals by information theoretic
criteria,” {\em IEEE Trans.\ Acoust.\, Speech, Signal Processing}, vol.\ ASSP-33,
pp.\ 387-392, Apr.\ 1985.

\bibitem{Cover-Thomas-book}
T.M.~Cover and J.A.~Thomas, {\em Elements of Information Theory}. New York: Wiley, 1991.

\bibitem{Wyner}
A.D.~Wyner, ``Recent results in the Shannon theory,''
\IeeeTIT, vol.\ 20, pp.\ 2-10, Jan.\ 1974.

\bibitem{Tse}
D.N.~Tse and S.V.~Hanly, ``Multiaccess fading channels. I. Polymatroid structure, optimal resource allocation and throughput capacities,'' \IeeeTIT, vol.\ 44, pp.\ 2796-2815, Nov.\ 1998.

\bibitem{Picinbono}
B.~Picinbono and P.~Chevalier, ``Widely linear estimation with complex
data,''  \IeeeTSP, vol.\ 43, pp.\ 2030-2033, Aug.\ 1995.

\bibitem{Cioffi}
P.~Li, D.~Paul, R.~Narasimhan, and J.~Cioffi, ``On the distribution of
SINR for the MMSE MIMO receiver and performance analysis,'' \IeeeTIT,
vol.\ 52, pp.\ 271-286, Jan.\ 2006.

\bibitem{Bjorck}
A.~Bj\"{o}rck, {\em Numerical Methods for Least Squares Problems}.
SIAM: Philadelphia, PA, 1996.

\bibitem{Franks}
L.~Franks, ``Polyperiodic linear filtering,'' in {\em Cyclostationarity
in Communications and Signal Processing}, W. A. Gardner, Ed.,
pp.\ 240--266, IEEE Press, Piscataway, NJ, USA, 1994.

\bibitem{Cord}
C.~Corduneanu, {\em Almost Periodic Functions}. Chelsea Publishing Company New York, 1989.

\bibitem{Kay}
S.M.~Kay, {\em Fundamentals of Statistical Signal Processing: Estimation
Theory}. Englewood Cliffs, NJ: Prentice-Hall, 1993.

\bibitem{Zeng}
Y.~Zeng, R.~Zhang, and T.J.~Lim, ``Wireless communications with
unmanned aerial vehicles: Opportunities and challenges,'' {\em IEEE
Commun.\ Mag.\/}, vol.\ 54, pp.\ 36-42, May 2016.

\bibitem{Dand}
A.V.~Dandawat\'{e} and G.B.~Giannakis, ``Asymptotic theory of mixed
time averages and $k$th-order cyclic moment and cumulant statistics,''
\IeeeTIT, vol.\ 41, pp.\ 216--232, Jan.\ 1995.

\bibitem{Spooner}
W.A.~Gardner and C.M.~Spooner, ``The cumulant theory of
cyclostationary time-series -- Part I: Foundation,''
\IeeeTSP,  vol.\ 42, pp.\ 3387--3408, Dec.\ 1994.

\bibitem{Mou}
E.~Moulines, P.~Duhamel, J.-F.~Cardoso, and S.~Mayrargue,
``Subspace methods for the blind identification of multichannel
FIR filters,'' \IeeeTSP, vol.\ 43, pp.\  516--525, Feb.\ 1995.

\bibitem{Dar}
D.~Darsena, G.~Gelli, L.~Paura, and F.~Verde, ``Widely linear equalization and blind channel identification for interference-contaminated multicarrier systems,"
\IeeeTSP,  vol.\ 53, pp.\ 1163--1177, Mar.\ 2005.

\bibitem{Haas}
E.~Haas, ``Aeronautical channel modeling,'' \IeeeTVT,
vol.\ 51, pp.\ 254--264, Mar.\ 2002.

\end{thebibliography}
\end{document}